\pgfplotsset{compat=1.16}
\tikzset{degil/.style={
            decoration={markings,
            mark= at position 0.5 with {
                  \node[transform shape] (tempnode) {$\backslash$};
                  }
              },
              postaction={decorate}
}
}
\theoremstyle{plain}
\newtheorem{theorem}{Theorem}[section]
\newtheorem{lemma}[theorem]{Lemma}
\theoremstyle{remark} 
\newtheorem{assumption}{Assumption}
\newtheorem{remark}{Remark}
\newtheorem{definition}{Definition}
\def\rL{\mathcal{L}}
\def\rF{\mathbb{F}}
\def\R{\mathbb{R}}
\def\conv{\mathop{\rm conv}}
\def\Car{\mathop{\rm Car}}
\def\B{{\mathcal B}}
\def\H{{\mathcal H}}
\def\I{{\mathcal I}}
\def\G{{\mathcal G}}
\def\C{{\mathcal C}}
\def\P{{\mathcal P}}
\def\Y{{\mathcal Y}}
\def\M{{\mathcal M}}
\def\cE{\mathbb{E}}
\def\U{{\mathcal U}}
\newcommand{\cl}{\mathop{\rm cl}}
\newcommand{\cspan}{\mathop{\rm span}}
\def\abss{{\mathsf{abs}}}
\def\reff{{\mathsf{ref}}}
\def\ess{{\mathrm{ess}}}
\def\sUnif{{\mathsf{Unif}}}
\def\sX{{\mathsf X}}
\def\sY{{\mathsf Y}}
\def\sA{{\mathsf A}}
\def\sH{{\mathsf H}}
\def\sY{{\mathsf Y}}
\def\sB{{\mathsf B}}
\def\sM{{\mathsf M}}
\def\sE{{\mathsf E}}
\def\sR{{\mathsf R}}
\def\sU{{\mathsf U}}
\def\sZ{{\mathsf Z}}
\DeclareMathOperator*{\esssup}{ess\,sup}
\newcommand{\sy}[1]{{\color{black} #1}}
\begin{document}

\begin{frontmatter}
\title{Kernel Mean Embedding Topology: Weak and Strong Forms for Stochastic Kernels and Implications for Model Learning}
\runtitle{Kernel Mean Embedding Topologies for Stochastic Kernels}

\begin{aug}
\author[A]{\fnms{Naci}~\snm{Saldi}\ead[label=e1]{naci.saldi@bilkent.edu.tr}}
\author[B]{\fnms{and Serdar}~\snm{Y\"{u}ksel}\ead[label=e2]{yuksel@queensu.ca}}
\address[A]{Department of Mathematics,
Bilkent University\printead[presep={,\ }]{e1}}
\address[B]{Department of Mathematics and Statistics,
Queen's University\printead[presep={,\ }]{e2}}
\end{aug}

\begin{abstract}
We introduce a novel topology, called Kernel Mean Embedding Topology, for stochastic kernels, in a weak and strong form. This topology, defined on the spaces of Bochner integrable functions from a signal space to a space of probability measures endowed with a Hilbert space structure, allows for a versatile formulation. This construction allows one to obtain both a strong and weak formulation. (i) For its weak formulation, we highlight the utility on relaxed policy spaces, and investigate connections with the Young narrow topology and Borkar (or \( w^* \))-topology, and establish equivalence properties. We report that, while both the \( w^* \)-topology and kernel mean embedding topology are relatively compact, they are not closed. Conversely, while the Young narrow topology is closed, it lacks relative compactness. (ii) We show that the strong form provides an appropriate formulation for placing topologies on spaces of models characterized by stochastic kernels with explicit robustness and learning theoretic implications on optimal stochastic control under discounted or average cost criteria. (iii) We thus show that this topology possesses several properties making it ideal to study optimality and approximations (under the weak formulation) and robustness (under the strong formulation) for many applications.
\end{abstract}

\begin{keyword}[class=MSC]
\kwd[Primary ]
60J99, 93E20, 93E35 
\kwd[; secondary ] 90C15
\end{keyword}

\begin{keyword}
Young-narrow topology, weak$^*$-topology, kernel mean embeddings, stochastic kernels
\end{keyword}

\end{frontmatter}

\section{Weak and Strong Topologies on Stochastic Kernels}\label{sec1}

The set of stochastic kernel from one Borel space $\sY$ into another one $\sU$, is the collection of measurable mappings from $\sY$ into the set of probability measures $\P(\sU)$ on $\sU$:
\begin{align*}
\Gamma = \bigg\{\gamma: \text{$\gamma$ is a measurable function from $\sY$ to $\P(\sU)$} \bigg\}, 
\end{align*}
where $\P(\sU)$ is endowed with the Borel $\sigma$-algebra generated by the weak topology. 

In many areas of mathematics, engineering, and natural and applied sciences, such stochastic kernels play a crucial role. As essential components in probabilistic models, they are fundamental in areas such as Markov processes, controlled Markov processes, decision sciences, information theory, and learning theory. Understanding the topological properties of stochastic kernels is vital for analyzing their compactness, convergence, and approximation and learning-- key aspects that influence the study of stochastic dynamical systems and decision-making processes.

Our primary area of application will be control theory: In both deterministic and stochastic control theory, such stochastic kernels play a crucial role, representing the dynamics of a stochastic system (\cite[Lemma 1.2]{gihman2012controlled}, \cite[Lemma 3.1]{BorkarRealization}, or  \cite[Lemma F]{aumann1961mixed}) or the sets of relaxed/randomized control policies (\cite{young1937generalized},\cite{fleming1976generalized,kushner2001numerical,kushner2014partial},\cite{borkar1989topology}). 

Several topologies have been studied in the literature on $\Gamma$. Some topologies focus on pointwise, or almost everywhere (with respect to a given measure on $\sY$) convergence properties or uniform convergence properties. Some topologies, however, do not require pointwise convergence, which may be too demanding (especially in an empirical learning theoretic context). A review will be provided later in the paper.


\subsection{Summary of Main Results}

In this paper, we propose a mathematical framework for kernel mean embedding topologies on stochastic kernels viewed as Bochner integrable functions from a base space to a Hilbert space. We introduce such a topology in a strong and weak form. 

We establish several equivalence and relationship properties with classical and recently introduced topologies, as summarized in Figure~\ref{Hier}. Finally, we explore the implications of these topologies for model learning, empirical learning, and robustness in Markov decision theory context. Using Figure~\ref{Hier} as a summary, in the following we review the proposed topologies alongside the others discussed.

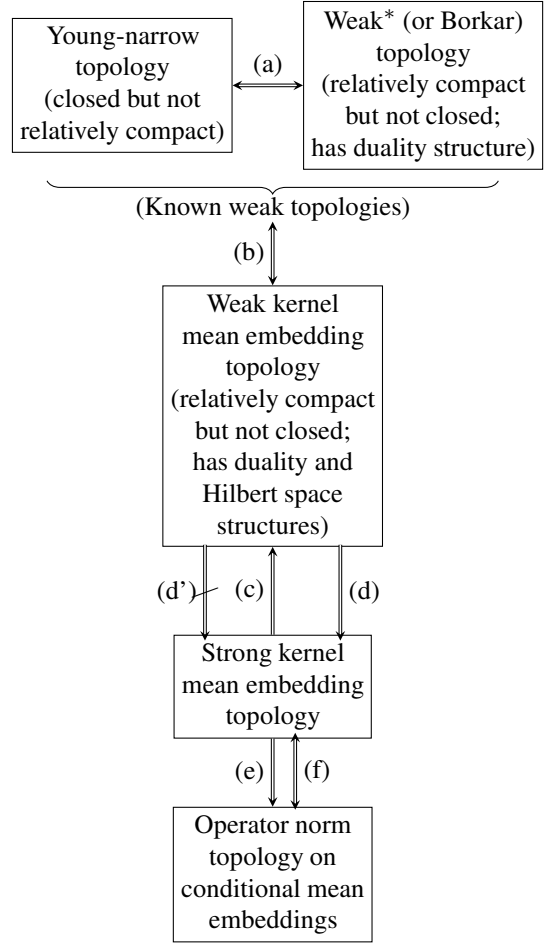
\begin{figure}[h]
\centering
\begin{tikzpicture}[
    >=stealth,
    node distance=1.5cm and 4cm,
    every node/.style={align=center, font=\small},
    ovalbox/.style={draw, shape=rectangle, minimum height=1cm, minimum width=2.5cm, text centered},
    arrow/.style={-latex}
]

\node[ovalbox] (young) at (-2, 8) {Young-narrow \\ topology \\ (closed but not \\ relatively compact)};
\node[ovalbox] (borkar) at (0, 3.6) {Weak kernel \\ mean embedding \\ topology \\ (relatively compact \\ but not closed; \\ has duality and \\ Hilbert space \\ structures)};
\node (borkar2) at (0.9, 2) { };
\node (borkar3) at (0, 6.3) { };
\node (borkar-1) at (-0.9, 2) { };
\node[ovalbox] (kernel) at (2, 8) {Weak$^*$ (or Borkar) \\ topology \\ (relatively compact \\ but not closed; \\ has duality structure)};

\node[ovalbox] (strong_kernel) at (0, 0) {Strong kernel \\ mean embedding \\ topology};
\node (strong_kernel2) at (0.9, 0.5) { };
\node (strong_kernel-1) at (-0.9, 0.5) { };

\node (strong_kernel3) at (0.3, -0.5) { };

\node[ovalbox] (operator) at (0, -2.5) {Operator norm \\ topology on \\ conditional mean \\ embeddings};
\node (operator2) at (0.3, -1.75) { };
\draw [decorate,decoration={brace,amplitude=5pt,mirror,raise=4ex}]
  (-3,7.3) -- (3,7.3) node[midway,yshift=-3em]{(Known weak topologies)};
\draw[<->, double] (young) -- node[midway, above] {(a)} (kernel);
\draw[<->, double] (borkar3) -- node[midway, left] {(b)} (borkar);
\draw[<-, double] (borkar) -- node[midway, left] {(c)} (strong_kernel);
\draw[->, double] (borkar2) -- node[midway, right] {(d)} (strong_kernel2);
\draw[->, double,degil] (borkar-1) -- node[midway, left] {(d')} (strong_kernel-1);
\draw[->, double] (strong_kernel) -- node[midway, left] {(e)} (operator);
\draw[<->, double] (strong_kernel3) -- node[midway, right] {(f)} (operator2);
\end{tikzpicture}
\caption{Hierarchy of topologies on stochastic kernels and relations with the topologies introduced in our paper: The strong kernel mean embedding topology and weak kernel mean embedding topology: (a) The Young-narrow and Borkar topologies are known in the literature, and their equivalence on the set of stochastic kernels is studied in \cite{yuksel2023borkar}. (b) In this paper, we introduce the weak kernel mean embedding topology and demonstrate its equivalence to the Young-narrow and Borkar topologies (see Theorem~\ref{equivalence}). (c) We also introduce a strong version of the kernel mean embedding topology, which has been used to establish empirical consistency results for approximating the conditional mean embeddings of stochastic kernels \cite{TaBa24,MaPe23}. It clearly implies the weak version of the kernel mean embedding topology. (d') As established in Theorem~\ref{strongStrongerthanWeakThm}, convergence in the weak kernel mean embedding topology does not imply convergence in the strong version. (d) However, as shown in Theorem~\ref{WeakImpliesStrong}, under the assumption of bounded and equicontinuous densities for the probability measures in the stochastic kernels, we can establish the implication from the weak to the strong version of the kernel topology. (e) {\color{black} Using conditional mean embedding framework \cite{SoHuSmFu09}, one can view stochastic kernels as operators between RKHS and $L_2$-space \cite{KlScSu20} with operator norm topology. \cite[Theorem 3.6]{MaPe23} shows the strong kernel mean embedding topology \sy{for the case with $q=2$} dominates this topology. (f) Alternatively, stochastic kernels can be viewed as Hilbert-Schmidt operators between an RKHS and an $L_2$-space, where the Hilbert-Schmidt norm topology, which is stronger than operator norm topology, is equivalent to the strong kernel mean embedding topology \sy{for the case with $q=2$} \cite[Theorem 5.1]{MaPe23}.}
}
\label{Hier}
\end{figure}

\subsubsection{A Weak Formulation}

We introduce our weak formulation in Definition \ref{defnWeakKMET}. Related to this formulation, two commonly studied topologies on $\Gamma$ are the following.

\noindent{\bf The Young Narrow Topology.} Since Young's seminal paper \cite{young1937generalized}, a particularly prominent approach has involved studying the topologies on Young measures defined by stochastic kernels. In this framework, stochastic kernels are identified with probability measures defined on a product space, maintaining a fixed marginal in the input space -- typically taken to be the Lebesgue measure in the context of optimal deterministic control (see \cite{young1937generalized,mcshane1967relaxed},\cite[Section 2.1]{castaing2004young},\cite[p. 254]{warga2014optimal},\cite{mascolo1989relaxation}, \cite[Theorem 2.2]{balder1988generalized}). 

Thus, under the Young narrow topology, one associates with $\Gamma$ the set of  probability measures induced on the product space $\sY \times \sU$ with a fixed marginal $\mu$ on $\sY$. On this product space, several weak topologies and their equivalence properties have been studied in the literature; see e.g, \cite{balder1997consequences}, \cite[Theorem 2.2]{balder1988generalized} and \cite[Theorem 3.32]{florescu2012young}.

The generalization to stochastic control problems to encompass more diverse input measures has become a common practice, especially in applications involving partially observed and decentralized stochastic control. Many of the references have adopted this approach to define control topologies. For example, this methodology has been used in studies on existence and approximation results related to optimal stochastic control \cite{fleming1976generalized,kushner2001numerical,kushner2014partial}, decentralized stochastic control \cite[Theorem 5.2]{YukselWitsenStandardArXiv} and \cite[Section 4]{SaYu22}, piecewise deterministic optimal stochastic control \cite{bauerle2010optimal,bauerle2018optimal}, economics and game theory \cite{milgrom1985distributional,mertens2015repeated,balder1988generalized}, mean-field control policies \cite{bayraktar2022finite}, and optimal quantization \cite[Definition 2.1]{YukselOptimizationofChannels}. \\

\noindent{\bf The $w^*$ or Borkar Topology.} Another notable topology on stochastic kernels is the one introduced by Borkar for randomized controls \cite{borkar1989topology}. This topology, which is further elaborated upon in \cite[Section 2.4]{arapostathis2012ergodic} and builds on the work in \cite{bismut1973theorie}, is formulated as a $w^*$-topology on stochastic kernels. In this framework, policies are treated as mappings from states or measurements to the space of signed measures with bounded variation, denoted as $\M(\sU)$, where probability measures $\P(\sU)$ form a subset. We also refer the reader to \cite{dieudonne1951theoreme,fattorini1994existence} for further references on such a $w^*$-formulation on stochastic kernels, in particular when instead of countably additive signed measures, finitely additive such measures are also considered.

References \cite{borkar1989topology,arapostathis2010uniform,pradhan2022near} offer a comprehensive analysis of various implications of $w^*$-topology in stochastic control theory, such as the continuity of expected costs in control policies \cite{borkar1989topology}, approximation results \cite{pradhan2022near} under different cost criteria, and the continuity of invariant measures of diffusions in control policies \cite{arapostathis2010uniform}.

Under the Borkar topology, one studies the space $\Gamma$ using a $w^*$-topology formulation as a bounded subset of the set of maps from $\sY$ to the space of signed measures with finite variation, which can be viewed as the topological dual of continuous functions that vanish at infinity. Therefore, according to the Banach-Alaoglu theorem \cite[Theorem 5.18]{Fol99}, this leads to a relatively sequentially compact metric space. Specifically, the unit ball of $L_{\infty}(\mu, \M(\sU)) = \rL_{1}(\mu, C_0(\sU))^*$ is compact under the $w^*$-topology, resulting in a relatively compact metric topology on stochastic kernels. It is important to note that the presentations in \cite[Section 3]{borkar1989topology} and \cite[Section 2.4]{arapostathis2012ergodic} differ slightly, although the induced topology remains identical. An equivalent representation of this topology is provided in \cite[Lemma 2.4.1]{arapostathis2012ergodic} (see also \cite[Lemma 3.1]{borkar1989topology}).

While $\sY$ was taken to be $\R^n$ in \cite{borkar1989topology}, Saldi extended this approach to scenarios where $\sY$ is a general standard Borel space equipped with a fixed input (probability) measure. This topology has also been employed by Saldi for optimal decentralized stochastic control in the context of existence analysis \cite{Sal20}. 

Recently, \cite{yuksel2023borkar} obtained equivalence results between the two topologies and showed that the average cost in optimal stochastic control problems changes continuously in policy under these topologies, leading to existence and approximation results. 

For our weak formulation, we will obtain several structural properties and investigate its connections with the Young narrow topology and Borkar (or \( w^* \))-topology and establish equivalence properties with both the Young narrow topology and the \( w^* \)-topology on the set of stochastic kernels. We demonstrate that the new topology we introduce is, in fact, equivalent to both the Young narrow topology and the $w^*$-topology on the set of stochastic kernels. This equivalence will be rigorously defined later in the paper, but the idea is that while these topologies may appear slightly different in their construction, they share fundamental properties when applied to stochastic kernels. Such a result is significant because each of these topologies offers distinct advantages in various contexts. For example, based on the results we present, both the $w^*$-topology and the kernel mean embedding topology are relatively (sequentially) compact, which is a crucial property when proving existence results. However, they are not closed. In contrast, the Young narrow topology is closed but lacks relative (sequential) compactness. Interestingly, to make the $w^*$-topology and the kernel mean embedding topology closed, a tightness condition is required -- 
this ensures uniform concentration of probability measures on compact sets. Similarly, Prokhorov's theorem shows that the same tightness condition is necessary for the Young narrow topology to achieve relative compactness. Thus, in terms of proving existence results via Weierstrass's extreme value theorem, these three topologies are equally advantageous. However, the $w^*$-topology offers a valuable duality structure that can be leveraged for certain analyses. In addition, the kernel mean embedding topology possesses a Hilbert space structure, which can facilitate approximations of stochastic kernels via simulations, as shown in the work \cite{MaPe23}.

Despite the differences in their individual properties, these three topologies -- $w^*$-topology, the weak kernel mean embedding topology, and Young narrow topology -- are topologically equivalent on the set of stochastic kernels. This does not contradict with the above discussion, as completeness is not a topological property.  A formal and rigorous statement of this result will follow after we introduce the necessary background (see Theorem~\ref{equivalence}).

\begin{theorem}\label{equivalence0}
Let $\{\gamma_{\lambda}\}$ be a sequence of stochastic kernels from one Borel space to another (locally compact) Borel space and let $\gamma$ also be a stochastic kernel. Then, the following are equivalent:
\begin{itemize}
\item[(i)] $\gamma_{\lambda} \rightharpoonup^* \gamma$ with respect to kernel mean embedding topology.
\item[(ii)] $\gamma_{\lambda} \rightharpoonup^* \gamma$ with respect to $w^*$-topology.
\item[(ii)] $\gamma_{\lambda} \rightharpoonup^* \gamma$ with respect to Young narrow topology.
\end{itemize}
\end{theorem}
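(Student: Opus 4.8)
The plan is to establish the cycle of implications (ii) $\Leftrightarrow$ (iii) $\Rightarrow$ (i) $\Rightarrow$ (ii), relying on the known equivalence of the $w^*$ and Young-narrow topologies on the set of stochastic kernels (item (a) in Figure~\ref{Hier}, from \cite{yuksel2023borkar}) for the first equivalence. Thus the genuinely new content is the pair of implications linking the weak kernel mean embedding topology to either of the other two, and by transitivity it suffices to prove one implication in each direction against, say, the Young-narrow (equivalently $w^*$) topology.

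For the implication ``$w^*$ (or Young-narrow) $\Rightarrow$ weak kernel mean embedding'': recall that $\gamma_\lambda \rightharpoonup^* \gamma$ in the weak kernel mean embedding sense means, by Definition~\ref{defnWeakKMET}, that for every test function of the appropriate class (a bounded measurable $f$ on $\sY$ paired with a bounded RKHS element on $\sU$, i.e.\ functions of product form $f(y)\,\kappa(u,\cdot)$ integrated against $\mu$), the associated linear functionals converge. First I would reduce the kernel mean embedding pairing to an integral of the form $\int_\sY f(y)\big(\int_\sU g(u)\,\gamma_\lambda(du\mid y)\big)\mu(dy)$ where $g$ ranges over (a dense subset of) the RKHS, and observe that since $\sU$ is locally compact and the RKHS is assumed to have a bounded continuous kernel, each such $g$ is a bounded continuous function on $\sU$ — hence $g \in C_0(\sU)$ after the usual vanishing-at-infinity considerations, or can be approximated in the relevant sense by such. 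Then this is exactly a test functional of the $w^*$-topology (a $\rL_1(\mu, C_0(\sU))$ pairing), so convergence in $w^*$ gives convergence of each kernel mean embedding functional. A density/approximation argument in the RKHS norm, together with the uniform bound $\|\mu_{\gamma_\lambda}\|\le 1$ on the embedded measures, upgrades this from a dense set of test functions to all of them.

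For the reverse implication ``weak kernel mean embedding $\Rightarrow$ Young-narrow'': here I would use that the Young-narrow topology is generated by pairings $\int f(y,u)\,\gamma_\lambda(du\mid y)\,\mu(dy)$ with $f$ in a separating class — e.g.\ bounded Carathéodory integrands, or by standard reductions, finite sums $\sum_i h_i(y) g_i(u)$ with $h_i$ bounded measurable and $g_i \in C_0(\sU)$. Using the characteristic (universal) property of the kernel on $\sU$, the RKHS is dense in $C_0(\sU)$ (uniformly on compacts, with the requisite tightness/boundedness control), so each $g_i$ is approximable by RKHS elements; feeding the products $h_i \otimes (\text{RKHS element})$ into the weak kernel mean embedding convergence and passing to the limit in the approximation recovers convergence of the Young-narrow functionals. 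The main obstacle I anticipate is precisely this interchange of limits: one is approximating $C_0(\sU)$ functions by RKHS functions (in sup norm on compacts) while simultaneously taking $\lambda \to \infty$, and controlling the error term requires either a tightness hypothesis on $\{\gamma_\lambda(\cdot\mid y)\}$ uniformly in $y$ under $\mu$, or a careful use of the boundedness of the kernel to bound the tail contribution uniformly. Handling this cleanly — and pinning down the exact regularity/characteristic assumptions on the kernel that make the RKHS dense in the right function space on the locally compact $\sU$ — is where the real work lies; the rest is bookkeeping with the definitions and an appeal to \cite{yuksel2023borkar} for closing the triangle.
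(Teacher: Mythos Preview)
Your approach is essentially correct but genuinely different from the paper's. The paper does \emph{not} invoke \cite{yuksel2023borkar} to close the triangle; instead it proves all three equivalences simultaneously via a single reduction: each of the three topologies on $\Gamma$ is shown to be characterized by convergence of the ``slice'' measures $\mu\otimes\gamma_\lambda(E\times\cdot)\to\mu\otimes\gamma(E\times\cdot)$ for all $E\in\Y$, in (respectively) the MMD, vague, and weak topologies on measures. Because $\gamma$ is assumed to be a stochastic kernel, all these slice measures have the same total mass $\mu(E)$, so after normalization one appeals to \cite[Lemma~5]{SiBaScMa24} (MMD $=$ vague $=$ weak on probability measures) to finish. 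This is more self-contained and makes transparent exactly where the hypothesis ``$\gamma\in\Gamma$'' is consumed.

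Your route---citing \cite{yuksel2023borkar} for (ii)$\Leftrightarrow$(iii) and proving (i)$\Leftrightarrow$(ii) directly via the inclusion $\H_k\subset C_0(\sU)$ and the density of $\H_k$ in $C_0(\sU)$---also works, and is arguably more elementary once one grants the external input. One clarification: your expressed anxiety about the interchange of limits (``tightness hypothesis\ldots or boundedness of the kernel to bound the tail'') is misplaced. Assumption~\ref{as2}(c) gives density of $\H_k$ in $C_0(\sU)$ in the \emph{sup norm}, not merely uniformly on compacts; together with the uniform total-variation bound $\|\gamma_\lambda(y)-\gamma(y)\|_{TV}\le 2$ (both are probability measures), the error term in the $\varepsilon/3$ argument is controlled uniformly in $\lambda$ with no tightness needed. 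Where tightness \emph{would} be required is if you attempted (i)$\Rightarrow$(iii) directly, since $\H_k$ is not sup-norm dense in $C_b(\sU)$ when $\sU$ is non-compact---but you do not need that implication, since you route through (ii). Your presentation slightly conflates the $C_0$ and $C_b$ test classes at that point; tightening this would remove the apparent obstacle.
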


In Theorem~\ref{equivalence0}, the key assumption is that $\gamma$ is also a stochastic kernel. If we drop this assumption and only require the sequence $\{\gamma_{\lambda}\}$ is convergent in the given topologies, the topologies are no longer equivalent. For example, in the $w^*$-topology or the kernel mean embedding topology, the sequence may converge to sub-stochastic kernel, which is not the case in Young narrow topology. As a result, the equivalence of these topologies breaks down when extended to non-stochastic kernels.

\subsubsection{A Strong Formulation}

We define the strong formulation in Definition \ref{StrongKMET}, as spaces of Bochner integrable functions from a signal space to a space of probability measures endowed with a Hilbert space structure. 

We will show in Theorem \ref{strongStrongerthanWeakThm} that the strong formulation leads to more stringent conditions compared with the weak one. However, in Theorem \ref{WeakImpliesStrong}, we present a partial converse implication and in Section \ref{secStrongKernelRobustness} we will show that the strong form is an ideal formulation for placing topologies on spaces of models with explicit robustness and learning implications on optimal stochastic control under discounted or average cost criteria. In particular, the kernel mean embedding topology has a Hilbert space structure, which is particularly useful for approximating stochastic kernels through simulation data. 

We will demonstrate that the weak formulation is more natural for placing topologies on control policies without compromising optimality classes, whereas the latter strong formulation is ideal for studying dynamical systems or models: Note that for applications concerning the latter, a (typically unknown) model is present a priori and one wishes to learn or identify the model in a sufficiently strong sense to ensure e.g. empirical consistency or robustness. Whereas, for the former, one would like to have a model in which admissible functions or policies are contained in the space of policies defined under the space of kernels and yet this space is suitable for analysis (such as on compactness, continuity and approximability).

{\bf Relation with Conditional Mean Embeddings.} Studying stochastic kernels within the framework of reproducing kernel Hilbert spaces (RKHS) is closely tied to the concept of conditional mean embedding (CME). CME is a kernel-based approach that embeds conditional distributions into an RKHS, providing a flexible and nonparametric way to model such distributions \cite{SoHuSmFu09, SoFuGr13, SoBoSiGoSm10, PaMu20}. The central idea is to represent a stochastic kernel, exactly or approximately, given a fixed input probability measure using covariance and cross-covariance operators, both of which are Hilbert-Schmidt operators. 

Initially, stochastic kernels were assumed to act as operators between the RKHSs of the input and output spaces. However, recent findings \cite{KlScSu20} show that the input RKHS must be replaced by an $L_2$-space because the conditional expectation of an output function belongs to this $L_2$-space, not the input RKHS. This shift leads to an operator norm topology for stochastic kernels. Notably, the strong kernel mean embedding topology dominates the operator norm topology \cite[Theorem 3.6]{MaPe23}. Alternatively, stochastic kernels can be viewed as Hilbert-Schmidt operators between an RKHS and an $L_2$-space. In this context, the Hilbert-Schmidt topology is equivalent to the strong kernel mean embedding topology \cite[Theorem 5.1]{MaPe23}. While exact representations of stochastic kernels via covariance and cross-covariance operators are generally infeasible \cite{KlScSu20}, certain conditions enable such representations using pseudo-inverses and adjoints of the operators \cite[Theorem 5.8]{KlSpSu21}. When these conditions are not met, approximate representations can be constructed using Tikhonov-Phillips regularization techniques \cite[Section 7]{MaPe23}. 

These approximations rely on the strong kernel mean embedding topology for convergence guarantees, facilitate data-driven modeling of stochastic kernels and enable robust model-based learning.

Data-driven approximation problems have been examined in works such as \cite{MoMuSu24, ZhMeMoGr22}, which provide convergence guarantees under specific spectral assumptions on certain operators. However, these spectral conditions are typically challenging to verify using probabilistic concepts. In practice, more interpretable conditions, such as continuity (or Lipschitz continuity) of the stochastic kernel with respect to familiar topologies -- e.g., weak convergence, setwise convergence, Wasserstein norm, or total variation -- are more explicit. Additionally, most analyses in this area rely on the strong kernel mean embedding topology.


In this paper, we lay the groundwork for the kernel mean embedding topology on the set of stochastic kernels. Building on this foundation, our future goal is to develop data-driven approximations of stochastic kernels with convergence guarantees within the model-based learning framework. In Section~\ref{applications}, we present a preliminary analysis of how kernel mean embedding topologies can be applied to the robustness problem, which is closely connected to the data-driven approach. 

To ensure the paper is as self-contained as possible, in the next section, we review duality results for vector-valued functions, which are essential for constructing various topologies for stochastic kernels.

\subsection{Notation and Conventions}

For a metric space $\sE$, the Borel $\sigma$-algebra is denoted by $\mathcal{E}$. We let $C_b(\sE)$, $C_0(\sE)$, and $C_c(\sE)$ denote the set of all continuous and bounded functions on $\sE$, the set of all continuous real functions on $\sE$ vanishing at infinity, and the set of all continuous real functions on $\sE$ with compact support, respectively. We always assume that $\sE$ is locally compact when working with $C_0(\sE)$ and $C_c(\sE)$ to avoid any trivial cases. {\color{black}
We also note that when $\sE$ is compact, the spaces $C_b(\sE)$, $C_0(\sE)$, and $C_c(\sE)$ coincide. In this case, we simply write $C(\sE)$ to denote any of them.} Let $\M(\sE)$ and $\P(\sE)$ denote the set of all finite signed measures and probability measures on $\sE$, respectively. A sequence $\{\mu_n\}$ of finite signed measures on $\sE$ is said to converge with respect to total variation distance (see \cite{HeLa03}) to a finite signed measure $\mu$ if $ \lim_{n\rightarrow\infty} 2\sup_{D \in \mathcal{E}} |\mu_n(D) - \mu(D)|=0$. A sequence $\{\mu_n\}$ of finite signed measures on $\sE$ is said to converge weakly (see \cite{HeLa03}) to a finite signed measure $\mu$ if $\int_{\sE} g d\mu_n \rightarrow \int_{\sE} g d\mu$ for all $g \in C_b(\sE)$. A sequence $\{\mu_n\}$ of finite signed measures on $\sE$ is said to converge vaguely (see \cite{HeLa03}) to a finite signed measure $\mu$ if $\int_{\sE} g d\mu_n \rightarrow \int_{\sE} g d\mu$ for all $g \in C_0(\sE)$. Unless otherwise specified, the term `measurable' will refer to Borel measurability.

\section{Bochner Spaces and Their Duals}\label{sec0}

Let $(\sY,\Y,\mu)$ be a probability space, where $\sY$ is a Borel space. Let $(\sB,\|\cdot\|_{\sB})$ be a separable Banach space and let $\sB^*$ denote the topological dual of $\sB$ with the induced norm $\|\cdot\|_{\sB^*}$ which turns $\sB^*$ into a Banach space. The duality pairing between any $b \in \sB$ and any $b^* \in \sB^*$ is denoted by $\langle b^*,b \rangle$. Hence, for any $b^* \in \sB^*$, the mapping $\sB \ni b \mapsto \langle b^*,b \rangle \in \R$ is linear and continuous.

For any $1 \leq p < \infty$, $p$-Bochner integrable functions are defined as $L_p$-integrable maps from $(\sY,\Y,\mu)$ to $\sB$. 

Similar to the definition of measurable functions, we start with definition of simple functions. A function $f: \sY \rightarrow \sB$ is said to be simple if there exists $b_1,\ldots,b_n \in \sB$ and $E_1,\ldots,E_n \in \Y$ such that
$
f(y) = \sum_{i=1}^n b_i \, 1_{E_i}(y). \nonumber
$
Define the Bochner integral of $f$ with respect to $\mu$ as
$
\int_{\sY} f(y) \, \mu(dy) \triangleq \sum_{i=1}^n b_i \, \mu(E_i). \nonumber
$
A function $f:\sY \rightarrow \sB$ is  said to be strongly measurable, if there exists a sequence $\{f_n\}$ of simple functions with $\lim_{n\rightarrow\infty} \|f_n(x)-f(x)\|_{\sB} = 0$ $\mu$-almost-everywhere. The strongly measurable function $f$ is $p$-Bochner-integrable \cite{DiUh77} if $\int_{\sY} \|f(y)\|_{\sB}^p \, \mu(dy) < \infty$. Let $L_p\bigl(\mu,\sB\bigr)$ denote the set of all $p$-Bochner-integrable functions from $(\sY,\Y,\mu)$ to $\sB$ endowed with the norm
\begin{align}
\|f\|_p \triangleq \left( \int_{\sY} \|f(y)\|_{\sB}^p \, \mu(dy) \right)^{1/p}. \nonumber
\end{align}
With this norm, $L_p\bigl(\mu,\sB\bigr)$ is a separable Banach space as $\sB$ is assumed to be separable.  

We now identify the topological dual of $L_p\bigl(\mu,\sB\bigr)$ which is denoted by $L_p\bigl(\mu,\sB\bigr)^*$. We start with the definition of weakly$^*$-measurable functions from $\sY$ to $\sB^*$. A function $\gamma: \sY \rightarrow \sB^*$ is called weakly$^*$-measurable \cite[p. 18]{CeMe97} if the mapping $\sY \ni y \mapsto \langle \gamma(y), b \rangle \in \R$ is $\Y/\B(\R)$-measurable for all $b \in \sB$. Let $\rL\bigl(\mu,\sB^*\bigr)$ denote the set of all weakly$^*$-measurable functions. Then, we define the following subsets
\begin{align}
\rL_q\bigl(\mu,\sB^*\bigr) &\triangleq \biggl\{ \gamma \in \rL\bigl(\mu,\sB^*\bigr): \|\gamma\|_{q} \triangleq \left( \int_{\sY} \|\gamma(y)\|_{\sB^*}^q \, \mu(dy) \right)^{1/q} < \infty \biggr\}, \,\, 1 \leq q < \infty, \label{extra-eq1} \\
\rL_{\infty}\bigl(\mu,\sB^*\bigr) &\triangleq \biggl\{ \gamma \in \rL\bigl(\mu,\sB^*\bigr): \|\gamma\|_{\infty} \triangleq\ess \sup_{y \in \sY} \|\gamma(y)\|_{\sB^*} < \infty \biggr\}, \label{eq1}
\end{align}
where $\ess \sup$ is taken with respect to the measure $\mu$. Then, we have the following theorem.

\begin{theorem}\cite[Theorem 1.5.5]{CeMe97}\label{mainduality}
Let $q$ be the conjugate of $p$; i.e., $1/p + 1/q =1$. 
For any $\gamma \in \rL_q\bigl(\mu,\sB^*\bigr)$ and $f \in L_p\bigl(\mu,\sB\bigr)$, let
\begin{align}
T_{\gamma}(f) \triangleq \int_{\sY} \langle \gamma(y), f(y) \rangle \, \mu(dy). \nonumber
\end{align}
Then the map $\rL_q\bigl(\mu,\sB^*\bigr) \ni \gamma \mapsto T_{\gamma} \in L_p\bigl(\mu,\sB\bigr)^*$ is an isometric isomorphism from $\rL_q\bigl(\mu,\sB^*\bigr)$ to $L_p\bigl(\mu,\sB\bigr)^*$. Hence, we can identify $L_p\bigl(\mu,\sB\bigr)^*$ with $\rL_q\bigl(\mu,\sB^*\bigr)$. For any $f \in L_p\bigl(\mu,\sB\bigr)$ and $\gamma \in \rL_q\bigl(\mu,\sB^*\bigr)$, the duality pairing is given by
\begin{align}
\langle\langle \gamma , f \rangle\rangle \triangleq \int_{\sY} \langle \gamma(y), f(y) \rangle \, \mu(dy).\nonumber
\end{align}
\end{theorem}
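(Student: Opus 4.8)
The statement to be proved is Theorem~\ref{mainduality}, which is attributed to \cite[Theorem 1.5.5]{CeMe97}; accordingly, the task is to reconstruct a proof of the duality $L_p(\mu,\sB)^* \cong \rL_q(\mu,\sB^*)$ via the pairing $T_\gamma$.

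\textbf{Overall strategy.} The plan is to verify in order: (1) $T_\gamma$ is well-defined for $\gamma \in \rL_q(\mu,\sB^*)$ and $f \in L_p(\mu,\sB)$, i.e.\ the integrand $y \mapsto \langle \gamma(y), f(y)\rangle$ is measurable and integrable; (2) the map $\gamma \mapsto T_\gamma$ is linear and norm-nonincreasing, in fact isometric, onto its image; (3) every bounded linear functional on $L_p(\mu,\sB)$ arises as some $T_\gamma$, which is the surjectivity half and the main substance. For step (1), measurability of $y \mapsto \langle \gamma(y), f(y)\rangle$ follows by approximating $f$ by simple functions $f_n$: for simple $f$ the pairing is a finite sum of weakly$^*$-measurable functions times indicators, hence measurable, and $\langle \gamma(y), f_n(y)\rangle \to \langle \gamma(y), f(y)\rangle$ pointwise a.e.; integrability and the bound $|T_\gamma(f)| \le \|\gamma\|_q \|f\|_p$ come from the pointwise inequality $|\langle \gamma(y), f(y)\rangle| \le \|\gamma(y)\|_{\sB^*}\|f(y)\|_{\sB}$ followed by H\"older's inequality on $\sY$.

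\textbf{Isometry.} To show $\|T_\gamma\|_{L_p(\mu,\sB)^*} = \|\gamma\|_q$, the inequality $\le$ is step (1). For $\ge$, I would exploit that by definition of the dual norm $\|\gamma(y)\|_{\sB^*}$, for each $y$ and each $\eps>0$ one can choose $b \in \sB$ with $\|b\|_{\sB}\le 1$ and $\langle \gamma(y), b\rangle \ge \|\gamma(y)\|_{\sB^*} - \eps$; the issue is to do this measurably in $y$. Since $\sB$ is separable, one fixes a countable dense set $\{b_k\}$ in the unit ball of $\sB$, notes $\|\gamma(y)\|_{\sB^*} = \sup_k \langle \gamma(y), b_k\rangle$ (a countable supremum of measurable functions, hence measurable), and then builds a measurable selection $y \mapsto b_{k(y)}$ achieving the supremum up to $\eps$ by partitioning $\sY$ according to which $b_k$ works. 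One then plugs $f(y) = \|\gamma(y)\|_{\sB^*}^{q-1}\, b_{k(y)}$ (suitably normalized, and truncated to stay in $L_p$ when $q=\infty$ or when integrability is delicate) into $T_\gamma$ to recover $\|\gamma\|_q$ up to $\eps$; letting $\eps \to 0$ gives the reverse inequality. Injectivity of $\gamma \mapsto T_\gamma$ is then immediate from the isometry.

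\textbf{Surjectivity — the main obstacle.} Given $F \in L_p(\mu,\sB)^*$, I must produce $\gamma \in \rL_q(\mu,\sB^*)$ with $F = T_\gamma$. The standard route is a vector-valued Radon--Nikodym argument: for each $b \in \sB$, the set function $E \mapsto F(b\, 1_E)$ is a finite signed (or complex) measure on $(\sY,\Y)$ that is absolutely continuous with respect to $\mu$ (since $\mu(E)=0$ forces $b\,1_E = 0$ in $L_p$), so by Radon--Nikodym there is $g_b \in L_1(\mu)$ with $F(b\,1_E) = \int_E g_b\, d\mu$; linearity in $b$ and a density/continuity argument let one choose, for a fixed countable dense $\sB_0 \subset \sB$, a common null set off which $b \mapsto g_b(y)$ extends to a bounded linear functional $\gamma(y) \in \sB^*$. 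Weak$^*$-measurability of $\gamma$ holds by construction. The delicate points, which I expect to be where the real work lies, are (a) showing $\gamma(y) \in \sB^*$ with a locally integrable norm — i.e.\ controlling $\|\gamma(y)\|_{\sB^*}$ using boundedness of $F$, typically first on sets where the norm is bounded and then via an exhaustion argument; (b) upgrading from $\gamma \in \rL(\mu,\sB^*)$ with finite norm on each piece to $\gamma \in \rL_q(\mu,\sB^*)$ globally, which uses the isometry established above together with $\|F\| < \infty$; and (c) verifying $F = T_\gamma$ on all of $L_p(\mu,\sB)$ by checking equality first on simple functions (where it holds by construction) and then extending by density of simple functions in $L_p(\mu,\sB)$ and continuity of both sides. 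Throughout, separability of $\sB$ is what makes the measurable-selection and common-null-set manipulations legitimate; I would flag at the outset that this hypothesis is used essentially and cannot be dropped.
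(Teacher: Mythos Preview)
The paper does not prove this theorem: it is stated with the citation \cite[Theorem 1.5.5]{CeMe97} and used as a black box, so there is no ``paper's own proof'' to compare against. Your outline is a faithful reconstruction of the standard argument found in that reference (and in Diestel--Uhl): well-definedness via simple-function approximation and H\"older, isometry via a measurable near-norming selection exploiting separability of $\sB$, and surjectivity via a scalar Radon--Nikodym argument applied to $E \mapsto F(b\,1_E)$ for $b$ ranging over a countable dense set, patched together on a common conull set. The main places where care is genuinely needed are exactly the ones you flag: the measurable selection for the isometry lower bound, and the exhaustion/uniform-integrability step that upgrades the locally defined $\gamma$ to a global element of $\rL_q(\mu,\sB^*)$ with $\|\gamma\|_q \le \|F\|$. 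Nothing in your sketch is wrong, but since the paper itself defers to the literature, there is nothing further to compare.
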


If $\sB^*$ is reflexive, then  $L_p\bigl(\mu,\sB\bigr)^*$ can be identified with set of $q$-Bochner integrable functions $L_q\bigl(\mu,\sB^*\bigr)$ \cite[Theorem 4.2.26]{PaWi18}. Indeed, in this case, by \cite[Corollary 4.2.5]{PaWi18}, we have 
$$
L_q\bigl(\mu,\sB^*\bigr) = \rL_q\bigl(\mu,\sB^*\bigr);
$$
that is weakly$^*$ measurability is equivalent to strong measurability\footnote{\cite[Corollary 4.2.5]{PaWi18} states that weak measurability is equivalent to strong measurability if $B^*$ is separable. But if $\sB^*$ is reflexive, then $\sB$ is also reflexive. Hence $(\sB^*)^* = \sB$,  and so, weak$^*$ measurability is equivalent to weak measurability. Since $\sB$ is assumed to be separable, $\sB^*$ is also separable. Therefore, weak$^*$-measurability is equivalent to strong measurability.}.

\begin{remark}
Under a more standard weak topology on stochastic kernels, called the $w^*$-topology to be reviewed in the next section, $\sB^*$ is to be the set of finite signed measures over some locally compact Borel space with total variation norm, which is not reflexive. Hence, in this case, we need to consider weakly$^*$-measurable functions. In the second topology, called the weak kernel mean embedding topology, $\sB^*$ is to be the separable reproducing kernel Hilbert space, which is both reflexive and separable. Hence, in the latter case, we do not need to consider weakly$^*$-measurable functions. 
\end{remark}

By Theorem~\ref{mainduality}, we equip $\rL_q\bigl(\mu,\sB^*\bigr)$\footnote{We note that if $\sB^*$ is reflexive, then we replace $\rL_q\bigl(\mu,\sB^*\bigr)$ with $L_q\bigl(\mu,\sB^*\bigr)$ as they are equivalent in this case.} with weak$^*$-topology induced by $L_p\bigl(\mu,\sB\bigr)$; that is, it is the smallest topology on $\rL_q\bigl(\mu,\sB^*\bigr)$ for which the mapping
\begin{align}
\rL_q\bigl(\mu,\sB^*\bigr) \ni \gamma \mapsto \int_{\sY} \langle \gamma(y), f(y) \rangle \, \mu(dy) \in \R \nonumber
\end{align}
is continuous for all $f \in L_p\bigl(\mu,\sB\bigr)$. We write $\gamma_{\lambda} \rightharpoonup^* \gamma$, if $\gamma_{\lambda}$ converges to $\gamma$ in $\rL_q\bigl(\mu,\sB^*\bigr)$ with respect to weak$^*$-topology.

Suppose $G$ is a subset of $\sB^*$ and define
\begin{align}
\rL_q\bigl(\mu,G\bigr) \triangleq \biggl\{ \gamma \in \rL_q\bigl(\mu,\sB^*\bigr): \gamma(y) \in G \text{ } \mu-\text{almost-everywhere} \biggr\}. \nonumber
\end{align}
If $G$ is a closed and bounded subset, then $\rL_q\bigl(\mu,G\bigr)$ is also closed and bounded in $\rL_q\bigl(\mu,\sB^*\bigr)$, and so, by Banach-Alaoglu Theorem \cite[Theorem 5.18]{Fol99}, it is compact with respect to weak$^*$-topology. Since $L_p\bigl(\mu,\sB\bigr)$ is separable, by \cite[Lemma 1.3.2]{HeLa03}, $\rL_q\bigl(\mu,G\bigr)$ is metrizable, and so, is also sequentially compact.

\begin{remark}\label{equivalence-weakstar}
If \(G\) is closed and bounded, then \(\rL_q\bigl(\mu, G\bigr) = \rL_{q'}\bigl(\mu, G\bigr)\) for any \(q, q' \in [1, \infty]\). Hence, the choice of \(q\) is not significant. Moreover, the weak$^*$ topology on \(\rL_q\bigl(\mu, G\bigr)\) can be induced using any \(L_p\bigl(\mu, \sB\bigr)\) for \(1 \leq p < \infty\). In fact, for any \(1 \leq p < \infty\), there exists a common dense subset \(\mathcal{D}\) within each \(L_p\bigl(\mu, \sB\bigr)\). Therefore, if  
\[
\int_{\sY} \langle \gamma_{\lambda}(y), f(y) \rangle \, \mu(dy) \rightarrow \int_{\sY} \langle \gamma(y), f(y) \rangle \, \mu(dy)
\]  
for any \(f \in \mathcal{D}\), where \(\gamma_{\lambda}, \gamma \in \rL_q\bigl(\mu, G\bigr)\), then  
\[
\int_{\sY} \langle \gamma_{\lambda}(y), f(y) \rangle \, \mu(dy) \rightarrow \int_{\sY} \langle \gamma(y), f(y) \rangle \, \mu(dy)
\]  
for any \(f \in L_p\bigl(\mu, \sB\bigr)\), for all \(1 \leq p < \infty\). This conclusion follows from the boundedness of \(G\) (without boundedness it does not hold). Consequently, except for \(p = \infty\), where the other conjugate pair is \(q = 1\), the weak$^*$ topologies on \(\rL_q\bigl(\mu, G\bigr)\) induced by \(L_p\bigl(\mu, \sB\bigr)\) for all \(1 \leq p < \infty\) are equivalent. Hence, it does not matter which \((p, q)\) pair is chosen as long as \(1 \leq p < \infty\). Hence, for the rest of the paper, we introduce weak$^*$-topologies without specifying a particular conjugate pair \((p, q)\).
\end{remark}

\section{Weak Kernel Mean Embedding Topology on Stochastic Kernels}\label{kernel-mean}

In this section, we introduce a novel topology on the set of stochastic kernels, which we refer to as the weak kernel mean embedding topology. After we introduce the topology and discuss several properties, we will then review two relatively well studied related topologies and establish several equivalence properties.

\subsection{Weak Kernel Mean Embedding Topology on Stochastic Kernels}\label{kernel-mean}

In this section, we introduce the weak kernel mean embedding topology. This topology is intrinsically connected to the theory of reproducing kernel Hilbert spaces (RKHS). Given this connection, a basic understanding of RKHS is essential for grasping the underlying principles of the weak kernel mean embedding topology. For readers unfamiliar with the basics of RKHS, we recommend the comprehensive treatment provided in \cite{PaRa16}, which serves as an excellent introduction to the foundational concepts in RKHS; further recent review studies include \cite{ghojogh2021reproducing}.

{\color{black}
Let 
$$k:\sU \times \sU \rightarrow \R$$ be a positive semi-definite kernel that induces the reproducing kernel Hilbert space $\H_k$. Indeed, let us define the feature function $\Phi$ as 
$$
\Phi(u) \triangleq k(\cdot,u), \,\, \forall u \in \sU.
$$
Therefore, we have 
$$
\H_k = \cl \cspan\{\Phi(u): u \in \sU\},
$$
where the closure is taken with respect to the topology induced by the inner product
$$
\left \langle \sum_{i=1}^n \alpha_i \, \Phi(u_i), \sum_{j=1}^m \gamma_j \, \Phi(u_j) \right \rangle_{\H_k} \triangleq \sum_{i=1,j=1}^{n,m} \alpha_i \, \gamma_j \, k(u_i,u_j).
$$
This inner product has the reproducing property, that is, for any $f \in \H_k$, we have 
$$
f(u) = \langle f, \Phi(u) \rangle_{\H_k}.
$$
In particular, this implies 
$$
k(u_1,u_2) = \langle \Phi(u_1),\Phi(u_2) \rangle_{\H_k},
$$
which suggests that the kernel \( k \) can be interpreted as an inner product between two feature mappings in \( \mathcal{H}_k \).
}

We suppose that the following conditions are true for RKHS $\H_k$ and its kernel $k$. These assumptions are satisfied by well-known kernels like Gaussian, Laplacian, Matern, etc. (see \cite[Lemma 8]{SiBaScMa24}, \cite[Proposition 5.6]{CaDeToUm10}, \cite[Section 3.1]{BhKeGe11}). 

\begin{tcolorbox} 
[colback=white!100]
\begin{assumption}\label{as2}
\begin{itemize}
\item[ ]
\item[(a)] $k$ is bounded and continuous. 
\item[(b)] $k(\cdot,u) \in C_0(\sU)$ for all $u \in \sU$. 
\item[(c)] $\H_k$ is dense in $C_0(\sU)$. 
\end{itemize}
\end{assumption}
\end{tcolorbox}
Under the assumptions above, any finite signed measure $\nu \in \M(\sU)$ can be embedded into $\H_k$ as follows:
\begin{align}\label{defnIinject}
I: \M(\sU) \ni \nu \mapsto I_{\nu} \triangleq \int_{\sU} k(\cdot,u) \, \nu(du) \in \H_k,
\end{align}
where the integral on the right is in Bochner sense. In the literature, this concept is known as the \emph{kernel mean embedding of distributions}. For a detailed introduction to this topic, we refer the reader to the monograph \cite{MuFuSrSc17} and the paper \cite{SrBhGrFuScLa10}.

Kernel mean embedding is a powerful technique from machine learning and probability theory that provides a way to represent probability distributions in a high-dimensional feature space, which is associated with a reproducing kernel Hilbert space. Instead of working directly with the probability distribution, the method embeds the distribution as a single point in the RKHS by mapping it through a feature map $ \sU \ni u \mapsto k(\cdot,u) \in \H_k$. This embedding enables the manipulation and comparison of distributions using linear operations in the RKHS, making it especially useful in non-parametric statistics and machine learning tasks. For example, the kernel mean embedding method allows for tasks like estimating expectations, comparing distributions, and performing statistical inference by working with these embedded representations. A key advantage of kernel mean embedding is its flexibility to handle complex, high-dimensional data without making strong parametric assumptions about the underlying distribution. This approach is widely used in applications such as density estimation, non-parametric filtering, and reinforcement learning, where distributions or stochastic processes need to be approximated or estimated effectively.

Let $\H_{\M} \subset \H_k$ be the image of $\M(\sU)$ under $I$ and let $\H_{\P}$ be the image of $\P(\sU)$ under $I$. One can prove that the map $I$ is injective as $\H_k$ is dense in $C_0(\sU)$ \cite{BhKeGe11}, and so, we can uniquely identify any finite signed measure $\nu$ on $\sU$ via its image $I_{\nu}$. This embedding naturally induces a topology, called \emph{maximum mean discrepancy (MMD) topology}, on the set of signed measures as follows: 

\begin{tcolorbox} 
[colback=white!100]
\begin{definition}[Maximum Mean Discrepancy]
We have $\nu_{\lambda} \rightarrow \nu$ in $\M(\sU)$ with respect to the maximum mean discrepancy (MMD) topology if and only if $I_{\nu_{\lambda}} \rightarrow I_{\nu}$ in $\H_k$. 
\end{definition}
\end{tcolorbox}

Indeed, this topology on $\M(\sU)$ (or equivalently on $\H_{\M}$) is induced by the following inner product:
$$
\langle \nu_1,\nu_2 \rangle_{\M} \triangleq \langle I_{\nu_1},I_{\nu_2} \rangle_{\H_k} = \int_{\sU \times \sU} k(u_1,u_2) \, \nu_1(du_1) \, \nu_2(du_2). 
$$
\sy{where we use the fundamental RKHS property that $\langle k(\cdot,u_1), k(\cdot,u_2) \rangle_{\H_k}  = k(u_1,u_2)$.}
Hence, under this inner product, we can view $\M(\sU)$ as a pre-Hilbert space whose closure is $\H_k$. {\color{black} This inner product also induces the following norm, called MMD norm: 
\begin{align*}
\|\nu_1-\nu_2\|_{\M} &\triangleq \|I_{\nu_1} - I_{\nu_2} \|_{\H_k} \\
&= \sup_{\|h\|_{\H_k} \leq 1} \langle h,I_{\nu_1} - I_{\nu_2} \rangle_{\H_k} \\
&= \sup_{\|h\|_{\H_k} \leq 1} \left( \langle h,I_{\nu_1} \rangle_{\H_k} - \langle h,I_{\nu_2} \rangle_{\H_k} \right) \\
&= \sup_{\|h\|_{\H_k} \leq 1} \left( \left\langle h,\int_{\sU} k(\cdot,u) \, \nu_1(du) \right\rangle_{\H_k} - \left\langle h,\int_{\sU} k(\cdot,u) \, \nu_2(du) \right\rangle_{\H_k}\right) \\
&= \sup_{\|h\|_{\H_k} \leq 1} \left( \int_{\sU} \langle h,k(\cdot,u) \rangle_{\H_k} \, \nu_1(du)  - \int_{\sU} \langle h,k(\cdot,u) \rangle_{\H_k}  \, \nu_2(du) \right) \\
&= \sup_{\|h\|_{\H_k} \leq 1} \left(\int_{\sU} h(u) \, \nu_1(du) - \int_{\sU} h(u) \, \nu_2(du_2) \right). 
\end{align*}
}
This norm looks like a total variation norm:
$$
\|\nu_1-\nu_2\|_{TV} \triangleq \sup_{\|h\| \leq 1} \left(\int_{\sU} h(u) \, \nu_1(du) - \int_{\sU} h(u) \, \nu_2(du_2) \right)
$$
but their behavior is indeed quite different. To see this, we first note that if $h,g \in \H_k$, then 
\begin{align*}
\|h-g\| &\triangleq \sup_{u \in \sU} |h(u)-g(u)| \\
&= \sup_{u \in \sU} | \langle h-g,k(\cdot,u) \rangle_{\H_k} | \\
&\leq \sup_{\|l\|_{\H_k} \leq M} |\langle h-g,l \rangle_{\H_k}|, \,\, \text{where } \,\, M \triangleq \sup_{u \in \sU} \|k(\cdot,u)\|_{\H_k} = \sup_{u \in \sU} \sqrt{k(u,u)} < \infty \\
&= M \,  \sup_{\|l\|_{\H_k} \leq 1} |\langle h-g,l \rangle_{\H_k}| \\
&= M \, \|h-g\|_{\H_k}. 
\end{align*}
This implies that 
$$
\|\nu_1-\nu_2\|_{\M} \leq M \, \|\nu_1-\nu_2\|_{TV}. 
$$
Hence, the MMD norm is weaker than the total variation norm. In fact, on the set of probability measures $\P(\sU)$, the MMD norm is equivalent to the weak convergence topology, as demonstrated in \cite[Lemma 5]{SiBaScMa24}. This equivalence implies that the MMD topology is sufficiently weak to establish empirical consistency results as opposed to the total variation norm. Namely, if $\nu$ is a nonatomic measure, then for any singleton $\{u\}$, we have $\nu(\{u\}) = 0$. This means that for any $n$, if $\nu_n$ is the empirical estimate of $\nu$, then the total variation distance $\|\nu_n - \nu\|_{TV}$ will always be $2$. Consequently, total variation distance is not suitable for establishing empirical consistency. However, when considering weak convergence topology, the strong law of large numbers ensures that $\nu_n \rightarrow \nu$ almost surely as $n \rightarrow \infty$. Therefore, weak convergence topology is the appropriate notion of distance for empirical consistency. Since the MMD topology is equivalent to the weak convergence topology on the set of probability measures, MMD norm as opposed total variation norm can also be effectively used for this purpose. Moreover, the MMD topology endows the set of signed measures with an inner product structure, allowing for the use of geometrical arguments such as orthogonality and projection, which is missing in total variation norm.

In summary, above properties make the MMD topology particularly useful. Firstly, its equivalence to weak convergence topology ensures that it retains all the desirable attributes of weak convergence topology, including empirical consistency and the ability to handle convergence of sequences of measures under sufficiently weak conditions. This is particularly important in statistical applications where consistency and convergence are crucial.

Secondly, the inner product structure provided by the MMD topology opens up new avenues for analysis. With this structure, one can employ tools from functional analysis and geometry, such as orthogonality and projection, which are not typically available in other topologies on the space of measures. This can lead to more powerful and intuitive methods for dealing with problems in probability and statistics, particularly in high-dimensional spaces or in the context of machine learning algorithms where geometrical insights can provide significant advantages.

Note that by Riesz representation theorem, $\H_k^* = \H_k$. Hence, in this particular case, we have the following identifications \( \sB = \H_k \) and $\sB^* = \H_k$. Note that, in this case, $\sB^* = \H_k$ is reflexive. Therefore, we do not need to consider weakly$^*$-measurable functions when considering duality results. In view of this, we can define the following Banach spaces $L_p\bigl(\mu,\H_k\bigr)$ and $L_q\bigl(\mu,\H_k\bigr)$ with the following norms, respectively,
\begin{align}
\|f\|_p &= \left( \int_{\sY} \|f(y)\|^p_{\H_k} \, \mu(dy) \right)^{1/p}, \label{eq3} \\
\|\gamma\|_q &= \begin{cases}
\left( \int_{\sY} \|\gamma(y)\|^q_{\H_k} \, \mu(dy) \right)^{1/q}, & \,\, \text{if} \,\, q < \infty \\
\ess \sup_{y \in \sY} \|\gamma(y)\|_{\H_k}, & \,\, \text{if} \,\, q = \infty.
\end{cases}
\label{eq4}
\end{align}
Since $\H_k$ is separable, $L_p\bigl(\mu,\H_k\bigr)$ is also separable. By Theorem~\ref{mainduality}, the topological dual of $L_p\bigl(\mu,\H_k\bigr)$ is $L_q\bigl(\mu,\H_k\bigr)$; that is
\begin{align}
L_p\bigl(\mu,\H_k\bigr)^* = L_q\bigl(\mu,\H_k\bigr). \nonumber
\end{align}
For any $f \in L_p\bigl(\mu,\H_k\bigr)$ and $\gamma \in L_q\bigl(\mu,\H_k\bigr)$, the duality pairing is given by
\begin{align}
\langle\langle \gamma, f \rangle\rangle &= \int_{\sY} \langle \gamma(y), f(y) \rangle_{\H_k} \, \mu(dy). \nonumber 
\end{align}
Hence, we can equip $L_q\bigl(\mu,\H_k\bigr)$ with weak$^*$-topology induced by $L_p\bigl(\mu,\H_k\bigr)$. This topology is called \emph{weak kernel mean embedding topology}. Under this topology, $\gamma_{\lambda} \rightharpoonup^* \gamma$ in $L_q\bigl(\mu,\H_k\bigr)$, if
\begin{align}
&\int_{\sY} \langle \gamma_{\lambda}(y), f(y) \rangle_{\H_k} \, \mu(dy) \rightarrow \int_{\sY} \langle \gamma(y), f(y) \rangle_{\H_k} \, \mu(dy), \nonumber
\end{align}
for all $f \in L_p\bigl(\mu,\H_k\bigr)$.

Recall that $\Gamma$ denotes the set of all stochastic kernels from $\sY$ to $\sU$. Define
$$
I \circ \Gamma \triangleq \{I \circ \gamma: \gamma \in \Gamma\};
$$
that is, we embed $\gamma(y) \in \P(\sU)$ into $\H_k$ via $I$ for all $y$ \sy{(recall the definition of $I$ in (\ref{defnIinject}))}. 

\begin{lemma}\label{kernel-MMD}
The set $I \circ \Gamma$ is the same as the strongly measurable functions from $\sY$ to $\H_k$ whose range is contained in $\H_{\P}$. 
\end{lemma}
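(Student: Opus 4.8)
The plan is to prove the two set inclusions separately, using the injectivity of $I$ and the characterization of $\H_{\P}$ as the $I$-image of $\P(\sU)$. For the forward inclusion, take $\gamma \in \Gamma$; I want to show $I \circ \gamma$ is strongly measurable as a map $\sY \to \H_k$ with range in $\H_{\P}$. The range condition is immediate from the definition, since $(I\circ\gamma)(y) = I_{\gamma(y)} \in \H_{\P}$ for every $y$. For strong measurability, the key point is that $\H_k$ is separable, so by Pettis's measurability theorem it suffices to establish weak measurability, i.e. that $y \mapsto \langle I_{\gamma(y)}, h\rangle_{\H_k}$ is $\Y$-measurable for every $h \in \H_k$. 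By the reproducing property and Bochner integration, $\langle I_{\gamma(y)}, h\rangle_{\H_k} = \int_{\sU} \langle k(\cdot,u), h\rangle_{\H_k}\,\gamma(y)(du) = \int_{\sU} h(u)\,\gamma(y)(du)$; since $\gamma$ is a measurable map into $\P(\sU)$ with the weak topology and $h \in \H_k \subset C_0(\sU) \subset C_b(\sU)$, the map $\nu \mapsto \int h\,d\nu$ is Borel measurable on $\P(\sU)$, and composing with $\gamma$ gives measurability. (I would first verify the interchange of $\langle\cdot,h\rangle_{\H_k}$ with the Bochner integral, which is the continuity of the bounded linear functional $\langle\cdot,h\rangle_{\H_k}$; this is standard.)

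For the reverse inclusion, let $g: \sY \to \H_k$ be strongly measurable with $g(y) \in \H_{\P}$ for all $y$. Since $I: \P(\sU) \to \H_{\P}$ is a bijection (injectivity was noted above, surjectivity onto $\H_{\P}$ is by definition of $\H_{\P}$), there is a well-defined map $\gamma \triangleq I^{-1}\circ g : \sY \to \P(\sU)$ with $I\circ\gamma = g$. It remains to check that $\gamma$ is measurable as a map into $\P(\sU)$ with the weak topology — equivalently, since the Borel $\sigma$-algebra on $\P(\sU)$ (a Borel space) is generated by the maps $\nu \mapsto \int h\,d\nu$ for $h$ ranging over a countable convergence-determining class in $C_b(\sU)$, that $y \mapsto \int h\,d\gamma(y)$ is measurable for each such $h$. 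Here I would use Assumption~\ref{as2}(c): $\H_k$ is dense in $C_0(\sU)$, so I can take the test functions from $\H_k$ itself (or approximate a convergence-determining sequence in $C_0(\sU)$ by elements of $\H_k$ uniformly), and then $\int h\,d\gamma(y) = \langle I_{\gamma(y)}, h\rangle_{\H_k} = \langle g(y), h\rangle_{\H_k}$, which is measurable in $y$ because $g$ is strongly (hence weakly) measurable.

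The main obstacle I anticipate is the reverse direction, specifically ensuring that weak measurability of $g$ into $\H_k$ really does transfer to Borel measurability of $\gamma$ into $\P(\sU)$ — this requires that $\H_k$ (or a countable subset of it) be rich enough to generate the Borel $\sigma$-algebra of $\P(\sU)$, which is exactly where density of $\H_k$ in $C_0(\sU)$ and local compactness of $\sU$ enter. I would make this precise by fixing a countable subset $\{h_j\} \subset \H_k$ that is dense in $C_0(\sU)$ for the sup norm (possible since $\H_k$ is separable and dense in the separable space $C_0(\sU)$), checking that $\{h_j\}$ together with constants is convergence-determining on $\P(\sU)$, and concluding that $\sigma(\nu \mapsto \int h_j\,d\nu : j)$ equals the Borel $\sigma$-algebra of $\P(\sU)$. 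The forward direction and the bijectivity bookkeeping are routine by comparison.
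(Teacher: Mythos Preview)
Your proposal is correct and follows essentially the same route as the paper: both directions hinge on the identity $\langle I_{\gamma(y)}, h\rangle_{\H_k} = \int_{\sU} h\,d\gamma(y)$ for $h \in \H_k$, Pettis's theorem (separability of $\H_k$) to pass between weak and strong measurability, and density of $\H_k$ in $C_0(\sU)$ for the reverse inclusion. The only cosmetic difference is that the paper packages the reverse step by first extending measurability of $y \mapsto \int g\,d\gamma(y)$ from $g \in \H_k$ to all $g \in C_0(\sU)$ and then invoking Lemma~\ref{kernel} (the weak$^*$-measurability characterization of $\Gamma$), whereas you argue directly that a countable sup-dense subset of $\H_k$ generates the Borel $\sigma$-algebra of $\P(\sU)$; these are equivalent bookkeepings of the same argument.
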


\begin{proof}
First, we note that since $\H_k$ is separable and reflexive, then the set of strongly measurable functions from $\sY$ to $\H_k$ is equivalent to the set of weakly$^*$ measurable functions from $\sY$ to $\H_k$. 

Let $\gamma \in \Gamma$. Then, by Lemma~\ref{kernel}, $\gamma$ is a weakly$^*$ measurable function from $\sY$ to $\M(\sU)$ whose range is contained in $\P(\sU)$. Hence, for any $g \in \H_k \subset C_0(\sU)$, the following mapping
$$
\sY \ni y \mapsto \langle \gamma(y),g \rangle = \int_{\sU} g(u) \, \gamma(y)(du) = \langle I \circ \gamma(y), g \rangle_{\H_k} \in \R
$$ 
is measurable. Hence, $I \circ \gamma$ is weakly$^*$ (and so strongly) measurable function from $\sY$ to $\H_k$ whose range is contained in $\H_{\P}$. 

Conversely, let $\xi$ be a weakly$^*$ (and so strongly) measurable function from $\sY$ to $\H_k$ whose range is contained in $\H_{\P}$. Then, define $\gamma \triangleq \xi \circ I^{-1}$ (as $I$ is injective, this inverse is well-defined). Then, for any $g \in \H_k \subset C_0(\sU)$, the following mapping
$$
\sY \ni y \mapsto \langle \xi(y),g \rangle_{\H_k} = \int_{\sU} g(u) \, \gamma(y)(du) = \langle \gamma(y), g \rangle \in \R
$$ 
is measurable. Since $\H_k$ is dense in $C_0(\sU)$, the following mapping
$$
\sY \ni y \mapsto \int_{\sU} g(u) \, \gamma(y)(du) = \langle \gamma(y), g \rangle \in \R
$$ 
is also measurable for any $g \in C_0(\sU) \setminus \H_k$ as well. Hence, $\gamma$ is a weakly$^*$ measurable function from $\sY$ to $\M(\sU)$ whose range is contained in $\P(\sU)$. {\color{black} As Lemma~\ref{kernel} will show, $\Gamma$ consists exactly of those weakly$^*$ measurable functions from $\sY$ to $\M(\sU)$ whose range lies in $\P(\sU)$.} 
\end{proof}

Hence, by Lemma~\ref{kernel-MMD}, the set of stochastic kernels can be identified with the following bounded subset $L_q(\mu,\H_{\P})$ of $L_q(\mu,\H_k)$ {\color{black} (Here, we let \( q \in (1, \infty] \), though in some parts we exclude the case \( q = \infty \) in order to make use of reflexivity. The case \( q = 2 \) is of particular interest because of the underlying Hilbert space structure.)}:
\begin{tcolorbox} 
[colback=white!100]
$$L_q(\mu,\H_{\P}) = I \circ \Gamma \equiv \Gamma.$$
\end{tcolorbox}
Let $\H_{\P_{_{\leq1}}}$ be the image of the set of sub-probability measures $\P{_{\leq1}}(\sU)$ under $I$. Since $\H_{\M}$ is a strict subset of $\H_k$, it is not immediately clear whether $\H_{\P_{_{\leq1}}}$ is a closed subset of $\H_k$. Similarly, it is not clear that $L_q\bigl(\mu, \H_{\P_{_{\leq1}}}\bigr)$ is closed in $L_q\bigl(\mu, \H_k\bigr)$ with respect to the weak kernel mean embedding topology. Therefore, we need to establish these properties first. 

\begin{lemma}\label{subprobability-closed}
$\H_{\P_{_{\leq1}}}$ is a closed subset of $\H_k$.
\end{lemma}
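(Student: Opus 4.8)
The plan is to show that $\H_{\P_{_{\leq1}}}$ is closed by taking a sequence $(\nu_\lambda)$ in $\P_{\leq 1}(\sU)$ whose embeddings $I_{\nu_\lambda}$ converge in $\H_k$ to some $h \in \H_k$, and proving that $h = I_\nu$ for some $\nu \in \P_{\leq 1}(\sU)$. First I would use the fact that the embeddings are norm-bounded: since $\|I_{\nu_\lambda}\|_{\H_k} \le M \|\nu_\lambda\|_{TV} \le M$ (using $M = \sup_u \sqrt{k(u,u)} < \infty$ from Assumption~\ref{as2}(a) and $\|\nu_\lambda\|_{TV}\le 1$), the sequence $(\nu_\lambda)$ is a bounded subset of $\M(\sU)$. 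By the Banach--Alaoglu theorem, viewing $\M(\sU)$ as a subset of $C_0(\sU)^*$ (legitimate since $\sU$ is locally compact), a subsequence $\nu_{\lambda'}$ converges in the vague (weak$^*$) topology to some $\nu \in \M(\sU)$ with $\|\nu\|_{TV}\le 1$.

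Next I would identify the limit. For every $g \in \H_k \subset C_0(\sU)$, the reproducing property gives $\langle I_{\nu_{\lambda'}}, g\rangle_{\H_k} = \int_\sU g\, d\nu_{\lambda'} \to \int_\sU g\, d\nu = \langle I_\nu, g\rangle_{\H_k}$, where the convergence on the left holds because $I_{\nu_{\lambda'}} \to h$ in $\H_k$ (so in particular weakly), and the convergence on the right is vague convergence applied to $g \in C_0(\sU)$. Hence $\langle h, g\rangle_{\H_k} = \langle I_\nu, g\rangle_{\H_k}$ for all $g$ in the dense subspace $\H_k$ of $\H_k$ itself — i.e., for all $g$ — so $h = I_\nu$. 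It remains to check that $\nu$ is actually a \emph{sub}-probability measure, not merely a signed measure of total variation at most $1$: positivity follows since each $\nu_{\lambda'}\ge 0$ and vague limits of nonnegative measures are nonnegative (test against nonnegative $g \in C_c(\sU)$), and $\nu(\sU)\le 1$ follows from $\|\nu\|_{TV}\le 1$ together with $\nu\ge 0$. Thus $h = I_\nu \in \H_{\P_{_{\leq1}}}$, proving closedness.

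The main obstacle is the gap between vague convergence and the total mass: vague convergence does not preserve total mass (mass can escape to infinity), so a priori one only gets $\nu(\sU) \le \liminf \nu_{\lambda'}(\sU) \le 1$, which is in fact exactly what is needed here — the inequality, not equality, is what defines $\P_{\leq 1}(\sU)$, so this potential loss of mass is harmless. The one subtlety to handle carefully is ensuring the vague limit $\nu$ is genuinely in $\M(\sU)$ (finite) with $\|\nu\|_{TV}\le 1$; this is immediate from lower semicontinuity of the total variation norm under vague convergence of nonnegative measures. With that in place, the identification $h = I_\nu$ via density of $\H_k$ in $C_0(\sU)$ (Assumption~\ref{as2}(c)) and the reproducing property closes the argument. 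I would also remark that passing to a subsequence is harmless for proving closedness: the original sequence $I_{\nu_\lambda}$ already converges to $h$, and we have exhibited $h \in \H_{\P_{_{\leq1}}}$.
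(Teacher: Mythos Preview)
Your proof is correct and takes a genuinely different route from the paper's. The paper does not pass to subsequences at all: instead it defines the linear functional $L(g) \triangleq \langle h, g\rangle_{\H_k}$ on $\H_k$, observes directly from the uniform bound $\bigl|\int g\,d\nu_n\bigr| \le \|g\|$ that $L$ is bounded with respect to the \emph{sup-norm} on $\H_k$ with operator norm at most $1$, extends $L$ by density (Assumption~\ref{as2}(c)) to all of $C_0(\sU)$ via the continuous linear extension theorem, and then invokes the Riesz--Markov--Kakutani representation $C_0(\sU)^* = \M(\sU)$ to produce $\nu$. Positivity and $\nu(\sU)\le 1$ follow from positivity of the extended functional and the operator-norm bound. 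Your approach, by contrast, manufactures $\nu$ as a vague subsequential limit of the $\nu_\lambda$ themselves via Banach--Alaoglu (implicitly using separability of $C_0(\sU)$ for sequential compactness), and then identifies $h$ with $I_\nu$ by testing against $g\in\H_k$. The paper's argument is slightly more self-contained in that it builds $\nu$ from $h$ alone without any subsequence extraction; yours is perhaps more transparent about the origin of $\nu$ and makes the possible loss of mass under vague convergence explicit. Both ultimately rest on the same ingredients---Assumption~\ref{as2}(b),(c) and the Riesz duality $C_0(\sU)^*=\M(\sU)$---just packaged differently. One cosmetic remark: your opening bound on $\|I_{\nu_\lambda}\|_{\H_k}$ is unnecessary, since $\|\nu_\lambda\|_{TV}\le 1$ is immediate from $\nu_\lambda\in\P_{\leq 1}(\sU)$.
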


\begin{proof}
Let $I_{\nu_n} \rightarrow h$ in $\H_k$, where $\{\nu_n\} \subset \P_{_{\leq1}}$. Hence, for any $g \in \H_k$, we have 
$$
\lim_{n \rightarrow \infty} \langle I_{\nu_n}, g \rangle_{\H_k} = \lim_{n \rightarrow \infty} \int_{\sU} g(u) \, \nu_n(du) = \langle h, g \rangle_{\H_k}.
$$
Let us define the functional $L: \H_k \rightarrow \R$ as follows $L(g) \triangleq \langle h, g \rangle_{\H_k}$. This functional is linear. Moreover, since 
$$
\sup_n \sup_{g \in \H_k: \|g\| \leq 1} \left| \int_{\sU} g(u) \, \nu_n(du) \right| \leq 1,
$$
we have
\begin{align*}
\sup_{g \in \H_k: \|g\| \leq 1} |L(g)| \leq 1.
\end{align*}
Hence, $L$ is a bounded linear functional on $\H_k$, when $\H_k$ is endowed with sup-norm instead of $\H_k$-norm. Since $\H_k$ is dense in $C_0(\sU)$, by continuous linear extension theorem, we can uniquely extend $L$ to $C_0(\sU)$ as a bounded linear functional, denoted as $\hat L$. But since $C_0(\sU)^* = \M(\sU)$, there exists $\nu \in \M(\sU)$ such that $\hat L = \nu$; that is, for any $g \in C_0(\sU)$, we have $\hat L(g) = \int_{\sU} g(u) \, \nu(du)$. One can prove that $\hat L$ is a positive functional; that is, if $g \geq 0$, then $\hat L(g) \geq 0$. Hence, $\nu \in P_{_{\leq1}}$ as the operator norm of $\hat L$ is less than $1$. Thus, we have 
$$
\langle h, g \rangle_{\H_k} = \langle I_{\nu}, g \rangle_{\H_k} \,\, \forall g \in \H_k. 
$$ 
This implies that $h = I_{\nu} \in \H_{\P_{_{\leq1}}}$, which completes the proof. 
\end{proof}

\begin{lemma}\label{Lq-closed}
$L_q\bigl(\mu,\H_{\P_{_{\leq1}}}\bigr)$ is closed in $L_q\bigl(\mu,\H_k\bigr)$ with respect to the weak kernel mean embedding topology when $q \in (1,\infty)$. 
\end{lemma}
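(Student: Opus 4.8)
The plan is to obtain the weak$^*$-closedness of $L_q\bigl(\mu,\H_{\P_{_{\leq1}}}\bigr)$ from the two easier facts that this set is \emph{convex} and \emph{norm-closed} in $L_q\bigl(\mu,\H_k\bigr)$, and then to upgrade norm closedness (equivalently, weak closedness) to weak$^*$-closedness by exploiting the reflexivity of $L_q\bigl(\mu,\H_k\bigr)$, which is exactly what $q\in(1,\infty)$ buys us. First I would record three structural properties of $\H_{\P_{_{\leq1}}}\subset\H_k$: it is bounded, since $\|I_\nu\|_{\H_k}\le\int_{\sU}\|k(\cdot,u)\|_{\H_k}\,\nu(du)\le M$ for every $\nu\in\P{_{\leq1}}(\sU)$; it is closed, which is exactly Lemma~\ref{subprobability-closed}; and it is convex, because $\P{_{\leq1}}(\sU)$ is a convex subset of $\M(\sU)$ and the embedding $I$ is linear. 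Convexity of $\H_{\P_{_{\leq1}}}$ immediately yields convexity of $L_q\bigl(\mu,\H_{\P_{_{\leq1}}}\bigr)$, since a convex combination of two such kernels takes values in $\H_{\P_{_{\leq1}}}$ $\mu$-almost everywhere.

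Next I would check norm closedness: if $\gamma_n\to\gamma$ in $\|\cdot\|_q$ with $\gamma_n\in L_q\bigl(\mu,\H_{\P_{_{\leq1}}}\bigr)$, then the scalar functions $y\mapsto\|\gamma_n(y)-\gamma(y)\|_{\H_k}$ converge to $0$ in $L_q(\mu)$, so, after passing to a subsequence, $\|\gamma_{n_k}(y)-\gamma(y)\|_{\H_k}\to 0$ for $\mu$-a.e.\ $y$; since each $\gamma_{n_k}(y)\in\H_{\P_{_{\leq1}}}$ and $\H_{\P_{_{\leq1}}}$ is closed, $\gamma(y)\in\H_{\P_{_{\leq1}}}$ for $\mu$-a.e.\ $y$, i.e.\ $\gamma\in L_q\bigl(\mu,\H_{\P_{_{\leq1}}}\bigr)$. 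By Mazur's theorem, a convex norm-closed subset of a Banach space is closed in the weak topology $\sigma\bigl(L_q(\mu,\H_k),L_q(\mu,\H_k)^*\bigr)$, so $L_q\bigl(\mu,\H_{\P_{_{\leq1}}}\bigr)$ is weakly closed. Finally, for $q\in(1,\infty)$ the Hilbert space $\H_k$ is reflexive, hence $L_q\bigl(\mu,\H_k\bigr)=L_p\bigl(\mu,\H_k\bigr)^*$ is reflexive with dual $L_p\bigl(\mu,\H_k\bigr)$, so the weak$^*$-topology induced by the predual $L_p\bigl(\mu,\H_k\bigr)$ coincides with the weak topology; therefore $L_q\bigl(\mu,\H_{\P_{_{\leq1}}}\bigr)$ is weak$^*$-closed, that is, closed in the weak kernel mean embedding topology.

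Essentially every step above is routine; the one place where the hypothesis $q\in(1,\infty)$ genuinely enters — and hence the real content of the lemma — is the passage from weak to weak$^*$ closedness via reflexivity, which fails for $q=\infty$ since the predual $L_1\bigl(\mu,\H_k\bigr)$ is not reflexive. As an alternative that avoids reflexivity (and also makes the roles of boundedness and closedness of $\H_{\P_{_{\leq1}}}$ transparent) I would argue directly: by Hahn--Banach separation and separability of $\H_k$ one writes $\H_{\P_{_{\leq1}}}=\bigcap_j\{v\in\H_k:\langle v,h_j\rangle_{\H_k}\le s(h_j)\}$ for a countable dense set $\{h_j\}\subset\H_k$, with $s(h)\triangleq\sup_{v\in\H_{\P_{_{\leq1}}}}\langle v,h\rangle_{\H_k}$ finite by boundedness; then for $\gamma_\lambda\rightharpoonup^*\gamma$ with $\gamma_\lambda\in L_q\bigl(\mu,\H_{\P_{_{\leq1}}}\bigr)$, testing the pairing $\langle\langle\gamma_\lambda,f\rangle\rangle\to\langle\langle\gamma,f\rangle\rangle$ against $f=1_A\,h_j\in L_p(\mu,\H_k)$ for each $A\in\Y$ yields $\int_A\langle\gamma(y),h_j\rangle_{\H_k}\,\mu(dy)\le s(h_j)\,\mu(A)$ for all $A$, hence $\langle\gamma(y),h_j\rangle_{\H_k}\le s(h_j)$ for $\mu$-a.e.\ $y$, and intersecting over $j$ gives $\gamma(y)\in\H_{\P_{_{\leq1}}}$ for $\mu$-a.e.\ $y$.
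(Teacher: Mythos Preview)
Your main argument is correct and follows the same strategy as the paper: exploit convexity of $L_q(\mu,\H_{\P_{_{\leq1}}})$ together with reflexivity of $L_q(\mu,\H_k)$ for $q\in(1,\infty)$ to reduce weak$^*$-closedness to norm closedness, then pass to an a.e.\ convergent subsequence and conclude pointwise. The only difference is cosmetic: you invoke Lemma~\ref{subprobability-closed} directly for the pointwise step, whereas the paper re-runs that lemma's extension argument inline at each fixed $y$; your version is slightly cleaner, and your alternative Hahn--Banach/support-function argument is a correct extra route not in the paper (and, as you note, it does not require reflexivity).
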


\begin{proof}
Note that \(L_q\bigl(\mu,\H_{\P_{_{\leq1}}}\bigr)\) is a convex subset of \(L_q\bigl(\mu,\H_{k}\bigr)\). Since \(q \in (1,\infty)\), the space \(L_q\bigl(\mu,\H_{k}\bigr)\) is reflexive. Therefore, the weak\(^*\)-topology (i.e., the weak kernel mean embedding topology) coincides with the weak topology. Additionally, for convex sets, closedness with respect to the weak topology and the norm topology are equivalent. Thus, it suffices to show that \(L_q\bigl(\mu,\H_{\P_{_{\leq1}}}\bigr)\) is closed in the norm topology.

Let $\gamma_{\lambda} \rightarrow \gamma$ with respect to the norm-topology; i.e., 
$$
\int_{\sY} \|\gamma_{\lambda}(y) - \gamma(y) \|_{\M}^q \, \mu(dy) \rightarrow 0,
$$
where $\gamma_{\lambda} \in L_q\bigl(\mu,\H_{\P_{_{\leq1}}}\bigr)$ for all $\lambda$. Since convergence in mean implies almost sure convergence along a sub-sequence, there exists a sub-sequence $\{\gamma_{\theta}\}$ of $\{\gamma_{\lambda}\}$ such that for $\mu$-almost everywhere, we have 
$$
\|\gamma_{\theta}(y) - \gamma(y) \|_{\M} \rightarrow 0.
$$
Fix any $y$ that satisfies above convergence result, where these $y$ values have probability $1$. We complete the proof by following the same idea that was used in the proof of Lemma~\ref{subprobability-closed}. Let us define the functional $L^y: \H_k \rightarrow \R$ as follows $L^y(g) \triangleq \langle \gamma(y), g \rangle_{\H_k}$. This functional is linear. Moreover, by above $\mu$-almost everywhere convergence result and the fact that 
$$
\sup_{\theta} \sup_{g \in \H_k: \|g\| \leq 1} |\langle \gamma_{\theta}(y), g \rangle_{\H_k}| \leq 1,
$$
we also have
\begin{align*}
\sup_{g \in \H_k: \|g\| \leq 1} |L^y(g)| \leq 1. 
\end{align*}
Hence, $L^y$ is a bounded linear functional on $\H_k$, where $\H_k$ is endowed with sup-norm. Since $\H_k$ is dense in $C_0(\sU)$, by continuous linear extension theorem, we can uniquely extend $L^y$ to $C_0(\sU)$ as a bounded linear functional, denoted as $\hat L^y$. But since $C_0(\sU)^* = \M(\sU)$, there exists $\nu^y \in \M(\sU)$ such that $\hat L^y = \nu^y$; that is, for any $g \in C_0(\sU)$, we have $\hat L^y(g) = \int_{\sU} g(u) \, \nu^y(du)$. One can prove that $\hat L^y$ is a positive functional; that is, if $g \geq 0$, then $\hat L^y(g) \geq 0$. Hence, $\nu^y \in P_{_{\leq1}}$ as the operator norm of $\hat L^y$ is less than $1$. Hence, we have 
$$
\langle \gamma(y), g \rangle_{\H_k} = \langle I_{\nu^y}, g \rangle_{\H_k} \,\, \forall g \in \H_k. 
$$ 
This implies that $\gamma(y) = I_{\nu^y} \in \H_{\P_{_{\leq1}}}$ for $\mu$-almost every $y$. This completes the proof.
\end{proof}

Since $L_q\bigl(\mu,\H_{\P_{_{\leq1}}}\bigr)$ is closed and bounded in $L_q\bigl(\mu,\H_k\bigr)$ with respect to weak kernel mean embedding topology, it is (sequentially) compact and metrizable with respect to this topology by Banach-Alaoglu Theorem. Note that $L_q\bigl(\mu,\H_{\P}\bigr)$ is a subset of $L_q\bigl(\mu,\H_{\P_{_{\leq1}}}\bigr)$, and so, we can endow $L_q\bigl(\mu,\H_{\P}\bigr)$ with relative weak kernel mean embedding topology. 

\begin{tcolorbox} 
[colback=white!100]
\begin{definition}[Weak Kernel Mean Embedding Topology]\label{defnWeakKMET}
Weak kernel mean embedding topology on $$L_q(\mu,\H_{\P}) = I \circ \Gamma \, (\equiv \Gamma)$$ is the relative weak kernel mean embedding topology; that is, $\gamma_{\lambda} \rightharpoonup^* \gamma$ in $L_q(\mu,\H_{\P})$, if
\begin{align}
&\int_{\sY} \langle \gamma_{\lambda}(y), f(y) \rangle_{\H_k} \, \mu(dy) \rightarrow \int_{\sY} \langle \gamma(y), f(y) \rangle_{\H_k} \, \mu(dy), \nonumber
\end{align}
for all $f \in L_p\bigl(\mu,\H_k\bigr)$. 
\end{definition}
\end{tcolorbox}

\begin{remark}
In the definition above, the conjugate pair \(p\) and \(q\) would typically be explicitly stated. However, as noted in Remark~\ref{equivalence-weakstar}, the choice of \(p\) and \(q\) does not affect the definition. For \(1 \leq p < \infty\), the weak kernel mean embedding topologies induced by $L_p\bigl(\mu,\H_k\bigr)$ are equivalent.
\end{remark}

As we will see soon, similar to the $w^*$-topology, the set of stochastic kernels is not closed under this topology unless $\sU$ is compact as shown below. 

\begin{lemma}\label{nonclosed-MMD}
$L_q\bigl(\mu,\H_{\P}\bigr)$ is not closed with respect to the weak kernel mean embedding topology unless $\sU$ is compact.
\end{lemma}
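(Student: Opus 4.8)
The plan is to show that when $\sU$ is non-compact one can find a sequence of stochastic kernels whose kernel mean embeddings converge in the weak kernel mean embedding topology to a limit that is \emph{not} of the form $I\circ\gamma$ for any stochastic kernel $\gamma$, by letting mass ``escape to infinity.'' Concretely, since $\sU$ is locally compact but not compact, it is not compact, so there exists a sequence $\{u_n\}\subset\sU$ with no convergent subsequence (equivalently, $u_n$ eventually leaves every compact set). Consider the constant kernels $\gamma_n(y)\equiv\delta_{u_n}$ for all $y\in\sY$; each $\gamma_n\in\Gamma$, and its embedding is the constant function $y\mapsto I_{\delta_{u_n}}=k(\cdot,u_n)\in\H_k$.

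The first step is to identify the weak$^*$ limit. For any $f\in L_p(\mu,\H_k)$ we must evaluate $\int_\sY \langle k(\cdot,u_n), f(y)\rangle_{\H_k}\,\mu(dy)=\int_\sY f(y)(u_n)\,\mu(dy)$, using the reproducing property. Since each $f(y)\in\H_k\subset C_0(\sU)$ by Assumption~\ref{as2}(b)--(c), we have $f(y)(u_n)\to 0$ as $n\to\infty$ for $\mu$-a.e.\ $y$ (because $u_n$ escapes to infinity and $f(y)$ vanishes at infinity); combined with the uniform bound $|f(y)(u_n)|\le \|f(y)\|_{\H_k}\sup_u\sqrt{k(u,u)} = M\|f(y)\|_{\H_k}\in L_p\subset L_1$, the dominated convergence theorem gives $\int_\sY f(y)(u_n)\,\mu(dy)\to 0$. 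Hence $I\circ\gamma_n \rightharpoonup^* 0$ in $L_q(\mu,\H_k)$, i.e.\ the sequence converges to the zero function $\gamma\equiv 0$.

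The second step is to verify that the limit $0$ does \emph{not} lie in $L_q(\mu,\H_\P)$. Indeed $0=I_{\nu}$ would force $\nu=0$ (the zero measure) by injectivity of $I$ on $\M(\sU)$, and the zero measure is not a probability measure; so the constant function $y\mapsto 0$ is not in $I\circ\Gamma\equiv L_q(\mu,\H_\P)$. Therefore $L_q(\mu,\H_\P)$ contains a weak$^*$-convergent sequence whose limit is outside the set, so it is not closed.

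The main obstacle — and it is modest — is making sure that ``$u_n$ escapes to infinity'' is exploited correctly against the definition of $C_0(\sU)$: one should note that local compactness is what makes $C_0(\sU)$ (and the one-point compactification) well-behaved, so that $f(y)\in C_0(\sU)$ genuinely means $f(y)(u_n)\to 0$ whenever $u_n$ leaves every compact set; this is precisely where the hypothesis that $\sU$ is locally compact but not compact is used. The remaining pieces — the reproducing-property computation, the uniform domination by $M\|f(y)\|_{\H_k}$, and injectivity of $I$ — are all already recorded earlier in the excerpt, so the argument is short once the escaping sequence is set up.
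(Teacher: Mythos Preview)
Your argument for the non-compact direction is correct and follows the same overall strategy as the paper: construct constant kernels $\gamma_n(y)\equiv\nu_n$ with $\nu_n\to 0$ vaguely, and conclude that $I\circ\gamma_n\rightharpoonup^* 0\notin L_q(\mu,\H_\P)$ via dominated convergence. The only difference is the choice of the escaping sequence of probability measures: the paper reuses the construction from its Lemma~\ref{nonclosed}, taking $\nu_n=\tfrac{1}{n}\sum_{i=1}^n\delta_{u_i}$ for a carefully selected array $\{u_1,\ldots,u_n\}$, whereas you take the simpler $\nu_n=\delta_{u_n}$ for a single sequence $\{u_n\}$ that eventually leaves every compact set. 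Your choice is more economical and works because in a separable metric space non-compactness is equivalent to non-sequential-compactness, so such a sequence exists, and $g(u_n)\to 0$ for every $g\in C_0(\sU)$ directly. One omission: the paper also proves the converse implication, namely that when $\sU$ is compact the set $L_q(\mu,\H_\P)$ \emph{is} closed (using that $C_0(\sU)=C_b(\sU)$ so vague and weak convergence of the $\sU$-marginal coincide, forcing the limit to stay a probability measure). Strictly speaking the lemma as stated only asserts the non-compact direction, but the paper's proof covers both, so you may want to add that short argument for completeness.
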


\begin{proof}
First note that if $\sU$ is compact, then $C_0(\sU) = C_b(\sU) = C(\sU)$. Let $\gamma_{\lambda} \rightharpoonup^* \gamma$ with respect to weak kernel mean embedding topology. Since $L_q\bigl(\mu,\H_{\P_{_{\leq1}}}\bigr)$ is closed in $L_q\bigl(\mu,\H_k\bigr)$ with respect to weak kernel mean embedding topology, we have $\gamma \in L_q\bigl(\mu,\H_{\P_{_{\leq1}}}\bigr)$. We now prove that $\gamma$ is indeed an element of $L_q\bigl(\mu,\H_{\P}\bigr)$. By $\gamma_{\lambda} \rightharpoonup^* \gamma$, we have $\mu \otimes \gamma_{\lambda}(\sY \times \cdot) \rightarrow \mu \otimes \gamma(\sY \times \cdot)$ in weak MMD topology\footnote{A sequence $\{\nu_n\} \in \M(\sU)$ converges to $\nu$ in weak MMD topology if 
$$
\langle \I_{\nu_n}, h \rangle_{\H_k} = \int_{\sU} h(u) \, \nu_n(du) \rightarrow  \langle \I_{\nu}, h \rangle_{\H_k} = \int_{\sU} h(u) \, \nu(du) \,\, \forall h \in \H_k. 
$$
The relationship between the MMD topology and the weak MMD topology is very similar to the relationship between the topology induced by the total variation norm and the weak convergence topology. In the former case, the convergence of the integral is uniform over the unit ball in the function class of interest, while in the latter case, the convergence is pointwise within that function class.}; that is, 
$$
\int_{\sY} \int_{\sU} h(y) \, \gamma_{\lambda}(y)(du) \, \mu(dy) \rightarrow \int_{\sY} \int_{\sU} h(y) \, \gamma(y)(du) \, \mu(dy) \,\, \forall h \in \H_k.
$$
By \cite[Lemma 5 and Remark 6]{SiBaScMa24}, $\mu \otimes \gamma_{\lambda}(\sY \times \cdot) \rightarrow \mu \otimes \gamma(\sY \times \cdot)$ in vague topology, and so, in weak convergence topology as $C_0(\sU) = C_b(\sU) = C(\sU)$. Since $\mu \otimes \gamma_{\lambda}(\sY \times \cdot) \in \P(\sU)$ for all $\lambda$ and $\P(\sU)$ is a closed subset of $\M(\sU)$ under weak convergence topology, we have $\mu \otimes \gamma(\sY \times \cdot) \in \P(\sU)$. Hence, $\gamma \in \rL_q\bigl(\mu,\P(\sU)\bigr)$.

Suppose now that $\sU$ is not compact. Using the same construction as in the proof of Lemma~\ref{nonclosed}, we can find a sequence of probability measures $\{\nu_n\}$ such that $\nu_n$ converges to $0$ vaguely, where $0(\,\cdot\,)$ denotes the degenerate measure on $\sU$. Then, define $\gamma_n(y)(\,\cdot\,) = \nu_n(\,\cdot\,)$ and $\gamma(y)(\,\cdot\,) = 0(\,\cdot\,)$. Let $f \in L_p\bigl(\mu,\H_k\bigr)$. Then we have
\begin{align}
&\lim_{n\rightarrow\infty} \int_{\sY} \langle \gamma_n(y), f(y) \rangle \, \mu(dy) 
= \lim_{n\rightarrow\infty} \int_{\sY} \int_{\sU} f(y)(u) \, \nu_n(du) \, \mu(dy) \nonumber \\
&= \int_{\sY} \lim_{n\rightarrow\infty} \int_{\sU} f(y)(u) \, \nu_n(du) \, \mu(dy) \text{ (as $\|f(y)\|_{\H_k}$ is $\mu$-integrable)} \nonumber \\
&= 0 \text{ (as $f(y) \in \H_k \subset C_0(\sU)$)}. \nonumber
\end{align}
Hence, $\gamma_n \rightharpoonup^* \gamma$, where $\gamma(y) = 0$ for all $y$. But, $\gamma \notin L_q\bigl(\mu,\P(\sU)\bigr)$, and so, $L_q\bigl(\mu,\P(\sU)\bigr)$ is not closed in $L_q\bigl(\mu,\P_{_{\leq1}}(\sU)\bigr)$.
\end{proof}

Recall that $L_q\bigl(\mu,\H_{\mathcal{P}_{\leq 1}}\bigr)$ is (sequentially) compact with respect to the weak kernel mean embedding topology by the Banach–Alaoglu Theorem. Since $L_q\bigl(\mu,\H_{\mathcal{P}}\bigr)$ is a subset of $L_q\bigl(\mu,\H_{\mathcal{P}_{\leq 1}}\bigr)$, it is \emph{relatively} (sequentially) compact under the same topology. Moreover, when $\sU$ is compact, this set is also (sequentially) compact by Lemma~\ref{nonclosed-MMD}. In contrast, when $\sU$ is non-compact, Lemma~\ref{nonclosed-MMD} shows that $L_q\bigl(\mu,\H_{\mathcal{P}}\bigr)$ is no longer closed, and therefore compactness is not guaranteed.

\begin{remark}

In the kernel mean embedding topology, the conjugate pair \((p, q) = (2, 2)\) is particularly useful due to its compatibility with the Hilbert space structure.  
More concretely, we work with the Hilbert space \(L_2(\mu, \H_k)\) in this case. The inner product on \(L_2(\mu, \H_k)\) is given by:
\[
\langle\langle \gamma, f \rangle\rangle = \int_{\sY} \langle \gamma(y), f(y) \rangle_{\H_k} \, \mu(dy).
\]
By Theorem~\ref{mainduality}, we know that the topological dual of \(L_2(\mu, \H_k)\) is isometrically isomorphic to itself:
\[
L_2(\mu, \H_k)^* = L_2(\mu, \H_k).
\]
This duality is expressed through the inner product defined above. Consequently, we can equip \(L_2(\mu, \H_k)\) with the weak\(^*\)-topology (i.e., weak kernel mean embedding topology), which is induced by \(L_2(\mu, \H_k)\). 

Under this topology, convergence \(\gamma_{\lambda} \rightharpoonup^* \gamma\) in \(L_2(\mu, \H_k)\) occurs if and only if:
\[
\int_{\sY} \langle \gamma_{\lambda}(y), f(y) \rangle_{\H_k} \, \mu(dy) \rightarrow \int_{\sY} \langle \gamma(y), f(y) \rangle_{\H_k} \, \mu(dy),
\]
for all \(f \in L_2(\mu, \H_k)\). As shown above, while the space \(L_2(\mu, \H_{\P})\) is relatively (sequentially) compact under this topology, it is not closed unless \(\sU\) is compact. Finally, in this particular case, because of the Hilbert space structure, this topology also corresponds to the weak topology on $L_2\bigl(\mu,\H_k\bigr)$. 
\end{remark}

The following result establishes that convergence under the weak kernel mean embedding topology remains valid even if the reference probability measure \(\mu\) on \(\sY\) is replaced by another probability measure \(\eta\), provided \(\eta\) is absolutely continuous with respect to \(\mu\). A similar result is established in \cite[Lemma 3.6]{yuksel2023borkar} for Young narrow topology.

\begin{lemma}\label{equivKernelT}
Let \(\eta \ll \mu\), and suppose the Radon-Nikodym derivative of \(\eta\) with respect to \(\mu\) is \(l\)-integrable with respect to \(\mu\), where $l > 1$. Then, \(\gamma_n \rightharpoonup^* \gamma\) under the weak kernel mean embedding topology at input \(\mu\) implies \(\gamma_n \rightharpoonup^* \gamma\) under the weak kernel mean embedding topology at input \(\eta\), provided \(p\) is replaced by \(\tilde{p}\) satisfying \(1/\tilde{p} + 1/l = 1/p\). Indeed, since the weak kernel mean embedding topologies are equivalent for any \(1 \leq p < \infty\), there is no need to adjust \(p\).
\end{lemma}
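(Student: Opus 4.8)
The plan is to absorb the Radon--Nikodym density $\rho := d\eta/d\mu$ into the test function and thereby reduce $\eta$-convergence to the assumed $\mu$-convergence. Before doing so, one records that the weak kernel mean embedding topology at input $\eta$ is genuinely defined on $\gamma_n$ and $\gamma$: every $\gamma \in \Gamma$ satisfies $\|\gamma(y)\|_{\H_k} = \|I_{\gamma(y)}\|_{\H_k} \le \sup_{u\in\sU}\sqrt{k(u,u)} = M < \infty$ for all $y$, so $\gamma_n,\gamma \in L_{q}(\eta,\H_{\P})$ for every $q$, and $\mu$-a.e.\ strong measurability passes to $\eta$-a.e.\ strong measurability because $\eta\ll\mu$.

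Next I would fix a test function $f \in L_{\tilde p}(\eta,\H_k)$ and, using $\eta(dy)=\rho(y)\,\mu(dy)$, rewrite
\[
\int_{\sY}\langle\gamma_n(y),f(y)\rangle_{\H_k}\,\eta(dy)=\int_{\sY}\langle\gamma_n(y),\rho(y)f(y)\rangle_{\H_k}\,\mu(dy),
\]
and similarly with $\gamma$ in place of $\gamma_n$. It therefore suffices to show that $\tilde f:=\rho f$ lies in $L_p(\mu,\H_k)$: once this is established, the hypothesis $\gamma_n\rightharpoonup^*\gamma$ at input $\mu$, applied to the test function $\tilde f\in L_p(\mu,\H_k)$, delivers exactly the asserted convergence at input $\eta$. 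Strong measurability of $\tilde f$ is automatic (a scalar measurable function times a strongly measurable $\H_k$-valued function), so the only substantive point is the bound $\int_{\sY}\rho^p\|f\|_{\H_k}^p\,d\mu<\infty$.

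To obtain that bound I would factor $\rho\|f\|_{\H_k}=\rho^{1-1/\tilde p}\cdot(\rho^{1/\tilde p}\|f\|_{\H_k})$ and apply Hölder's inequality with the exponents forced by $1/\tilde p+1/l=1/p$. The factor $\rho^{1/\tilde p}\|f\|_{\H_k}$ lies in $L_{\tilde p}(\mu)$ since $\int\rho\|f\|_{\H_k}^{\tilde p}\,d\mu=\int\|f\|_{\H_k}^{\tilde p}\,d\eta<\infty$ by $f\in L_{\tilde p}(\eta,\H_k)$; the factor $\rho^{1-1/\tilde p}$ lies in $L_l(\mu)$ because $1-1/\tilde p\le1$ and $\rho\in L_l(\mu)$ on a probability space (e.g.\ $\rho^{(1-1/\tilde p)l}\le1+\rho^l$, which is $\mu$-integrable). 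Hölder with $1/p=1/\tilde p+1/l$ then gives $\rho\|f\|_{\H_k}\in L_p(\mu)$, i.e.\ $\tilde f\in L_p(\mu,\H_k)$, which closes the argument.

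The one bookkeeping point to watch is the admissible range of exponents: the relation $1/\tilde p+1/l=1/p$ yields a finite $\tilde p=pl/(l-p)\in(1,\infty)$ precisely when $p<l$, which is also the regime in which the Hölder estimate above goes through. This is no genuine restriction: by Remark~\ref{equivalence-weakstar}, since $\H_{\P_{_{\leq1}}}$ is closed (Lemma~\ref{subprobability-closed}) and bounded, the weak kernel mean embedding topology on $L_q(\mu,\H_{\P})$ --- and equally on $L_q(\eta,\H_{\P})$ --- does not depend on the choice $1\le p<\infty$; one may simply fix $p=1$ at input $\mu$, so that $\tilde p=l/(l-1)\in(1,\infty)$ at input $\eta$ (using $l>1$), and no adjustment of exponents is actually needed, which is the content of the lemma's final sentence. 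There is no deep obstacle here; the only care required is the Hölder exponent bookkeeping and checking that the objects and the topology at input $\eta$ are well posed.
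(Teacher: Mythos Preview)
Your proposal is correct and follows essentially the same route as the paper: absorb the Radon--Nikodym density into the test function and verify via H\"older that the resulting function lies in $L_p(\mu,\H_k)$, then invoke the assumed $\mu$-convergence. Your H\"older step is in fact more careful than the paper's --- the paper applies generalized H\"older directly as $\|\rho g\|_{L_p(\mu)}\le\|\rho\|_{L_l(\mu)}\|g\|_{\tilde p}$ without addressing that $g$ is a priori only in $L_{\tilde p}(\eta)$, whereas your factorization $\rho\|f\|_{\H_k}=\rho^{1-1/\tilde p}\cdot(\rho^{1/\tilde p}\|f\|_{\H_k})$ handles this correctly.
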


\begin{proof}
By assumption we have that there exists $f: \mathbb{X} \to \mathbb{R}_+$ with
\[f(x) = \frac{d \eta}{d \mu}(x),\]
which is the Radon-Nikodym derivative. It is assumed that $f$ is $l$-integrable under $\mu$. Let $g \in L_{\tilde p}(\eta,\H_k)$. Then we define $\tilde g(y)(\cdot) \triangleq f(y) \, g(y)(\cdot)$. By generalized H\"{o}lder's inequality, we have 
$$
\bigg|\int_{\sY} \|\tilde g(y)\|_{\H_k}^{p} \mu(dy)\bigg|^{1/p} \leq  \|f\|_l \, \|g\|_{\tilde p} < \infty.  
$$
Hence $\tilde g \in L_p(\mu,\H_k)$. This completes the proof. 
\end{proof}

\subsection{Two weak topologies revisited}

In this section, we present two commonly studied weak topologies on stochastic kernels: The Borkar ($w^*$) topology and the Young topology on kernels. While this section serves as a review, we will also obtain some useful properties that will be utilized later in the paper.  

\subsubsection{$w^*$-topology (or Borkar topology) on Stochastic Kernels}\label{weak-star}

In this section, we introduce the $w^*$-topology on the set of stochastic kernels, a concept developed in \cite{borkar1989topology}\cite[Section 2.4]{BoArGh12} for randomized Markov policies. Later, in \cite{Sal20},  $w^*$-topology is used to establish the existence of optimal policies in team decision problems. This topology has proven to be a valuable tool in analyzing such problems. We will begin by recalling its formal definition and then proceed to discuss some of its fundamental properties.  

Let $\sU$ be a locally compact Borel space endowed with its Borel $\sigma$-algebra $\U$. For any $g \in C_0(\sU)$, let
\begin{align}
\|g\| \triangleq \sup_{u \in \sU} |g(u)| \nonumber
\end{align}
which turns $(C_0(\sU),\|\cdot\|)$ into a separable Banach space. Let $\|\cdot\|_{TV}$ denote the total variation norm on $\M(\sU)$. The following result is a well-known Riesz-Markov-Kakutani representation theorem, which states that the topological dual of \( C_0(\sU) \) is \( \M(\sU) \).

\begin{theorem}\cite[Theorem 7.17]{Fol99}
For any $\nu \in \M(\sU)$ and $g \in C_0(\sU)$, let $I_{\nu}(g) \triangleq \langle \nu , g \rangle$, where
\begin{align}
\langle \nu, g \rangle \triangleq \int_{\sU} g(u) \, \nu(du). \nonumber
\end{align}
Then the map $\nu \mapsto I_{\nu}$ is an isometric isomorphism from $\M(\sU)$ to $C_0(\sU)^*$. Hence, we can identify $C_0(\sU)^*$ with $\M(\sU)$. For any $g \in C_0(\sU)$ and $\nu \in \M(\sU)$, the duality pairing is given by
\begin{align}
\langle \nu, g \rangle = \int_{\sU} g(u) \, \nu(du).\nonumber
\end{align}
Furthermore, the norm on $\M(\sU)$ induced by this duality is the total variation norm.
\end{theorem}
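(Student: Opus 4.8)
The statement is the classical Riesz-Markov-Kakutani theorem, so in principle one invokes \cite[Theorem 7.17]{Fol99} directly; for completeness, here is the line of argument I would follow. The plan is to establish separately that $\nu \mapsto I_\nu$ is a linear isometry (which simultaneously yields injectivity and the final assertion about the induced norm) and that it is surjective onto $C_0(\sU)^*$.

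\emph{Isometry.} For $\nu \in \M(\sU)$ and $g \in C_0(\sU)$ the bound $|I_\nu(g)| = \bigl| \int_{\sU} g \, d\nu \bigr| \le \|g\| \, \|\nu\|_{TV}$ shows that $I_\nu$ is a bounded linear functional with $\|I_\nu\| \le \|\nu\|_{TV}$, and linearity of $\nu \mapsto I_\nu$ is immediate. For the reverse inequality I would use the Hahn-Jordan decomposition $\nu = \nu^+ - \nu^-$ with disjoint Hahn sets $P, N$. Since $\sU$ is a locally compact, second-countable Hausdorff space, it is $\sigma$-compact and every finite Borel measure on it is Radon; hence, given $\epsilon > 0$, one may pick compact $K \subset P$, $K' \subset N$ with $\nu^+(P \setminus K) < \epsilon$ and $\nu^-(N \setminus K') < \epsilon$, and Urysohn's lemma produces $g \in C_c(\sU)$ with $\|g\| \le 1$, $g \equiv 1$ on $K$, $g \equiv -1$ on $K'$. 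A direct estimate then gives $I_\nu(g) \ge \|\nu\|_{TV} - 4\epsilon$, so $\|I_\nu\| \ge \|\nu\|_{TV}$ and the map is isometric. In particular $I_\nu = 0$ forces $\nu = 0$ (injectivity), and taking the supremum of $|\langle \nu, g \rangle|$ over $\|g\| \le 1$ recovers $\|\nu\|_{TV}$, which is the ``furthermore'' assertion.

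\emph{Surjectivity.} Given $\Lambda \in C_0(\sU)^*$ I would first decompose it into positive functionals by means of the lattice structure of $C_0(\sU)$: for $0 \le f \in C_0(\sU)$ put $\Lambda^+(f) \triangleq \sup\{ \Lambda(h) : h \in C_0(\sU), \, 0 \le h \le f \}$, verify additivity and positive homogeneity on the positive cone (the Riesz decomposition property), extend linearly via $\Lambda^+(f) = \Lambda^+(f^+) - \Lambda^+(f^-)$, and set $\Lambda^- \triangleq \Lambda^+ - \Lambda$; boundedness of $\Lambda$ makes $\Lambda^\pm$ bounded and positive. Restricting $\Lambda^+$ to $C_c(\sU)$ and applying the classical Riesz representation theorem for positive functionals (\cite[Theorem 7.2]{Fol99}: build an outer measure from $\mu^*(V) = \sup\{ \Lambda^+(f) : f \in C_c(\sU), \, 0 \le f \le 1, \, \supp f \subset V \}$ on open sets $V$, then apply Carath\'eodory) yields a Radon measure $\mu^+$; the uniform bound $\Lambda^+(f) \le \|\Lambda^+\|\,\|f\|$ together with $\sigma$-compactness forces $\mu^+(\sU) < \infty$, so $\mu^+ \in \M(\sU)$. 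Proceeding likewise for $\Lambda^-$ and setting $\nu \triangleq \mu^+ - \mu^-$, one gets $I_\nu = \Lambda$ on $C_c(\sU)$, hence on all of $C_0(\sU)$ by density of $C_c(\sU)$ in $C_0(\sU)$ and continuity.

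The main obstacle is this surjectivity step, and inside it the Riesz representation theorem for positive functionals on $C_c(\sU)$ -- the delicate construction of the representing Radon measure through an outer measure, together with the verification of its regularity and of the identity $\Lambda^+(g) = \int g \, d\mu^+$ first on $C_c(\sU)$ and then on $C_0(\sU)$. One either invokes this as known or reproduces its standard proof. A minor technical point used repeatedly above is that every finite Borel measure on $\sU$ is automatically Radon; this holds because $\sU$, being a locally compact Borel (hence Polish and second-countable) space, is $\sigma$-compact.
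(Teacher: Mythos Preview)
Your outline is a correct sketch of the classical Riesz--Markov--Kakutani argument, but note that the paper does not supply any proof of this theorem: it is stated with an explicit citation to \cite[Theorem~7.17]{Fol99} and invoked as a standard result to set up the duality $C_0(\sU)^*=\M(\sU)$. So there is nothing to compare against---you have simply gone further than the paper, reproducing the standard Folland-style proof (isometry via Hahn--Jordan plus Urysohn, surjectivity via the Jordan decomposition of the functional followed by the Riesz representation for positive functionals on $C_c(\sU)$). Your remarks about $\sU$ being Polish, second countable, $\sigma$-compact, and hence every finite Borel measure being Radon are the right technical justifications in this setting.
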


In this particular case, we have the following identifications \( \sB = C_0(\sU) \) and $\sB^* = \M(\sU)$. Note that, in this case, $\sB^* = \M(\sU)$ is not reflexive. Therefore, we must take into account weakly$^*$-measurable functions when considering duality results. In view of this, we can define the following Banach spaces $L_p\bigl(\mu,C_0(\sU)\bigr)$ and $\rL_q\bigl(\mu,\M(\sU)\bigr)$ with the following norms, respectively\footnote{In \cite{Sal20}, we take $p = 1$ and $q = \infty$ to prove the existence of optimal policies in team decision problems. In this case, the weak$^*$-topology corresponding to this conjugate pair has a nice connection with the Young narrow topology on stochastic kernels. Here, we generalize that topology to arbitrary conjugate pairs $(p, q)$.},
\begin{align}
\|f\|_p &= \left( \int_{\sY} \|f(y)\|^p \, \mu(dy) \right)^{1/p}, \label{eq3} \\
\|\gamma\|_q &= \begin{cases}
\left( \int_{\sY} \|\gamma(y)\|^q_{TV} \, \mu(dy) \right)^{1/q}, & \,\, \text{if} \,\, q < \infty \\
\ess \sup_{y \in \sY} \|\gamma(y)\|_{TV}, & \,\, \text{if} \,\, q = \infty.
\end{cases}
\label{eq4}
\end{align}
Since $C_0(\sU)$ is separable, $L_p\bigl(\mu,C_0(\sU)\bigr)$ is also separable \sy{for $1\leq p < \infty$}. By Theorem~\ref{mainduality}, the topological dual of $L_p\bigl(\mu,C_0(\sU)\bigr)$ is $\rL_q\bigl(\mu,\M(\sU)\bigr)$; that is
\begin{align}
L_p\bigl(\mu,C_0(\sU)\bigr)^* = \rL_q\bigl(\mu,\M(\sU)\bigr). \nonumber
\end{align}
For any $f \in L_p\bigl(\mu,C_0(\sU)\bigr)$ and $\gamma \in \rL_q\bigl(\mu,\M(\sU)\bigr)$, the duality pairing is given by
\begin{align}
\langle\langle \gamma, f \rangle\rangle &= \int_{\sY} \langle \gamma(y), f(y) \rangle \, \mu(dy) \nonumber \\
&= \int_{\sY} \int_{\sU} f(y)(u) \, \gamma(y)(du) \, \mu(dy). \label{eq5}
\end{align}
Hence, we can equip $\rL_q\bigl(\mu,\M(\sU)\bigr)$ with weak$^*$ topology induced by $L_p\bigl(\mu,C_0(\sU)\bigr)$. We refer to this topology as $w^*$-topology. Under this topology, $\gamma_{\lambda} \rightharpoonup^* \gamma$ in $\rL_q\bigl(\mu,\M(\sU)\bigr)$, if
\begin{align}
&\int_{\sY} \int_{\sU} f(y)(u)\, \gamma_{\lambda}(y)(du)\, \mu(dy) \rightarrow \int_{\sY} \int_{\sU} f(y)(u)\, \gamma(y)(du)\, \mu(dy), \nonumber
\end{align}
for all $f \in L_p\bigl(\mu,C_0(\sU)\bigr)$.

\begin{tcolorbox} 
[colback=white!100]
\begin{definition}[Stochastic Kernel] \cite[Definition C.1]{HeLa96}
A mapping $\gamma: \sY \rightarrow \M(\sU)$ is called a \emph{stochastic kernel} from $\sY$ to $\sU$ if, for all $D \in \U$, the mapping $\sY \ni y \mapsto \gamma(y)(D) \in \R$ is $\Y / \B(\R)$-measurable and $\gamma(y) \in \P(\sU)$ for all $y \in \sY$. Let $\Gamma$ denote the set of all stochastic kernels from $\sY$ to $\sU$.
\end{definition}
\end{tcolorbox}

The following result offers an alternative characterization of stochastic kernels.

\begin{lemma}\label{kernel}
The set of stochastic kernels $\Gamma$ is the same as the weakly$^*$ measurable functions from $\sY$ to $\M(\sU)$ whose range is contained in $\P(\sU)$. 
\end{lemma}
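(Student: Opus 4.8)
The plan is to prove the two inclusions separately. For the forward direction, suppose $\gamma \in \Gamma$, i.e. $\gamma : \sY \to \M(\sU)$ is such that $y \mapsto \gamma(y)(D)$ is measurable for every $D \in \U$ and $\gamma(y) \in \P(\sU)$ for all $y$. I need to show $\gamma$ is weakly$^*$-measurable, i.e. that $y \mapsto \langle \gamma(y), g \rangle = \int_{\sU} g \, d\gamma(y)$ is measurable for every $g \in C_0(\sU)$. The standard approach is a functional-form-of-monotone-class / approximation argument: the claim holds for indicators $g = 1_D$ by hypothesis, hence for simple functions by linearity, hence for bounded measurable $g$ as a pointwise limit of simple functions (using that pointwise limits of measurable functions are measurable, together with boundedness of the measures to justify passing the limit through the integral via dominated convergence at each fixed $y$). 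In particular it holds for all $g \in C_0(\sU) \subset C_b(\sU)$. Finally one notes $\gamma \in \rL\bigl(\mu, \M(\sU)\bigr)$ trivially (no integrability is required for membership in $\rL$, only weak$^*$-measurability), and the range is contained in $\P(\sU)$ by assumption.

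For the reverse direction, suppose $\gamma$ is weakly$^*$-measurable with range in $\P(\sU)$; I must recover that $y \mapsto \gamma(y)(D)$ is measurable for every Borel $D \in \U$. By hypothesis $y \mapsto \int_{\sU} g \, d\gamma(y)$ is measurable for every $g \in C_0(\sU)$. Since $\sU$ is locally compact (and, being a Borel space here, we may assume it metrizable and $\sigma$-compact so that $C_0$ separates points and measures appropriately), one upgrades this from $C_0$-test functions to indicators of open sets, then to indicators of Borel sets: for an open set $O$, write $1_O$ as an increasing pointwise limit of functions in $C_c(\sU) \subset C_0(\sU)$ (Urysohn), so $\gamma(y)(O) = \lim_n \int g_n \, d\gamma(y)$ is measurable in $y$ by monotone convergence; then the collection of sets $D$ for which $y \mapsto \gamma(y)(D)$ is measurable is a $\lambda$-system (here boundedness of probability measures makes complementation work: $\gamma(y)(D^c) = 1 - \gamma(y)(D)$) containing the $\pi$-system of open sets, so by Dynkin's $\pi$--$\lambda$ theorem it is all of $\U$. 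Together with $\gamma(y) \in \P(\sU)$, this gives $\gamma \in \Gamma$.

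The main obstacle, and the place to be careful, is the reverse direction: passing from "$\int g \, d\gamma(y)$ measurable for all $g \in C_0$" to "$\gamma(y)(D)$ measurable for all Borel $D$." This requires that $C_0(\sU)$ be rich enough — i.e. that $\sU$ is locally compact and second countable / Polish, which is implicit in the standing "Borel space" convention — so that open sets can be exhausted from below by $C_c$ functions, and it requires a monotone-class (Dynkin) argument rather than a naive linearity argument because indicators are not continuous. One small subtlety worth a sentence: membership in $\rL\bigl(\mu,\M(\sU)\bigr)$ is about weak$^*$-measurability of the $\M(\sU)$-valued map and carries no integrability constraint, so no moment condition on $\gamma$ needs checking; the identification $\Gamma = \rL\bigl(\mu,\M(\sU)\bigr)$-functions-valued-in-$\P(\sU)$ is purely a measurability statement. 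Everything else is routine.
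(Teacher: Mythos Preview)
Your proof is correct and follows the same overall strategy as the paper: both directions are handled by approximation, extending measurability from $C_0$-test functions to indicators of Borel sets (or vice versa). The execution differs slightly in the reverse direction. The paper first extends from $C_0(\sU)$ to $C_b(\sU)$ by approximating each $g \in C_b(\sU)$ pointwise by uniformly bounded elements of $C_0(\sU)$, then passes to indicators of \emph{closed} sets $F$ via the continuous bounded functions $h_n(u)=\max(1-n\,d_{\sU}(u,F),0)$, and finally invokes \cite[Proposition~7.25]{BeSh78} as a black box to conclude that measurability on closed sets suffices. Your route is more self-contained: you stay with $C_c \subset C_0$, approximate indicators of \emph{open} sets from below by Urysohn, and then run the Dynkin $\pi$--$\lambda$ argument explicitly. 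Your approach avoids the intermediate $C_b$ step and the external citation, at the cost of spelling out the $\lambda$-system verification; the paper's approach is terser but leans on Bertsekas--Shreve. Both are standard and equally rigorous.
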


\begin{proof}
Let \(\gamma\) be a weakly$^*$ measurable function from \(\sY\) to \(\M(\sU)\) whose range is contained in \(\P(\sU)\).

To begin, we note that for any continuous and bounded function \(g\) on \(\sU\), the mapping \(x \mapsto \langle \gamma(x), g \rangle \in \mathbb{R}\) is \(\mathcal{X} / \mathcal{B}(\mathbb{R})\)-measurable. This is due to the fact that any such \(g\) can be approximated pointwise by a sequence \(\{g_n\}_{n \geq 1} \subset C_0(\sU)\), where \(\|g_n\| \leq \|g\|\) for all \(n\).

Next, consider any closed set \(F \subset \sU\). The indicator function \(1_F\) can be approximated pointwise by the continuous and bounded functions \(h_n(u) = \max(1 - n d_{\sU}(u, F), 0)\), where \(d_{\sU}\) denotes the metric on \(\sU\) and \(d_{\sU}(u, F) = \inf_{y \in F} d_{\sU}(u, y)\). Consequently, this implies that the mapping \(x \mapsto \gamma(x)(F) \in \mathbb{R}\) is also \(\mathcal{X} / \mathcal{B}(\mathbb{R})\)-measurable for any closed set \(F\) in \(\sU\). Thus, we can conclude that $\gamma$ is a stochastic kernel by \cite[Proposition 7.25]{BeSh78}.

The reverse implication is relatively straightforward.
\end{proof}

In view of Lemma~\ref{kernel}, we can identify the set of stochastic kernels $\Gamma$ via $\rL_q(\mu,\P(\sU))$; that is,
\begin{tcolorbox} 
[colback=white!100]
$$\rL_q(\mu,\P(\sU)) = \Gamma.$$
\end{tcolorbox}  
Let $\P_{_{\leq1}}(\sU)$ denote the set of sub-probability measures in $\M(\sU)$. Then, since $\rL_q\bigl(\mu,\P_{_{\leq1}}(\sU)\bigr)$ is proved to be closed in $\rL_q\bigl(\mu,\M(\sU)\bigr)$ with respect to $w^*$-topology (see \cite{Sal20}) and is a bounded subset, it is (sequentially) compact and metrizable with respect to $w^*$-topology by Banach-Alaoglu Theorem.  
Note that $\rL_q\bigl(\mu,\P(\sU)\bigr)$ is a subset of $\rL_q\bigl(\mu,\P_{_{\leq1}}(\sU)\bigr)$, and so, we can endow $\rL_q\bigl(\mu,\P(\sU)\bigr)$ with relative $w^*$-topology. Unfortunately, $\rL_q\bigl(\mu,\P(\sU)\bigr)$ is not closed in this topology as shown below\footnote{In \cite{Sal20}, we established this result by constructing a counterexample specifically for the case when \(\sU = \mathbb{R}\), which is relatively easy. In Lemma~\ref{nonclosed} above, we extend this by providing a counterexample applicable to any locally compact space \(\sU\). This counterexample draws inspiration from the elegant proof presented in \cite[Lemma 12]{SiBaScMa24}.}.

\begin{remark}
Similar to the weak kernel mean embedding topology, the conjugate pair \(p\) and \(q\) need not be explicitly stated in the definition of the \(w^*\)-topology above. As noted in Remark~\ref{equivalence-weakstar}, the choice of \(p\) and \(q\) does not affect the definition. For \(1 \leq p < \infty\), the \(w^*\)-topologies are equivalent.
\end{remark}

\begin{lemma}\label{nonclosed}
$\rL_q\bigl(\mu,\P(\sU)\bigr)$ is not closed with respect to $w^*$-topology unless $\sU$ is compact.
\end{lemma}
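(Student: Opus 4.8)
The plan is to show that, when $\sU$ is non-compact, one can make a sequence of \emph{constant} stochastic kernels converge in the $w^*$-topology to the constant map equal to the zero measure, which is not a stochastic kernel. First I would build a ``mass-escaping-to-infinity'' sequence on $\sU$: being a locally compact Borel space, $\sU$ is separable and metrizable, hence second countable, and being also locally compact it is $\sigma$-compact, so it admits a compact exhaustion $K_1 \subseteq \intr K_2 \subseteq K_2 \subseteq \cdots$ with $\bigcup_m K_m = \sU$; since $\sU$ is not compact, $K_m \neq \sU$ for every $m$, so I can choose $u_n \in \sU \setminus K_n$ for each $n$ and set $\nu_n \triangleq \delta_{u_n} \in \P(\sU)$ (the Dirac mass at $u_n$). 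The point of the exhaustion is that every compact subset of $\sU$ lies in some $K_m$, so any such sequence $\{u_n\}$ eventually ``leaves'' every compact set.

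Next I would record the vague convergence $\nu_n \to 0$: for $g \in C_0(\sU)$ and $\eps > 0$, the set $\{\,|g| \geq \eps\,\}$ is compact, hence contained in some $K_m$, and since the exhaustion is increasing, $u_n \notin K_n \supseteq K_m$ for all $n \geq m$, so $\bigl|\int_{\sU} g\, d\nu_n\bigr| = |g(u_n)| < \eps$ eventually. Now I define $\gamma_n, \gamma : \sY \to \M(\sU)$ by $\gamma_n(y) \triangleq \nu_n$ and $\gamma(y) \triangleq 0$ for all $y$; both are constant, hence weakly$^*$-measurable, with $\|\gamma_n\|_q = 1$ and $\|\gamma\|_q = 0$, so both belong to $\rL_q(\mu,\M(\sU))$ and moreover $\gamma_n \in \rL_q(\mu,\P(\sU)) = \Gamma$. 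For any $f \in L_p(\mu,C_0(\sU))$ the duality pairing gives
$$
\langle\langle \gamma_n, f \rangle\rangle = \int_{\sY} \int_{\sU} f(y)(u)\, \nu_n(du)\, \mu(dy) = \int_{\sY} f(y)(u_n)\, \mu(dy),
$$
and since $f(y) \in C_0(\sU)$ for each $y$ we have $f(y)(u_n) \to 0$, while $|f(y)(u_n)| \leq \|f(y)\|$ with $y \mapsto \|f(y)\|$ being $\mu$-integrable (as $\mu$ is a probability measure and $p \geq 1$); dominated convergence then yields $\langle\langle \gamma_n, f\rangle\rangle \to 0 = \langle\langle \gamma, f \rangle\rangle$, i.e. $\gamma_n \rightharpoonup^* \gamma$. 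Because the $w^*$-topology is Hausdorff, $\gamma$ is the unique $w^*$-limit, and $\gamma \notin \rL_q(\mu,\P(\sU))$ since the zero measure is not a probability measure, so $\rL_q(\mu,\P(\sU))$ is not $w^*$-closed.

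The hard part will be only the first step --- producing the escaping sequence $\{\nu_n\}$ --- and that is precisely where non-compactness of the locally compact metrizable space $\sU$ is used in an essential way (it guarantees a compact exhaustion no member of which equals $\sU$); everything after that is a routine application of dominated convergence. For completeness I would add the parenthetical remark that when $\sU$ is compact no escaping sequence exists, $\P(\sU)$ is vaguely (equivalently, weakly) closed in $\M(\sU)$, and $\rL_q(\mu,\P(\sU))$ is closed in that case, which is why the statement is qualified by ``unless $\sU$ is compact''. If one wished the counterexample to use non-atomic measures, as in \cite[Lemma 12]{SiBaScMa24}, each $\delta_{u_n}$ could be replaced by a probability measure supported on a small ball around $u_n$, but this is not needed here.
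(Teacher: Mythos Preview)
Your proof is correct and follows the same overall strategy as the paper: build probability measures $\nu_n$ on $\sU$ that converge vaguely to the zero measure, lift them to constant kernels $\gamma_n(y)\equiv\nu_n$, and use dominated convergence to conclude $\gamma_n\rightharpoonup^*0$ in $\rL_q(\mu,\M(\sU))$. The difference lies only in how the escaping sequence is produced. You invoke $\sigma$-compactness of a locally compact separable metric space to obtain a compact exhaustion and take single Dirac masses $\nu_n=\delta_{u_n}$ with $u_n\notin K_n$; the paper instead fixes a countable dense set $\{g_j\}\subset C_0(\sU)$ and builds Ces\`aro-type averages $\nu_n=\frac{1}{n}\sum_{i=1}^n\delta_{u_i}$ with the $u_i$ chosen so that $|g_j(u_i)|\le 1/n$ for $j\le i$ (a construction the paper attributes to \cite[Lemma~12]{SiBaScMa24}). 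Your route is more elementary and arguably cleaner for this lemma; the paper's averaged construction is not needed here but is the kind of device that becomes useful when one wants the $\nu_n$ to be non-atomic or to satisfy additional uniformity properties. One minor point: the paper also proves the compact case (the ``unless'' clause) by observing that $C_0(\sU)=C_b(\sU)$ forces vague and weak convergence to coincide, so the $\sU$-marginal of any $w^*$-limit stays in $\P(\sU)$; your closing parenthetical gestures at this but does not spell it out.
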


\begin{proof}
First note that if $\sU$ is compact, then $C_0(\sU) = C_b(\sU) = C(\sU)$. Moreover, if $\gamma_{\lambda} \rightharpoonup^* \gamma$ with respect to $w^*$-topology, then $\mu \otimes \gamma_{\lambda}(\sY \times \cdot) \rightarrow \mu \otimes \gamma(\sY \times \cdot)$ in vague topology, and so, in weak convergence topology as $C_0(\sU) = C_b(\sU) = C(\sU)$. Since $\mu \otimes \gamma_{\lambda}(\sY \times \cdot) \in \P(\sU)$ for all $\lambda$ and $\P(\sU)$ is a closed subset of $\M(\sU)$ under weak convergence topology, we have $\mu \otimes \gamma(\sY \times \cdot) \in \P(\sU)$. Hence, $\gamma \in \rL_q\bigl(\mu,\P(\sU)\bigr)$. 

Suppose now that $\sU$ is not compact. Since $C_0(\sU)$ is separable, there exists a countable subset $\{g_n\}$ of $C_0(\sU)$ that is dense in $C_0(\sU)$ and determines the vague topology; that is, 
$$
\nu_{\lambda} \rightarrow \nu \,\, \text{vaguely} \,\, \iff \int_{\sU} g_n(u) \, \nu_{\lambda}(du) \rightarrow \int_{\sU} g_n(u) \, \nu(du) \,\, \forall n. 
$$
For any positive integer $n$, we now construct a sequence $\{u_1,\ldots,u_n\}$ such that $|g_j(u_i)| \leq 1/n$ for all $j \leq i$. Indeed, fix any $n$. Since $g_1 \in C_0(\sU)$, there exists a compact set $K_1$ such that $\sup_{u \in K_1^c} |g_1(u)| < 1/n$. Pick $u_1$ from $K_1^c$. Again since $g_2 \in C_0(\sU)$, there exists a compact set $K_2$ such that $K_1 \subset K_2$ and $\sup_{u \in K_2^c} |g_2(u)| < 1/n$. Note that since $K_1 \subset K_2$, we have also $\sup_{u \in K_2^c} |g_1(u)| < 1/n$. Pick $u_2$ from $K_2^c$. Continue this procedure until $u_n$. This sequence satisfies the above condition. For this sequence $\{u_1,\ldots,u_n\}$, we construct the following probability measure: 
$$
\nu_n(\cdot) \triangleq \frac{1}{n} \sum_{i=1}^n \delta_{u_i}(\cdot). 
$$
Let $g \in C_0(\sU)$. Fix any $\varepsilon > 0$. Then, there exists $g_j$ such that $\|g_j-g\| < \varepsilon$. Hence, we have 
\begin{align*}
\left|\int_{\sU} g(u) \, \nu_n(du) \right|  &= \left|\frac{1}{n} \sum_{i=1}^n  g(u_i) \right| \leq \left|\frac{1}{n} \sum_{i=1}^n  g(u_i)-g_j(u_i) \right| + \left|\frac{1}{n} \sum_{i=1}^n  g_j(u_i) \right| \\
&\leq \varepsilon + \frac{1}{n} \sum_{i=1}^{j-1}  \|g_j\| + \frac{1}{n} \sum_{i=j}^n  \frac{1}{n} \rightarrow \varepsilon \,\, \text{as} \,\, n \rightarrow \infty. 
\end{align*}
Since $\varepsilon$ is arbitrary, we have $\int_{\sU} g(u) \, \nu_n(du)  \rightarrow 0$ as $n \rightarrow \infty$ for any $g \in C_0(\sU)$. Hence, $\nu_n$ converges to $0$, where $0(\,\cdot\,)$ denotes the degenerate measure on $\sU$; that is, $0(D)=0$ for all $D \in \U$. 

Using these probability measures, we define $\gamma_n(y)(\,\cdot\,) = \nu_n(\,\cdot\,)$ and $\gamma(y)(\,\cdot\,) = 0(\,\cdot\,)$. Let $f \in L_p\bigl(\mu,C_0(\sU)\bigr)$. Then we have
\begin{align}
&\lim_{n\rightarrow\infty} \int_{\sY} \langle \gamma_n(y), f(y) \rangle \, \mu(dy) 
= \lim_{n\rightarrow\infty} \int_{\sY} \int_{\sU} f(y)(u) \, \nu_n(du) \, \mu(dy) \nonumber \\
&= \int_{\sY} \lim_{n\rightarrow\infty} \int_{\sU} f(y)(u) \, \nu_n(du) \, \mu(dy) \text{ (as $\|f(y)\|$ is $\mu$-integrable)} \nonumber \\
&= 0 \text{ (as $f(y) \in C_0(\sU)$)}. \nonumber
\end{align}
Hence, $\gamma_n \rightharpoonup^* \gamma$, where $\gamma(y) = 0$ for all $y$. But, $\gamma \notin \rL_q\bigl(\mu,\P(\sU)\bigr)$, and so, $\rL_q\bigl(\mu,\P(\sU)\bigr)$ is not closed in $\rL_q\bigl(\mu,\P_{_{\leq1}}(\sU)\bigr)$.
\end{proof}

In view of Lemma~\ref{nonclosed}, $\rL_q\bigl(\mu,\P(\sU)\bigr)$ is relatively (sequentially) compact with respect to $w^*$-topology. 

\subsubsection{The Young Narrow Topology on Stochastic Kernels}

Another widely used weak topology for stochastic kernels in control theory is the Young narrow topology \cite{Val94}. In this framework, stochastic kernels are represented as probability measures defined on a product space while maintaining a fixed marginal in the input space. Consequently, we associate with $\Gamma$ the set of probability measures induced on the product space $\sY \times \sU$, with a fixed marginal $\mu$ on $\sY$. On this set of probability measures over the product space, the weak convergence topology is defined as the Young narrow topology. Alternatively, this topology can also be defined by viewing stochastic kernels as mappings from $\sY$ to $\P(\sU)$ and employing duality with the set of $C_b$-Caratheodory function valued mappings, which is similar to the construction in $w^*$-topology and kernel mean embedding topology. This perspective is the one we adopt in this section. Before introducing this topology, we first establish the equivalence between strongly measurable functions from $\sY$ to $C_0(\sU)$ and the set of $C_0$-Caratheodory functions.

Note that a measurable function $h:\sY\times\sU\rightarrow\R$ is called a Caratheodory function if it is continuous in $u$ for all $y \in \sY$. Let $\H_k$-Caratheodory functions, denoted as $\Car_k(\sY\times\sU)$, be the set of all Caratheodory functions $h$ such that $h(y,\,\cdot\,) \in \H_k$ for all $y \in \sY$, let $C_0$-Caratheodory functions, denoted as $\Car_0(\sY\times\sU)$, be the set of all Caratheodory functions $h$ such that $h(y,\,\cdot\,) \in C_0(\sU)$ for all $y \in \sY$ and let $C_b$-Caratheodory functions, denoted as $\Car_b(\sY\times\sU)$, be the set of all Caratheodory functions $h$ such that $h(y,\,\cdot\,) \in C_b(\sU)$ for all $y \in \sY$.

\begin{lemma}\label{Car}
Let $f:\sX \rightarrow C_0(\sU)$ be a strongly measurable function and define $h_f:\sX\times\sU\rightarrow \R$ as $h_f(x,u) \triangleq f(x)(u)$. Then, $h_f \in \Car_0(\sX\times\sU)$. Conversely, let $h \in \Car_0(\sX\times\sU)$ and define $f_h:\sX\rightarrow C_0(\sU)$ as $f_h(x) = h(x,\,\cdot\,)$. Then, $f_h$ is strongly measurable. 
\end{lemma}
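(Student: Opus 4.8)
The plan is to prove the two directions separately, using the fact that a function into a separable Banach space is strongly measurable if and only if it is (Borel) measurable, since $C_0(\sU)$ is separable (this is Pettis's measurability theorem, which underlies the Bochner theory already invoked in the paper). So throughout I would silently replace "strongly measurable" by "measurable as a map into the metric space $(C_0(\sU),\|\cdot\|)$."

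For the forward direction, suppose $f:\sX\to C_0(\sU)$ is strongly measurable and set $h_f(x,u)=f(x)(u)$. Continuity of $h_f(x,\cdot)=f(x)$ for each fixed $x$ is immediate, since $f(x)\in C_0(\sU)$. The real content is joint measurability of $h_f$ on $\sX\times\sU$. Here I would first reduce to the case of a simple function: pick simple functions $f_n=\sum_i b_i^{(n)}\mathbf{1}_{E_i^{(n)}}$ with $\|f_n(x)-f(x)\|\to 0$ for $\mu$-a.e.\ $x$, and note that $h_{f_n}(x,u)=\sum_i b_i^{(n)}(u)\mathbf{1}_{E_i^{(n)}}(x)$ is jointly measurable because each $b_i^{(n)}\in C_0(\sU)$ is continuous hence Borel, and $x\mapsto\mathbf{1}_{E_i^{(n)}}(x)$ is $\Y$-measurable; a product of measurable functions of separate variables is jointly measurable. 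Then $h_{f_n}\to h_f$ pointwise (in fact uniformly in $u$ for each $x$) on a set of full $\mu$-measure, and a pointwise limit of jointly measurable functions is jointly measurable. (If one wants genuine everywhere equality rather than a.e., one can observe that the sequence $f_n$ can be taken to converge everywhere after modifying $f$ on a null set, or simply note that the statement is only needed up to $\mu$-null sets; I would add a remark to this effect.) This gives $h_f\in\Car_0(\sX\times\sU)$.

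For the converse, let $h\in\Car_0(\sX\times\sU)$ and $f_h(x)=h(x,\cdot)$. First I must check that $f_h$ genuinely maps into $C_0(\sU)$, which is exactly the hypothesis $h(x,\cdot)\in C_0(\sU)$ for all $x$. To prove strong measurability, by Pettis it suffices to show weak measurability, i.e.\ that $x\mapsto \langle \nu, f_h(x)\rangle=\int_{\sU} h(x,u)\,\nu(du)$ is $\Y$-measurable for every $\nu$ in a norm-determining subset of $C_0(\sU)^*=\M(\sU)$; taking $\nu=\delta_u$ for $u\in\sU$, this is just $x\mapsto h(x,u)$, which is measurable since $h$ is jointly measurable. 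But the $\delta_u$ are not norm-determining, so instead I would argue directly with separability: $C_0(\sU)$ is separable, so pick a countable dense set $\{g_m\}\subset C_0(\sU)$; then $x\mapsto\|f_h(x)-g_m\|=\sup_{u\in\sU}|h(x,u)-g_m(u)|$ must be shown measurable, and if $\sU$ is separable we may replace the sup over $u\in\sU$ by a sup over a countable dense subset $D\subset\sU$ using continuity of $u\mapsto h(x,u)-g_m(u)$, making $x\mapsto\|f_h(x)-g_m\|$ a countable supremum of the measurable functions $x\mapsto |h(x,u)-g_m(u)|$, hence measurable. Since this holds for every $m$ and $\{g_m\}$ is dense, $f_h$ is a measurable map into the separable metric space $C_0(\sU)$, hence strongly measurable.

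The main obstacle is the converse direction's passage from joint measurability of $h$ to measurability of the $C_0(\sU)$-valued map $f_h$: one cannot simply test against point evaluations (they are not norm-determining in $\M(\sU)$), so the argument must exploit both the uniform norm structure of $C_0(\sU)$ and the separability of the underlying space $\sU$ to turn the supremum defining $\|f_h(x)-g_m\|$ into a countable one. Everything else (continuity in $u$, the reduction via Pettis, the simple-function approximation in the forward direction) is routine. A secondary technical point, worth a sentence, is the "for all $x$" versus "$\mu$-a.e.\ $x$" distinction in the forward direction, which I would resolve by noting the approximating simple functions can be chosen to converge everywhere after discarding a null set, or by stating the conclusion modulo $\mu$-null sets consistently with the rest of the paper.
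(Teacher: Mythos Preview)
Your proposal is correct. The forward direction is essentially identical to the paper's proof (reduce to simple functions, pass to the limit), and you are more careful than the paper about the ``$\mu$-a.e.'' versus ``for all $x$'' issue.

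For the converse, your route differs from the paper's. You verify Borel measurability of $f_h$ directly by showing that $x\mapsto\|f_h(x)-g_m\|$ is measurable for a countable dense family $\{g_m\}\subset C_0(\sU)$, using separability of $\sU$ to make the supremum countable. The paper instead checks weak measurability in one line: for \emph{every} $\gamma\in\M(\sU)$, the map $x\mapsto\int_{\sU}h(x,u)\,\gamma(du)$ is measurable by Fubini (since $h$ is jointly measurable and $h(x,\cdot)$ is bounded for each $x$), and then Pettis plus separability of $C_0(\sU)$ gives strong measurability. In other words, the gap you worried about---that point evaluations $\delta_u$ do not give full weak measurability---is closed immediately by Fubini rather than by your countable-sup workaround. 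Your argument is perfectly valid and arguably more elementary (it avoids invoking Fubini for signed measures), but the paper's route is shorter and explains why you need not restrict to point masses in the first place.
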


\begin{proof}
It is straightforward to prove that the forward statement is true for any simple function $f:\sX\rightarrow C_0(\sU)$.  Since any strongly measurable function can be approximated via simple functions by definition, the forward statement is also true for any strongly measurable function. 

Conversely, let $h\in\Car_0(\sX\times\sU)$ and define $f_h:\sX\rightarrow C_0(\sU)$ as $f_h(x) = h(x,\,\cdot\,)$. Then, for any $\gamma \in \M(\sU)$, the function $\sX \ni x \mapsto \int_{\sU} f_h(x)(u) \, \gamma(du) \in \R$ is measurable. Hence, $f_h$ is weakly measurable. But since $C_0(\sU)$ is separable, $f_h$ is also strongly measurable by \cite[Theorem 4.2.4-(c)]{PaWi18}.  
\end{proof}

Therefore, any element $f$ of $L_p\bigl(\mu,C_0(\sU)\bigr)$ is an element of $\Car_0(\sX\times\sU)$ satisfying 
\begin{align}\label{cccv}
\int_{\sY} \|f(y)\|^p \, \mu(dx) < \infty.
\end{align}
The converse is also true; that is, if $h \in \Car_0(\sX\times\sU)$ satisfying (\ref{cccv}), then $h$ is an  element of $L_p\bigl(\mu,C_0(\sU)\bigr)$. We note that a similar conclusion can be made for $\H_k$-Caratheodory and $C_b$-Caratheodory functions. 

Let us now define \emph{Young narrow topology} on $\rL_q\bigl(\mu,\M(\sU)\bigr)$ \cite[Definition 4.7.11]{PaWi18}\footnote{Normally, Young narrow topology is defined only on the set of stochastic kernels $\rL_q\bigl(\mu,\P(\sU)\bigr)$. But, here, we extend this definition to the set of all weakly$^*$-measurable functions from $\sY$ to $\M(\sU)$. Moreover, in the original formulation, the topology is induced by the set of $C_b$-Caratheodory functions that are in $L_1(\mu,C_b(\sU))$. Here, we extend this definition to $L_p(\mu,C_b(\sU))$ for any $1 \leq p < \infty$.}. Young narrow topology on $\rL_q\bigl(\mu,\M(\sU)\bigr)$ is the smallest topology for which the mapping
\begin{align}
\rL_q\bigl(\mu,\M(\sU)\bigr) \ni \gamma \mapsto \int_{\sY} \int_{\sU} f(y)(u) \, \gamma(y)(du) \, \mu(dy) \in \R, \nonumber
\end{align}
is continuous for all $f \in L_p(\mu,C_b(\sU))$. Unfortunately, since $L_p(\mu,C_b(\sU))^* \neq \rL_q\bigl(\mu,\M(\sU)\bigr)$ unless $\sU$ is compact, this is not a weak$^*$-topology. Hence, the relation between $w^*$-topology and Young narrow topology is similar to the vague topology and weak topology on finite signed measures. 

In Young narrow topology, one can prove that $\rL_q\bigl(\mu,\P(\sU)\bigr)$ is a closed (as opposed to $w^*$-topology and weak kernel mean embedding topology) and metrizable subset of $\rL_q\bigl(\mu,\M(\sU)\bigr)$ \cite[Proposition 4.7.14]{PaWi18}. However, $\rL_q\bigl(\mu,\P(\sU)\bigr)$  is not relatively (sequentially) compact in this topology unless $\sU$ is compact because of the lack of the duality result (i.e., The Banach-Alaoglu theorem is not applicable in this context). 

\begin{remark}
Although Young narrow topology is not a weak$^*$-topology, it is still true that the choice of \(p\) and \(q\) does not impact the definition of the topology above. For \(1 \leq p < \infty\), the Young narrow topologies are equivalent.
\end{remark}

\subsection{Comparison of Weak Topologies}

Based on the results established earlier, we observe some intriguing characteristics of the three weak topologies in question. First, both the \( w^* \)-topology and the weak  kernel mean embedding topology are relatively (sequentially) compact, which is a vital property for demonstrating the existence of optimal solutions in various contexts. However, it is important to note that neither of these topologies is closed. Conversely, the Young narrow topology possesses the advantage of being closed but lacks relative (sequential) compactness.

Interestingly, to achieve closure in the \( w^* \)-topology and the weak kernel mean embedding topology, a tightness condition is required. This condition effectively ensures uniform concentration of probability measures on compact sets, a necessity highlighted by counterexample in Lemma~\ref{nonclosed}. Similarly, for the Young narrow topology to attain relative (sequential) compactness, the same tightness condition must be satisfied, as implied by Prokhorov's theorem. Consequently, in terms of ensuring the existence of optimal solutions, all three topologies provide comparable benefits.

Despite these similarities, the \( w^* \)-topology presents a notable advantage due to its well-defined duality structure, which can be effectively utilized in various analyses. In addition to this duality, the weak kernel mean embedding topology exhibits a Hilbert space structure when the conjugate pair is set to \( (p,q) = (2,2) \). .

It is interesting to note that the \( w^* \)-topology, weak kernel mean embedding topology, and Young narrow topology are all topologically equivalent on the set of stochastic kernels \( \Gamma \). However, since completeness is not classified as a topological property, this equivalence does not contradict the distinctions outlined above.

\begin{theorem}\label{equivalence}
Let $\{\gamma_{\lambda}\}$ be a sequence in $\Gamma$ and let $\gamma$ be also in $\Gamma$. Then, the following are equivalent:
\begin{itemize}
\item[(i)] $\gamma_{\lambda} \rightharpoonup^* \gamma$ with respect to weak kernel mean embedding topology.
\item[(ii)] $\gamma_{\lambda} \rightharpoonup^* \gamma$ with respect to $w^*$-topology.
\item[(ii)] $\gamma_{\lambda} \rightharpoonup^* \gamma$ with respect to Young narrow topology.
\end{itemize}
\end{theorem}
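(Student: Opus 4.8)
The plan is to prove the four directed implications (iii)$\Rightarrow$(ii), (ii)$\Rightarrow$(i), (i)$\Rightarrow$(ii) and (ii)$\Rightarrow$(iii), which together give the equivalence of all three topologies on $\Gamma$. The first two are immediate from comparing the classes of test functions: for (iii)$\Rightarrow$(ii), since $C_0(\sU)\subseteq C_b(\sU)$, every $f\in L_p(\mu,C_0(\sU))$ is admissible as a test function for the Young narrow topology, so Young narrow convergence implies $w^*$-convergence; and for (ii)$\Rightarrow$(i), the bound $\|h\|_\infty\le M\|h\|_{\H_k}$ for $h\in\H_k$, with $M=\sup_u\sqrt{k(u,u)}<\infty$ by Assumption~\ref{as2}(a), shows that every $f\in L_p(\mu,\H_k)$, viewed through the inclusion $\H_k\subseteq C_0(\sU)$, lies in $L_p(\mu,C_0(\sU))$, the two duality pairings $\langle\langle\cdot,f\rangle\rangle$ agreeing on such $f$ (by the kernel mean embedding property $\langle I_\nu,h\rangle_{\H_k}=\int_\sU h\,d\nu$), so $w^*$-convergence implies weak kernel mean embedding convergence.

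For (i)$\Rightarrow$(ii) I would use that $L_p(\mu,\H_k)$ is dense in $L_p(\mu,C_0(\sU))$ --- a consequence of Assumption~\ref{as2}(c) (density of $\H_k$ in $C_0(\sU)$) applied value-by-value to the $C_0(\sU)$-valued simple functions, which are dense in $L_p(\mu,C_0(\sU))$. Since every $\gamma_\lambda$ and $\gamma$ has dual norm $1$ (all of $\gamma_\lambda(y),\gamma(y)$ being probability measures, hence of total variation norm $1$ $\mu$-a.e.), a standard $\epsilon/3$ estimate then finishes the step: for $f\in L_p(\mu,C_0(\sU))$ and $\epsilon>0$ choose $g\in L_p(\mu,\H_k)$ with $\|f-g\|_{L_p(\mu,C_0(\sU))}<\epsilon$, bound $|\langle\langle\gamma_\lambda,f-g\rangle\rangle|\le\int_\sY\|f(y)-g(y)\|_\infty\,\mu(dy)\le\|f-g\|_{L_p(\mu,C_0(\sU))}$ and likewise for $\gamma$, and use $\langle\langle\gamma_\lambda-\gamma,g\rangle\rangle\to 0$ from hypothesis (i) to conclude $\limsup_\lambda|\langle\langle\gamma_\lambda-\gamma,f\rangle\rangle|\le 2\epsilon$.

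The hard part will be (ii)$\Rightarrow$(iii): upgrading $w^*$-convergence (test functions in $L_p(\mu,C_0(\sU))$) to Young narrow convergence (test functions in $L_p(\mu,C_b(\sU))$). This is exactly where the hypothesis $\gamma\in\Gamma$ is indispensable, since a $w^*$-limit of stochastic kernels can lose mass at infinity (Lemma~\ref{nonclosed}) and the point is that here it does not. Given $f\in L_p(\mu,C_b(\sU))$ and $\epsilon>0$, I would first reduce to bounded $f$ by truncating on $\{y:\|f(y)\|_\infty>N\}$: the resulting error is at most $\int_{\{\|f(y)\|_\infty>N\}}\|f(y)\|_\infty\,\mu(dy)$ for $\gamma_\lambda$ and for $\gamma$ (because all measures involved are probability measures), tending to $0$ as $N\to\infty$ uniformly in $\lambda$; so assume $|f(y)(u)|\le C$ for all $(y,u)$. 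Since $\sU$ is locally compact and second countable, hence $\sigma$-compact, pick $\phi\in C_c(\sU)$ with $0\le\phi\le 1$ and $\int_\sY\int_\sU\phi(u)\,\gamma(y)(du)\,\mu(dy)>1-\epsilon$, which is possible precisely because $\int_\sY\int_\sU 1\,\gamma(y)(du)\,\mu(dy)=1$. Testing the $w^*$-convergence against the constant-in-$y$ function $\phi$ yields $\int_\sY\int_\sU\phi\,\gamma_\lambda(y)(du)\,\mu(dy)\to\int_\sY\int_\sU\phi\,\gamma(y)(du)\,\mu(dy)$, so $\limsup_\lambda\int_\sY\int_\sU(1-\phi)\,\gamma_\lambda(y)(du)\,\mu(dy)=1-\int_\sY\int_\sU\phi\,\gamma(y)(du)\,\mu(dy)<\epsilon$; this uniform tail bound is the crux and would fail for a sub-stochastic limit.

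To conclude I would decompose $f=f\phi+f(1-\phi)$: the product $f\phi$ is a bounded $C_c(\sU)$-valued, hence $C_0(\sU)$-valued (Lemma~\ref{Car}), Caratheodory function, so $\langle\langle\gamma_\lambda,f\phi\rangle\rangle\to\langle\langle\gamma,f\phi\rangle\rangle$ by $w^*$-convergence, while $|\langle\langle\gamma_\lambda,f(1-\phi)\rangle\rangle|\le C\int_\sY\int_\sU(1-\phi)\,\gamma_\lambda(y)(du)\,\mu(dy)<C\epsilon$ for $\lambda$ large and $|\langle\langle\gamma,f(1-\phi)\rangle\rangle|<C\epsilon$; hence $\limsup_\lambda|\langle\langle\gamma_\lambda-\gamma,f\rangle\rangle|\le 2C\epsilon$, and letting $\epsilon\downarrow 0$ gives Young narrow convergence. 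When $\sU$ is compact, $C_0(\sU)=C_b(\sU)$ and this last step is vacuous; alternatively, the equivalence of the $w^*$ and Young narrow topologies on $\Gamma$ is exactly the content of \cite{yuksel2023borkar} and could simply be cited for (ii)$\Leftrightarrow$(iii).
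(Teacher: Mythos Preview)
Your proof is correct but follows a genuinely different route from the paper's.

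The paper proceeds by first reducing each of the three topologies to a ``marginal-on-slabs'' characterization: $\gamma_\lambda \rightharpoonup^* \gamma$ in the respective topology if and only if, for every $E\in\Y$, the finite measures $\mu\otimes\gamma_\lambda(E\times\cdot)$ converge to $\mu\otimes\gamma(E\times\cdot)$ in, respectively, the MMD, vague, or weak topology on $\M(\sU)$. Since $\gamma_\lambda,\gamma\in\Gamma$, these measures all have the same total mass $\mu(E)$; normalizing to probability measures, the paper then invokes \cite[Lemma~5]{SiBaScMa24}, which states that MMD, vague, and weak convergence coincide on $\P(\sU)$, finishing the proof in one stroke.

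Your approach instead compares the test-function spaces directly. Two implications are trivial inclusions, (i)$\Rightarrow$(ii) is a density-plus-uniform-boundedness argument, and the essential step (ii)$\Rightarrow$(iii) is a self-contained tightness argument exploiting $\sigma$-compactness of $\sU$ and the fact that the limit $\gamma$ is a stochastic kernel. This buys you a proof that is independent of the external equivalence result in \cite{SiBaScMa24} and makes fully explicit where the hypothesis $\gamma\in\Gamma$ enters (the tail control via $1-\phi$). The paper's argument is shorter once the cited lemma is in hand, but it packages the same mass-preservation mechanism inside the reference; your version unpacks it. Both are sound.
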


\begin{proof}
We use the following equivalent characterizations of these topologies to establish the result:
\begin{itemize}
\item[(i)] $\gamma_{\lambda} \rightharpoonup^* \gamma$ with respect to weak kernel mean embedding topology if and only if for all $E \in \Y$, $\mu \otimes \gamma_{\lambda}(E \times \cdot) \rightarrow \mu \otimes \gamma(E \times \cdot)$ in MMD topology. 
\item[(ii)] $\gamma_{\lambda} \rightharpoonup^* \gamma$ with respect to $w^*$-topology if and only if for all $E \in \Y$, $\mu \otimes \gamma_{\lambda}(E \times \cdot) \rightarrow \mu \otimes \gamma(E \times \cdot)$ in vague topology. 
\item[(iii)] $\gamma_{\lambda} \rightharpoonup^* \gamma$ with respect to Young narrow topology if and only if for all $E \in \Y$, $\mu \otimes \gamma_{\lambda}(E \times \cdot) \rightarrow \mu \otimes \gamma(E \times \cdot)$ in weak topology. 
\end{itemize}

The proofs of these results are almost identical, therefore, we only prove (i). To prove the forward direction of (i), let $\gamma_{\lambda} \rightharpoonup^* \gamma$ with respect to weak kernel mean embedding topology. Fix any $E \in \Y$. For any $h \in \H_k$, define the following simple function $f \triangleq h \, 1_E$ in $L_p(\mu,\H_k)$. Then, we have
\begin{align*}
\langle \langle \gamma_{\lambda},f \rangle \rangle = \int_{\sY} \int_{E} h(u) \, \gamma_{\lambda}(y)(du) \, \mu(dy)  \rightarrow \langle \langle \gamma,f \rangle \rangle = \int_{\sY} \int_{E} h(u) \, \gamma(y)(du) \, \mu(dy). 
\end{align*} 
Hence, $\mu \otimes \gamma_{\lambda}(E \times \cdot) \rightarrow \mu \otimes \gamma(E \times \cdot)$ in weak MMD topology. Since $\mu \otimes \gamma_{\lambda}(E \times \sU) = \mu \otimes \gamma(E \times \sU) = \mu(E)$, by \cite[Lemma 5 and Remark 6]{SiBaScMa24}, $\mu \otimes \gamma_{\lambda}(E \times \cdot) \rightarrow \mu \otimes \gamma(E \times \cdot)$ in MMD topology. This completes the proof of the forward part. 

For the converse part, for all $E \in \Y$, let $\mu \otimes \gamma_{\lambda}(E \times \cdot) \rightarrow \mu \otimes \gamma(E \times \cdot)$ in MMD topology. Consider any simple function $f \triangleq \sum_{i=1}^k h_i \, 1_{E_i}$ in $L_p(\mu,\H_k)$. Then, we have 
\small
\begin{align*}
\langle \langle \gamma_{\lambda},f \rangle \rangle = \sum_{i=1}^k \int_{\sY} \int_{E_i} h_i(u) \, \gamma_{\lambda}(y)(du) \, \mu(dy)  \rightarrow \langle \langle \gamma,f \rangle \rangle = \sum_{i=1}^k \int_{\sY} \int_{E_i} h_i(u) \, \gamma(y)(du) \, \mu(dy)
\end{align*}
\normalsize
as $\mu \otimes \gamma_{\lambda}(E_i \times \cdot) \rightarrow \mu \otimes \gamma(E_i \times \cdot)$ in MMD topology (and so in weak MMD topology), for all $i=1,\ldots,k$. Since we can approximate any function in $L_p(\mu,\H_k)$ via a sequence of simple functions pointwise, we can conclude that above convergence is true for any $f \in L_p(\mu,\H_k)$, which concludes the proof of the converse part.  

Now using the above characterizations of the topologies, we complete the proof. 

Note that for a fixed \( E \in \mathcal{Y} \), the total masses of the sub-probability measures \( \{\mu \otimes \gamma_{\lambda}(E \times \cdot)\} \) and \( \mu \otimes \gamma(E \times \cdot) \) are both equal to \( \mu(E) \) (assuming, without loss of generality, that \( \mu(E) > 0 \)). As used several times so far, by \cite[Lemma 5]{SiBaScMa24}, the MMD topology, vague topology, and weak convergence topology are equivalent for the set of probability measures. Since the aforementioned sub-probability measures have the same mass as \( \mu(E) \), we can normalize them using the constant \( \mu(E) \) to convert them into probability measures and apply the equivalence result from \cite[Lemma 5]{SiBaScMa24}. This completes the proof.
\end{proof}

In Theorem~\ref{equivalence}, the most significant assumption used in the proof is that \(\gamma\) is also in \(\Gamma\); that is, \(\gamma(y) \in \P(\sU)\) (or in \(\H_{\P}\)) for almost every \(y\). If we do not assume that \(\gamma \in \Gamma\) and only assume that the sequence \(\{\gamma_{\lambda}\}\) converges in the respective topologies, then the topologies are not equivalent. For instance, in \(w^*\)-topology or weak kernel mean embedding topology, we may converge to sub-probability measures, which is not the case in the Young narrow topology. Therefore, the equivalence of these topologies will no longer hold once we consider finite signed measure-valued kernels: \(\gamma: \sY \rightarrow \M(\sU) \,\, \text{or} \,\, \H_{\M}\).

In this section, we finally present the result that formalizes the tightness discussion outlined above. Before stating the theorem, however, we will first provide a precise mathematical definition of tightness. Let $\G$ be a subset of $\P(\sU)$. It is called \emph{tight} if for any $\varepsilon > 0$, there exists a compact set $K_{\varepsilon} \subset \sU$ such that 
$$
\sup_{\nu \in \G} \nu(\sU \setminus K_{\varepsilon}) < \varepsilon. 
$$
In other words, tightness ensures that distributions in $\G$ uniformly concentrates on compact sets.

\begin{theorem}\label{tightness}
Let $\G$ be a subset of $\P(\sU)$ and let $q \in (1,\infty)$.
\begin{itemize}
\item[(i)] The set $\rL_q(\mu,\G)$ has the closure with respect to the $w^*$-topology in $\Gamma$ if $\G$ is tight. 
\item[(ii)] The set $L_q(\mu,\H_{\G})$ has the closure with respect to the weak kernel mean embedding topology in $L_q(\mu,\H_{\P}) \equiv \Gamma$  if  $\G$ is tight. 
\item[(iii)] The set $\rL_q(\mu,\G)$ is relatively (sequentially) compact with respect to the Young narrow topology if $\G$ is tight.
\end{itemize}
The converses of above statements hold when we additionally assume that $\sU$ is complete (i.e., a Polish space).
\end{theorem}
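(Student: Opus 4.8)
The plan is to prove the three ``if'' implications through one mechanism --- tightness forbids probability mass from escaping ``to infinity'' in the limit --- and to obtain the three converses by contraposition, producing a sequence of constant kernels that loses mass in the limit precisely because $\G$ fails to be tight. Throughout I would work with the ``disintegrated'' descriptions of the three topologies recorded in the proof of Theorem~\ref{equivalence}: $\gamma_\lambda\rightharpoonup^*\gamma$ in the $w^*$- (resp. weak kernel mean embedding, resp. Young narrow) topology iff $\mu\otimes\gamma_\lambda(E\times\cdot)\to\mu\otimes\gamma(E\times\cdot)$ vaguely (resp. in MMD, resp. weakly) for every $E\in\Y$; and I would first note that, since these equivalences are proved via the simple test functions $h\,1_E$, they remain valid when the limit is only a sub-stochastic kernel.

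\emph{Forward directions of (i) and (ii).} Since $\rL_q\bigl(\mu,\P_{_{\leq1}}(\sU)\bigr)$ is $w^*$-closed and bounded, hence compact and metrizable by Banach--Alaoglu, and $L_q\bigl(\mu,\H_{\P_{_{\leq1}}}\bigr)$ is closed in the weak kernel mean embedding topology by Lemma~\ref{Lq-closed} (this uses $q\in(1,\infty)$), it suffices to show that any sequential limit $\gamma$ of a sequence in $\rL_q(\mu,\G)$ (resp. $L_q(\mu,\H_{\G})$) --- which a priori lies in the closed sub-probability space --- satisfies $\gamma(y)(\sU)=1$ for $\mu$-a.e. $y$, so that $\gamma\in\Gamma$. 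I would reduce (ii) to (i) as in Lemma~\ref{nonclosed-MMD}: weak MMD convergence of $\mu\otimes\gamma_\lambda(E\times\cdot)$, together with the uniform bound on total masses and the density of $\H_k$ in $C_0(\sU)$, upgrades to vague convergence via \cite{SiBaScMa24}; thus in both settings $\int_E\int_\sU g\,d\gamma_\lambda(y)\,\mu(dy)\to\int_E\int_\sU g\,d\gamma(y)\,\mu(dy)$ for all $E\in\Y$ and $g\in C_0(\sU)$. Now fix $\varepsilon>0$, take the compact $K_\varepsilon$ from tightness of $\G$, and (by Urysohn's lemma on the locally compact space $\sU$) pick $h_\varepsilon\in C_c(\sU)$ with $0\le h_\varepsilon\le1$ and $h_\varepsilon\equiv 1$ on $K_\varepsilon$. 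Since $\int_\sU h_\varepsilon\,d\gamma_\lambda(y)\ge\gamma_\lambda(y)(K_\varepsilon)\ge1-\varepsilon$ for $\mu$-a.e. $y$, passing to the limit gives $\int_E\gamma(y)(\sU)\,\mu(dy)\ge\int_E\int_\sU h_\varepsilon\,d\gamma(y)\,\mu(dy)\ge(1-\varepsilon)\mu(E)$ for every $E\in\Y$. As $y\mapsto\gamma(y)(\sU)=\sup_n\int_\sU h_n\,d\gamma(y)$ (with $h_n\in C_0(\sU)$, $0\le h_n\uparrow1$) is measurable and $\gamma(y)\in\P_{_{\leq1}}(\sU)$, testing with $E=\{y:\gamma(y)(\sU)<1-\varepsilon\}$ forces $\mu(E)=0$; letting $\varepsilon\downarrow0$ yields $\gamma\in\Gamma$, so both closures stay in $\Gamma$.

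\emph{Forward direction of (iii).} Here I would use Prokhorov's theorem directly. Identifying $\gamma$ with $\mu\otimes\gamma\in\P(\sY\times\sU)$, it suffices that $\{\mu\otimes\gamma:\gamma\in\rL_q(\mu,\G)\}$ be tight: given $\varepsilon>0$, choose a compact $K^\sY_\varepsilon\subset\sY$ with $\mu(\sY\setminus K^\sY_\varepsilon)<\varepsilon/2$ (a probability measure on a Borel space is tight) and the compact $K^\sU_\varepsilon\subset\sU$ from tightness of $\G$; then $\mu\otimes\gamma\bigl((\sY\times\sU)\setminus(K^\sY_\varepsilon\times K^\sU_\varepsilon)\bigr)\le\mu(\sY\setminus K^\sY_\varepsilon)+\int_\sY\gamma(y)(\sU\setminus K^\sU_\varepsilon)\,\mu(dy)<\varepsilon$ uniformly in $\gamma\in\rL_q(\mu,\G)$. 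By Prokhorov this family is relatively compact in $\P(\sY\times\sU)$; and since the marginal map $\pi\mapsto\pi(\cdot\times\sU)$ is weakly continuous, every subsequential weak limit has $\sY$-marginal $\mu$ and hence disintegrates as $\mu\otimes\gamma$ with $\gamma\in\Gamma$, which is exactly relative sequential compactness of $\rL_q(\mu,\G)$ in the Young narrow topology.

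\emph{Converses (assuming $\sU$ Polish).} Being locally compact, separable and metrizable, $\sU$ is $\sigma$-compact, so fix an exhaustion $K_1\subset\intr K_2\subset K_2\subset\cdots$ with $\bigcup_nK_n=\sU$. Arguing by contraposition, suppose $\G$ is not tight: there are $\varepsilon_0>0$ and $\nu_n\in\G$ with $\nu_n(\sU\setminus K_n)\ge\varepsilon_0$. Let $\gamma_n\equiv\nu_n$ be the constant kernels; they lie in $\rL_q(\mu,\G)\cap L_q(\mu,\H_{\G})$. By Banach--Alaoglu a subsequence converges, $\gamma_{n_k}\rightharpoonup^*\gamma$, in the $w^*$- (resp. weak kernel mean embedding) topology with $\gamma\in\rL_q\bigl(\mu,\P_{_{\leq1}}(\sU)\bigr)$; taking $E=\sY$ in the characterization, $\nu_{n_k}\to\bar\nu:=\int_\sY\gamma(y)(\cdot)\,\mu(dy)$ vaguely. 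Choosing $g_m\in C_c(\sU)$ with $1_{K_m}\le g_m\le1$ and $\supp g_m\subset K_{m+1}$, and using $\nu_{n_k}(K_{m+1})\le\nu_{n_k}(K_{n_k})\le1-\varepsilon_0$ for $n_k\ge m+1$, one gets $\bar\nu(K_m)\le\int_\sU g_m\,d\bar\nu=\lim_k\int_\sU g_m\,d\nu_{n_k}\le1-\varepsilon_0$ for every $m$, hence $\bar\nu(\sU)\le1-\varepsilon_0<1$; so $\gamma(y)(\sU)<1$ on a set of positive $\mu$-measure, i.e. $\gamma\notin\Gamma$, although $\gamma$ lies in the corresponding closure of $\rL_q(\mu,\G)$ (resp. $L_q(\mu,\H_{\G})$), contradicting (i) (resp. (ii)). For (iii): a subsequential Young-narrow limit $\gamma\in\Gamma$ would force $\nu_{n_k}\to\mu\otimes\gamma(\sY\times\cdot)\in\P(\sU)$ weakly, so $\{\nu_{n_k}\}$ is tight by Prokhorov (this is where completeness of $\sU$ enters); but a compact $K$ with $\sup_k\nu_{n_k}(\sU\setminus K)<\varepsilon_0$ satisfies $K\subset K_{N+1}$ for some $N$ (the open sets $\intr K_{m+1}$ cover $K$), whence $\nu_{n_k}(\sU\setminus K)\ge\nu_{n_k}(\sU\setminus K_{n_k})\ge\varepsilon_0$ for $n_k\ge N+1$, a contradiction. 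The step I expect to be the main obstacle is precisely this converse: manufacturing from mere non-tightness a genuinely convergent sequence whose limit leaves $\Gamma$, which forces the compact exhaustion (hence local compactness plus completeness) and careful tracking of the escaping mass $\varepsilon_0$; by contrast, the reduction of (ii) to (i), the measurability of $y\mapsto\gamma(y)(\sU)$, and the forward parts of (i)--(iii) are comparatively routine.
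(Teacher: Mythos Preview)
Your argument is correct, but it follows a genuinely different route from the paper's in two places.

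For the forward direction of (ii), the paper does not work with the sequential characterization at all. Instead it passes to the \emph{convex hull} of $L_q(\mu,\H_{\G})$, uses reflexivity of $L_q(\mu,\H_k)$ for $q\in(1,\infty)$ to identify the weak$^*$ and weak topologies, and then invokes the standard fact that for convex sets weak closedness equals norm closedness. This reduces the problem to showing that a \emph{norm}-convergent sequence in $\conv L_q(\mu,\H_{\G})$ has its limit in $L_q(\mu,\H_{\P})$, which is handled by extracting a $\mu$-a.e.\ MMD-convergent subsequence and applying Prokhorov pointwise in $y$. Your direct mass-preservation argument via the Urysohn function $h_\varepsilon$ and the vague-convergence characterization is more elementary and bypasses the convex-hull/reflexivity machinery entirely; it also makes transparent why tightness is exactly the right hypothesis. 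The paper's route, on the other hand, isolates the role of $q\in(1,\infty)$ more sharply and connects to the closedness proof of Lemma~\ref{Lq-closed}.

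For the converses, the paper argues \emph{directly} rather than by contraposition: assuming the closure stays in $\Gamma$ (resp.\ relative compactness in Young narrow), it takes an \emph{arbitrary} sequence $\{\nu_n\}\subset\G$, forms the constant kernels, extracts a convergent subsequence whose limit lies in $\Gamma$ by hypothesis, and concludes that $\nu_{n_k}$ converges weakly to a probability measure; hence $\G$ is relatively weakly compact and Prokhorov (using completeness of $\sU$) yields tightness. This is shorter than your contrapositive construction with the compact exhaustion and the explicit $\varepsilon_0$-mass tracking, and it pinpoints the use of Polishness in a single invocation of Prokhorov. Your approach, while more laborious, has the merit of producing a concrete witness (the escaping sequence) for the failure of closure when tightness breaks down.
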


\begin{proof}
We begin with the proof of item (iii). First, suppose that $\G$ is tight. Then, define the following set:
$$
\C_{\G} \triangleq \{\mu \otimes \gamma: \gamma \in \rL_q(\mu,\G)\} \subset \P(\sY\times\sU),
$$
where $\mu \otimes \gamma(dy,du) \triangleq \mu(dy) \, \gamma(y)(du)$. One can prove that $\C_{\G}$ is tight in the sense of the Definition~4.7.20 in \cite{PaWi18}. Hence, by \cite[Theorem 4.7.22]{PaWi18}, $\rL_q(\mu,\G)$ is relatively (sequentially) compact with respect to the Young narrow topology. Conversely, suppose that $\rL_q(\mu,\G)$ is relatively (sequentially) compact with respect to the Young narrow topology and $\sU$ is complete. This implies that $\G$ is also relatively  (sequentially) compact with respect to the weak topology. Indeed, let $\{\nu_n\}$ be a sequence in $\G$ and for each $n$, define the constant function $\gamma_n(y) = \nu_n$ for all $y \in \sY$. Since $\rL_q(\mu,\G)$ is relatively (sequentially) compact with respect to the Young narrow topology, there exists a subsequence $\{\gamma_{n_k}\}$ of $\{\gamma_n\}$ that converges with respect to the Young narrow topology. This implies that subsequence $\{\nu_{n_k}\}$ of $\{\nu_n\}$ also converges with respect to the weak topology. Hence, $\G$ is relatively  (sequentially) compact with respect to the weak topology. By Prohorov's theorem \cite[Theorem 1.4.12]{HeLa03}, $\G$ is tight. This completes the proof of item (iii).

The proofs of items (i) and (ii) are somewhat similar. Since the proof of item (i) is also rather standard, we present only the proof of item (ii). First, suppose that $\G$ is tight. Let $\conv L_q(\mu,\H_{\G})$ be the convex hull of $L_q(\mu,\H_{\G})$, which is a subset of $L_q\bigl(\mu,\H_{\P}\bigr)$ as $L_q\bigl(\mu,\H_{\P}\bigr)$ is convex. Since $q \in (1,\infty)$, $L_q\bigl(\mu,\H_{k}\bigr)$ is reflexive. Therefore, weak$^*$-topology (i.e., kernel mean embedding topology) coincides with the weak-topology. Moreover, for convex sets, closedness with respect to the weak-topology and norm-topology are equivalent. Therefore, it is sufficient to prove that $\conv L_q(\mu,\H_{\G})$ is closed in norm-topology.

Let the sequence $\{\gamma_{\lambda}\}$ in $\conv L_q(\mu,\H_{\G})$ be convergent in norm-topology. Since $L_q\bigl(\mu,\H_{\P_{_{\leq1}}}\bigr)$  is closed with respect to the weak kernel mean embedding topology (and so in norm-topology), the limit $\gamma$ of $\{\gamma_{\lambda}\}$ must be in $L_q\bigl(\mu,\H_{\P_{_{\leq1}}}\bigr)$. Note that convergence in mean implies almost sure convergence along a sub-sequence, and so, there exists a sub-sequence $\{\gamma_{\theta}\}$ of $\{\gamma_{\lambda}\}$ such that for $\mu$-almost everywhere, we have 
$$
\|\gamma_{\theta}(y) - \gamma(y) \|_{\M} \rightarrow 0.
$$
Fix any $y$ that satisfies above convergence result, where these $y$ values have probability $1$. Let $\{h_n\}$ be a dense subset of $\H_k$. Since $\|\cdot\| \leq M \, \|\cdot\|_{\H_k}$, this set $\{h_n\}$ is also dense in $\H_k$ with respect to the sup-norm. Since $\H_k$ is assumed to be dense in $C_0(\sU)$, we can also conclude that $\{h_n\}$ is dense in $C_0(\sU)$. Since $\{h_n\}$ is dense in $C_0(\sU)$ and 
$$
\int_{\sY} \int_{\sU} h_n(u) \, \gamma_{\theta}(y)(du) \, \mu(dy) \rightarrow \int_{\sY} \int_{\sU} h_n(u) \, \gamma(y)(du) \, \mu(dy), \,\, \forall n,
$$
$I^{-1} \circ \gamma_{\theta}(y)$ converges vaguely to $I^{-1} \circ \gamma(y)$. Note that $\{I^{-1} \circ \gamma_{\theta}(y)\} \subset \G$ and $\G$ is tight, and so, $\{I^{-1} \circ \gamma_{\theta}(y)\}$ has a subsequence which converges to some probability measure in weak convergence topology (and also in vague topology) by Prohorov's theorem \cite[Theorem 1.4.12]{HeLa03}. This implies that $I^{-1} \circ \gamma(y)$ is a probability measure. Hence, $\gamma \in L_q(\mu,\H_{\P})$. Therefore, the set $\conv L_q(\mu,\H_{\G})$ has the closure with respect to the norm-topology (or equivalently, with respect to the weak kernel mean embedding topology) in $L_q(\mu,\H_{\P}) \equiv \Gamma$. This implies that $L_q(\mu,\H_{\G})$ has the closure with respect to the weak kernel mean embedding topology in $L_q(\mu,\H_{\P}) \equiv \Gamma$. Conversely, suppose that the set $L_q(\mu,\H_{\G})$ has the closure with respect to the weak kernel mean embedding topology in $L_q(\mu,\H_{\P}) \equiv \Gamma$ and $\sU$ is complete. Let $\{\nu_n\}$ be a sequence in $\G$ and for each $n$, define the constant function $\gamma_n(y) = I \circ \nu_n$ for all $y \in \sY$. Since $\rL_q(\mu,\H_{\G})$ is relatively (sequentially) compact with respect to the weak kernel mean embedding topology, there exists a subsequence $\{\gamma_{n_k}\}$ of $\{\gamma_n\}$ that converges with respect to the weak kernel mean embedding topology. This implies that subsequence $\{\nu_{n_k}\}$ of $\{\nu_n\}$ also converges with respect to the weak MMD topology. According to \cite[Lemma 5]{SiBaScMa24}, we also have convergence in the weak convergence topology, as the limit measure of ${\nu_{n_k}}$ is a probability measure, given that the closure of $L_q(\mu,\H_{\G})$ lies within $L_q(\mu,\H_{\P})$. Hence, $\G$ is relatively  (sequentially) compact with respect to the weak convergence topology. By Prohorov's theorem \cite[Theorem 1.4.12]{HeLa03}, $\G$ is tight. This completes the proof of item (ii).
\end{proof}

\section{Strong Kernel Mean Embedding Topology for Stochastic Kernels}

Weak topologies are primarily introduced to prove the existence of optimal policies in decision-making problems under the most general and mild assumptions. Their mathematical structure is well-suited for demonstrating such existence results, as weak topologies simplify the handling of compactness and continuity properties in infinite-dimensional spaces. Consequently, it is both natural and effective to define weak topologies on the set of policies. For instance, the Young narrow topology was originally defined to establish the existence of optimal relaxed policies in continuous-time deterministic optimal control problems \cite{warga2014optimal}. Similarly, the \(w^*\)-topology on the set of stochastic kernels was developed in \cite[Section 2.4]{BoArGh12} to analyze randomized Markov policies and prove the existence of optimal policies for continuous-time stochastic control problems under the average cost optimality criterion. More recently, in \cite{Sal20}, the \(w^*\)-topology was employed to establish the existence of optimal policies in team decision problems (see also \cite{yuksel2023borkar,SaYu22}). We note that the weak kernel mean embedding topology is equivalent to both the \( w^* \)-topology and the Young narrow topology for proving the existence of optimal solutions, as established in Theorem~\ref{equivalence} and Theorem~\ref{tightness}. Consequently, all results demonstrated in the aforementioned papers also hold under the weak kernel mean embedding topology.

However, while weak topologies are sufficient for establishing the existence of optimal solutions, they are often inadequate for addressing more complex tasks, such as approximation, robustness, and learning in the presence of model uncertainties. These challenges require stronger analytical tools to handle perturbations and approximations effectively. In such cases, strong norm topologies become indispensable. Unlike weak topologies, strong norm topologies provide the precision necessary to address issues of approximation and robustness. As a result, it is both natural and practical to define strong norm topologies on the set of stochastic dynamics rather than on policies. For example, in Markov decision problems, it is more appropriate to use a weak topologies on the set of policies to establish the existence of an optimal policy. However, for tasks like learning, approximation, and robustness analysis, it is essential to apply a strong norm topologies to the set of transition probabilities that describe the system dynamics. These two kind of topologies serve different purposes.

Therefore, in this section, we introduce the strong kernel mean embedding topology. This topology offers a robust framework for addressing model uncertainties, extending our analysis beyond existence results to include practical requirements such as approximation, robustness, and learning. It complements weak topologies by bridging the gap between theoretical foundations and real-world applications.  

\begin{tcolorbox} 
[colback=white!100]
\begin{definition}[Strong Kernel Mean Embedding Topology]\label{StrongKMET}
Strong kernel mean embedding topology on $$L_q(\mu,\H_{\P}) = I \circ \Gamma \, (\equiv \Gamma)$$ is the relative strong norm topology of $L_q(\mu,\H_{k})$; that is, $\gamma_{\lambda} \rightarrow \gamma$ in $L_q(\mu,\H_{\P})$, if
\begin{align*}
\begin{cases}
\int_{\sY} \|\gamma_{\lambda}(y)-\gamma(y)\|_{\M}^q \, \mu(dy) \rightarrow 0, & \,\, \text{if} \,\, q < \infty \\
\ess \sup_{y \in \sY} \|\gamma_{\lambda}(y)-\gamma(y)\|_{\M} \rightarrow 0, & \,\, \text{if} \,\, q = \infty.
\end{cases}
\end{align*}
\end{definition}
\end{tcolorbox}

Conditions requiring the relative compactness of \( L_q(\mu, \H_{\P}) \) with respect to the strong kernel mean embedding topology is generally quite demanding \cite{DiMa99} typically requring generalizations of the Arzela-Ascoli theorem in a Bochner space formulation, which then imposes uniform regularity conditions. As a result, this topology is not well-suited for certain applications, such as on establishing the existence of optimal solutions in decision problems. However, a trade-off will occur in view of continuity at the expense of compactness: In the context of learning and robustness, where uncertainty in the system dynamics is a central concern, the strong kernel mean embedding topology proves to be the appropriate choice for analyzing the set of stochastic dynamics. Along this direction, in the next section, we examine the robustness problem in Markov decision processes. For this problem, even without imposing specific conditions on the perturbed models, we can establish robustness results: if the perturbed models converge to the true model under the strong kernel mean embedding topology, the optimal policies of the perturbed models remain nearly optimal for the original model. In contrast, achieving similar results under the weak kernel mean embedding topology requires additional assumptions on the stochastic dynamics of the perturbed models. These extra conditions significantly limit the applicability of the results to real-world problems, underscoring the practical importance of the strong kernel mean embedding topology in such settings.

\begin{remark}
The weak convergence topology on the set of probability measures can be metrized using various metrics, one of which is the bounded-Lipschitz metric, denoted by \(\rho_{BL}\) \cite[p. 394]{Dud89}. Using this metric, we can define the following strong topology, similar to the strong kernel mean embedding topology, on the set of stochastic kernels \(\Gamma\): \(\gamma_{\lambda} \rightarrow \gamma\) in \(\Gamma\) if  
\[
\begin{cases}  
\int_{\sY} \rho_{BL}(\gamma_{\lambda}(y),\gamma(y))^q \, \mu(dy) \rightarrow 0, & \text{if } q < \infty, \\  
\ess \sup_{y \in \sY} \rho_{BL}(\gamma_{\lambda}(y),\gamma(y)) \rightarrow 0, & \text{if } q = \infty.  
\end{cases}  
\]  
However, an important distinction exists between this topology and the strong kernel mean embedding topology. The latter is a relative topology, meaning it can also be defined on the set of functions from \(\sY\) to \(\M(\sU)\) (not necessarily only for stochastic kernels) since the MMD topology is well-defined for all finite signed measures. In contrast, the weak convergence topology is not metrizable on the set of all finite signed measures, making it impossible to extend the above definition to such functions. Therefore, in this case, the machinery of Bochner space theory, which is typically useful for establishing results on approximation, robustness, and learning using tools from functional analysis, cannot be applied. More specifically, the $L_p$-space theory of vector-valued functions, which is quite powerful, is not applicable here. 

To extend this notion to the set of functions from \(\sY\) to \(\M(\sU)\), one would need to use the total variation norm in place of the bounded-Lipschitz metric. However, this results in a topology that is significantly stronger than the strong kernel mean embedding topology and is often impractical, particularly when dealing with empirical estimates of stochastic dynamics. Consequently, the strong kernel mean embedding topology offers a distinct advantage over topologies defined using the bounded-Lipschitz metric or total variation norm, making it more suitable for practical applications.
\end{remark}

Below, we show that the weak kernel mean embedding topology does not imply the strong kernel mean embedding topology on the set of stochastic kernels. Clearly, the strong kernel mean embedding topology is stronger than the weak kernel mean embedding topology on the set of all $q$-Bochner integrable functions from $\sY$ to $\H_k$. However, if the range of the function is contained in $\H_{\P}$, the situation may differ. The following result demonstrates that this is not the case.

\begin{theorem}\label{strongStrongerthanWeakThm}
Let $\gamma_{\lambda} \rightharpoonup^* \gamma$ in $L_q(\mu,\H_{\P})$. Then, it is not necessarily the case that $\gamma_{\lambda} \rightarrow \gamma$ in strong kernel mean embedding topology.
\end{theorem}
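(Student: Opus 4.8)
\emph{Sketch of a counterexample.} The claim is a non-implication, so the plan is to construct a single counterexample: stochastic kernels $\gamma_n,\gamma\in L_q(\mu,\H_{\P})$ with $\gamma_n\rightharpoonup^*\gamma$ in the weak kernel mean embedding topology but $\|\gamma_n-\gamma\|_q\not\to 0$. The reason such an example must exist, and what it should look like, is the following: on a \emph{single} fibre $\gamma_n(y)$ the MMD norm metrizes weak convergence on $\P(\sU)$ \cite[Lemma 5]{SiBaScMa24}, so no obstruction can come from the individual $\P(\sU)$-valued coordinates; the gap must instead come from the integration over $y$ hidden in the pairing of Definition~\ref{defnWeakKMET}, i.e.\ from the classical fact that $\sigma(L_\infty,L_1)$-convergence of scalar functions does not imply $L_q$-norm convergence. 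A ``Rademacher'' oscillation in $y$ between two fixed probability measures is therefore the natural candidate.

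Concretely I would take $\sY=[0,1]$ with $\mu$ Lebesgue measure, $\sU=\R$ with a Gaussian kernel $k$ (which satisfies Assumption~\ref{as2}), and fix two distinct probability measures, e.g.\ $P_0=\delta_0$ and $P_1=\delta_1$. Let $A_n=\bigcup_{j=0}^{2^{n-1}-1}\bigl[\,2j\,2^{-n},\,(2j+1)2^{-n}\bigr)$, put $r_n\triangleq 1_{A_n}$, and define
\[
\gamma_n(y)\triangleq r_n(y)\,P_1+\bigl(1-r_n(y)\bigr)\,P_0,\qquad \gamma(y)\triangleq\tfrac12\bigl(P_0+P_1\bigr).
\]
These are genuine stochastic kernels with uniformly bounded $\H_k$-embeddings, hence lie in $L_q(\mu,\H_{\P})$ for every relevant $q$.

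For $\gamma_n\rightharpoonup^*\gamma$ I would take an arbitrary $f\in L_p(\mu,\H_k)$ and write, using the reproducing property, $\langle\gamma_n(y),f(y)\rangle_{\H_k}=b(y)+r_n(y)\bigl(a(y)-b(y)\bigr)$ where $a(y)=\langle f(y),I_{P_1}\rangle_{\H_k}$ and $b(y)=\langle f(y),I_{P_0}\rangle_{\H_k}$. Since $\max(|a(y)|,|b(y)|)\le M\,\|f(y)\|_{\H_k}$ and $\mu$ is a probability measure, $a-b\in L_1(\mu)$, so $\langle\langle\gamma_n,f\rangle\rangle-\langle\langle\gamma,f\rangle\rangle=\int_\sY r_n(y)\bigl(a(y)-b(y)\bigr)\,\mu(dy)-\tfrac12\int_\sY\bigl(a(y)-b(y)\bigr)\,\mu(dy)$, and everything reduces to the single scalar claim: $\int_\sY r_n h\,d\mu\to\tfrac12\int_\sY h\,d\mu$ for every $h\in L_1(\mu)$. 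This is the one step that needs care, and it is where I expect the main (in fact only) work to be: it holds because $r_n-\tfrac12$ has zero integral over every dyadic interval of level $<n$, hence $\int(r_n-\tfrac12)g\,d\mu=0$ for any dyadic step function $g$, and dyadic step functions are dense in $L_1(\mu)$ while $\|r_n-\tfrac12\|_\infty=\tfrac12$; an $\varepsilon$-approximation then finishes it. (Alternatively one can verify the equivalent characterization used in the proof of Theorem~\ref{equivalence}: for each $E\in\Y$, $\mu\otimes\gamma_n(E\times\cdot)=\bigl(\int_E r_n\,d\mu\bigr)P_1+\bigl(\int_E(1-r_n)\,d\mu\bigr)P_0\to\tfrac{\mu(E)}{2}(P_0+P_1)=\mu\otimes\gamma(E\times\cdot)$, which is convergence of finitely supported measures with fixed atoms, hence convergence in the MMD topology.)

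Finally, for the failure of strong convergence: for every $y$ and $n$ we have $\gamma_n(y)-\gamma(y)=\pm\tfrac12(P_1-P_0)$, so $\|\gamma_n(y)-\gamma(y)\|_{\M}=\tfrac12\|I_{P_1}-I_{P_0}\|_{\H_k}=:c$, and $c>0$ because $I$ is injective (density of $\H_k$ in $C_0(\sU)$) and $P_0\ne P_1$. Hence $\int_\sY\|\gamma_n(y)-\gamma(y)\|_{\M}^q\,\mu(dy)=c^q$ for all $n$ when $q<\infty$, and $\ess \sup_{y\in\sY}\|\gamma_n(y)-\gamma(y)\|_{\M}=c$ when $q=\infty$; in either case this does not tend to $0$, so $\gamma_n\not\to\gamma$ in the strong kernel mean embedding topology, which proves the claim. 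All steps other than the $\sigma(L_\infty,L_1)$-convergence of $\{r_n\}$ are direct computations with fixed two-point-supported measures.
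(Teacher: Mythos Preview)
Your proposal is correct and essentially the same as the paper's proof: the paper also takes $\sY=[0,1]$ with Lebesgue measure and builds $\gamma_n$ by oscillating between $\delta_0$ and $\delta_1$ along a square-wave partition of $[0,1]$, with limit $\gamma\equiv\tfrac12(\delta_0+\delta_1)$, invoking the Riemann--Lebesgue lemma for the weak$^*$ convergence and the constant-in-$y$ MMD gap for the failure of strong convergence. The only cosmetic differences are that the paper uses $\sU=\{0,1\}$ and a $1/(2n)$-mesh partition (citing Riemann--Lebesgue directly), whereas you use $\sU=\R$ with a Gaussian kernel and dyadic Rademacher functions with an explicit density argument; both routes are equally valid.
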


\begin{proof}
It suffices to provide a counterexample. We build on an example from \cite{YukselOptimizationofChannels} (used in a different context) in the following. Let $\sY=[0,1]$, $\sU=\{0,1\}$, and $\mu$ be the Lebesgue measure (uniform distribution) on $[0,1]$. Let
\begin{eqnarray}\label{star2}
L_{nk}= \left[\frac{2k-2}{2n},\frac{2k-1}{2n}\right), \quad
    R_{nk}= \left[\frac{2k-1}{2n},\frac{2k}{2n}\right)
\end{eqnarray}
and define the {\it square wave} function
\[
  h_n(y) = \sum_{k=1}^n\bigl(1_{\{y \in L_{nk}\}} - 1_{\{y \in R_{nk}\}} \bigr).
\]
Define further $f_n(y)=h_n(y)+1$ and $\tilde f_n(y)=1-h_n(y)$. 

 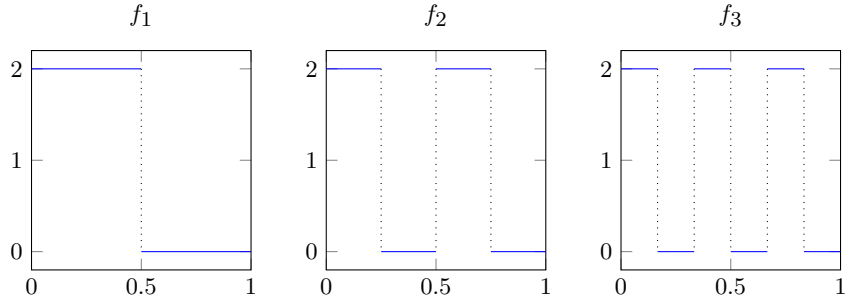
\begin{figure}[H]
      \centering
      \begin{tikzpicture}
        \begin{groupplot}[group style={group size=3 by 1},xmin=0,xmax=1,ymin=-0.2,ymax=2.2,height=4.5cm,width=4.5cm,ytick={0,1,2},xtick={0,0.5,1},no markers]
          \nextgroupplot[title={$f_1$}];
          \addplot[domain=0:0.5,blue] {2};
          \addplot[domain=0.5:1,blue] {0};
          \draw[dotted] (axis cs:0.5,2) -- (axis cs:0.5,0);
          \nextgroupplot[title={$f_2$}];
          \addplot[domain=0:0.25,blue] {2};
          \addplot[domain=0.25:0.5,blue] {0};
          \addplot[domain=0.5:0.75,blue] {2};
          \addplot[domain=0.75:1,blue] {0};
          \draw[dotted] (axis cs:0.25,2) -- (axis cs:0.25,0);
          \draw[dotted] (axis cs:0.5,2) -- (axis cs:0.5,0);
          \draw[dotted] (axis cs:0.75,2) -- (axis cs:0.75,0);
          \nextgroupplot[title={$f_3$}];
          \addplot[domain=0:1/6,blue] {2};
          \addplot[domain=1/6:2/6,blue] {0};
          \addplot[domain=2/6:3/6,blue] {2};
          \addplot[domain=3/6:4/6,blue] {0};
          \addplot[domain=4/6:5/6,blue] {2};
          \addplot[domain=5/6:1,blue] {0};
          \draw[dotted] (axis cs:1/6,2) -- (axis cs:1/6,0);
          \draw[dotted] (axis cs:2/6,2) -- (axis cs:2/6,0);
          \draw[dotted] (axis cs:3/6,2) -- (axis cs:3/6,0);
          \draw[dotted] (axis cs:4/6,2) -- (axis cs:4/6,0);
          \draw[dotted] (axis cs:5/6,2) -- (axis cs:5/6,0);
        \end{groupplot}
      \end{tikzpicture}
      \caption{Functions $f_1$, $f_2$, and $f_3$.}\label{fig:squarewave}
    \end{figure}
    
By the Riemann-Lebesgue lemma (\cite{WhZy77}, Thm.\ 12.21), for any $g: \sY \rightarrow \R$, we have
$$
\int_{\sY} g(y) \, h_n(y) \, \mu(dy) \rightarrow 0. 
$$
Hence 
$$
\frac{1}{2} \, \int_{\sY} g(y) \, f_n(y) \, \mu(dy), \, \frac{1}{2} \, \int_{\sY} g(y) \, \tilde f_n(y) \, \mu(dy) \rightarrow \frac{1}{2} \, \int_{\sY} g(y) \, \mu(dy). 
$$    
Let $\gamma_n(1|y) \triangleq 1_{\{y \in L_n^k\}}$ and $\gamma_n(0|y) \triangleq 1_{\{y \in R_n^k\}}$. For any $h \in L_p(\mu,\H_k)$, by above discussion,  we have 
\begin{align*}
&\int_{\sY} \int_{\sU} h(y)(u) \, \gamma_n(y)(du) \, \mu(dy) = \int_{\sY} h(y)(1) \, f_n(y) \, \mu(dy) + \int_{\sY} h(y)(0) \, \tilde f_n(y) \, \mu(dy) \\
&\phantom{xxxxxxxxxxxxx}\rightarrow \int_{\sY} \frac{1}{2} \{h(y)(1)+h(y)(0)\} \, \mu(dy) \triangleq \int_{\sY} \int_{\sU} h(y)(u) \, \gamma(y)(du) \, \mu(dy),
\end{align*}
where $\gamma(y) \triangleq 1/2 \, \delta_0 + 1/2 \, \delta_1$. Hence, $\gamma_{n} \rightharpoonup^* \gamma$ in $L_q(\mu,\H_{\P})$. However, for all $n$ and for any $y \in \sY$, we have 
\begin{align*}
\|\gamma_n(y) - \gamma(y)\|_{\M} = \frac{1}{2} \, \|\delta_0-\delta_1\|_{\M} > 0. 
\end{align*}
Hence $\gamma_n$ does not converge to $\gamma$ in strong kernel mean embedding topology.
\end{proof}

As demonstrated in the previous theorem, convergence in the weak kernel mean embedding topology does not imply convergence in the strong kernel mean embedding topology for stochastic kernels. However, under the assumption of bounded and equicontinuous densities for the probability measures in the stochastic kernels, it is possible to establish this implication from the weak to the strong version. Notably, similar conditions are also used to ensure the robustness of Markov decision processes under the weak kernel mean embedding topology in the next section.

\begin{theorem}\label{WeakImpliesStrong}
Let $\gamma_{\lambda} \rightharpoonup^* \gamma$ in $L_q(\mu,\H_{\P})$. Moreover, there exists a fixed probability measure $\xi_{\abss}$ on $\sU$ such that $\gamma_{\lambda}(y)(\cdot), \gamma(y)(\cdot) \ll \xi_{\abss}$, for all $y \in \sY$ and for all $\lambda$ with the following density functions $f_{\lambda}(u,y)$ and $f(u,y)$. We suppose that 
\begin{itemize}
\item [(a)] The set of densities $\{f_{\lambda}(u,y)\}$ is bounded, i.e.,
$$
\sup_{\lambda} \sup_{(u,y) \in \sU\times\sY} |f_{\lambda}(u,y)| < \infty.
$$
\item [(b)] The set $\{f_{\lambda}(\cdot,\cdot)\}_{\lambda}$ is equicontinuous with respect to $(u,y)$. 
\end{itemize}
Then, $\gamma_{\lambda} \rightarrow \gamma$ in strong kernel mean embedding topology for $q < \infty$.
\end{theorem}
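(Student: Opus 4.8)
The plan is to transfer the convergence from the kernels to their densities, upgrade it from a weak form to uniform convergence on compact sets using the equicontinuity hypothesis, and then estimate the strong (MMD) distance by an $L^q$-distance of the densities.

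\emph{Step 1 (pass to densities).} By the equivalent description of the weak kernel mean embedding topology established in the proof of Theorem~\ref{equivalence}, $\gamma_{\lambda}\rightharpoonup^*\gamma$ is the same as $\mu\otimes\gamma_{\lambda}(E\times\cdot)\to\mu\otimes\gamma(E\times\cdot)$ in the MMD topology for every $E\in\Y$; since these measures all have total mass $\mu(E)$ and, by \cite[Lemma 5]{SiBaScMa24}, MMD convergence agrees with weak convergence on $\P(\sU)$, this gives
\begin{align*}
\int_E\int_{\sU}h(u)\,f_{\lambda}(u,y)\,\xi_{\abss}(du)\,\mu(dy)\;\longrightarrow\;\int_E\int_{\sU}h(u)\,f(u,y)\,\xi_{\abss}(du)\,\mu(dy)
\end{align*}
for all $h\in C_b(\sU)$ and $E\in\Y$. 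Since $C_b(\sU)$ is dense in $L_1(\xi_{\abss})$, the indicators span a dense subspace of $L_1(\mu)$, and $\{f_{\lambda}\}$ is uniformly bounded by hypothesis (a), the displayed convergence extends to all test functions in $L_1(\mu\otimes\xi_{\abss})$; that is, $f_{\lambda}\to f$ in the weak$^*$ topology of $L_\infty(\mu\otimes\xi_{\abss})$.

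\emph{Step 2 (upgrade to uniform convergence on compacts).} Since $\mu$ and $\xi_{\abss}$ are Borel probability measures on Polish spaces, they are tight; choose increasing sequences of compact sets $K_n\uparrow\sY$ and $K_n'\uparrow\sU$ with $\mu(\sY\setminus K_n)\to0$ and $\xi_{\abss}(\sU\setminus K_n')\to0$. By hypotheses (a) and (b), for each $n$ the family $\{f_{\lambda}|_{K_n\times K_n'}\}_{\lambda}$ is uniformly bounded and equicontinuous, hence relatively compact in $C(K_n\times K_n')$ by the Arzel\`a--Ascoli theorem. Any subsequential uniform limit on $K_n\times K_n'$ must coincide $\mu\otimes\xi_{\abss}$-a.e.\ with $f$ there, by the weak$^*$ convergence of Step~1; thus the limit point is unique and the whole sequence converges, $f_{\lambda}\to f$ uniformly on each $K_n\times K_n'$ (in particular $f$ has a continuous, hence $B$-bounded, representative on $\bigcup_n K_n\times K_n'$, which I adopt, $B$ being the uniform density bound).

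\emph{Step 3 ($L^q$ estimate).} Using the inequality $\|\nu_1-\nu_2\|_{\M}\le M\,\|\nu_1-\nu_2\|_{TV}$ with $M=\sup_u\sqrt{k(u,u)}<\infty$, the identity $\|\gamma_{\lambda}(y)-\gamma(y)\|_{TV}=\int_{\sU}|f_{\lambda}(u,y)-f(u,y)|\,\xi_{\abss}(du)$, and Jensen's inequality (as $\xi_{\abss}$ is a probability measure and $t\mapsto t^q$ is convex),
\begin{align*}
\int_{\sY}\|\gamma_{\lambda}(y)-\gamma(y)\|_{\M}^q\,\mu(dy)\;\le\;M^q\int_{\sY\times\sU}|f_{\lambda}(u,y)-f(u,y)|^q\,\xi_{\abss}(du)\,\mu(dy).
\end{align*}
Split the right-hand integral over $K_n\times K_n'$ and its complement: the first piece is at most $\sup_{K_n\times K_n'}|f_{\lambda}-f|^q$, which tends to $0$ as $\lambda\to\infty$ by Step~2; the second is at most $(2B)^q\,(\mu\otimes\xi_{\abss})\big((K_n\times K_n')^c\big)\le(2B)^q\big[\mu(\sY\setminus K_n)+\xi_{\abss}(\sU\setminus K_n')\big]$. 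Letting $\lambda\to\infty$ and then $n\to\infty$ yields $\int_{\sY}\|\gamma_{\lambda}(y)-\gamma(y)\|_{\M}^q\,\mu(dy)\to0$, which is exactly $\gamma_{\lambda}\to\gamma$ in the strong kernel mean embedding topology for $q<\infty$.

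The main obstacle is Step~2: converting the purely weak information coming from the weak kernel mean embedding topology into genuine uniform-on-compacts convergence of the densities. This is precisely where equicontinuity is indispensable (the counterexample of Theorem~\ref{strongStrongerthanWeakThm} has indicator densities and convergence fails there), and the delicate points are (i) localizing via tightness of $\mu$ and $\xi_{\abss}$ so that Arzel\`a--Ascoli applies, (ii) checking that the tails of the $\gamma_{\lambda}(y)$ are small uniformly in both $\lambda$ and $y$ --- which is what the uniform density bound in (a) buys --- and (iii) identifying every subsequential limit with the given $f$ in order to pass from subsequences back to the full sequence.
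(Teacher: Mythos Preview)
Your proof is correct and shares the decisive ingredient with the paper's (Arzel\`a--Ascoli applied to the uniformly bounded, equicontinuous family of densities), but the architecture differs. The paper argues by contradiction: assuming a subsequence $\{\gamma_\tau\}$ stays bounded away from $\gamma$ in the strong norm, Arzel\`a--Ascoli extracts a further subsequence with $f_\theta\to\tilde f$ uniformly on compacts, the weak hypothesis forces $\tilde\gamma=\gamma$ $\mu$-a.s., and then pointwise MMD convergence plus dominated convergence (the MMD norm of a difference of probability measures being uniformly bounded by Assumption~\ref{as2}(a)) yields the contradiction. You instead give a direct argument: Step~1 upgrades the weak hypothesis to weak$^*$ convergence of densities in $L_\infty(\mu\otimes\xi_{\abss})$; Step~2 combines Arzel\`a--Ascoli with uniqueness of that weak$^*$ limit to obtain uniform convergence on compacts; Step~3 uses $\|\cdot\|_{\M}\le M\|\cdot\|_{TV}$ together with Jensen to bound the strong norm by an $L^q$-density distance, which you control by a compact piece plus a tight tail. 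Your route is more quantitative and makes the role of hypothesis~(a) explicit in the tail estimate, while the paper's is shorter and bypasses the density-extension argument of your Step~1 entirely. One minor caution in Step~2: two subsequential uniform limits on $K_n\times K_n'$ are a priori unique only $\mu\otimes\xi_{\abss}$-a.e., so your claimed uniform convergence of the full sequence need not hold literally unless you take $K_n$ and $K_n'$ inside $\supp\mu$ and $\supp\xi_{\abss}$ (which tightness allows); since Step~3 only integrates against $\mu\otimes\xi_{\abss}$, this imprecision does not affect the conclusion.
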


\begin{proof}
Suppose that the result is not true. Then, there exists a subsequence $\{\gamma_{\tau}\}$ of $\{\gamma_{\lambda}\}$ such that the kernel mean embedding strong norm of the difference $\gamma_{\tau} - \gamma$ is \sy{strictly greater than some $\epsilon > 0$} for all $\tau$. Under the stated assumptions and by Arzela-Ascoli theorem \cite[Theorem 4.44]{Roy06}, there exists a subsequence $\{f_{\theta}\}$ of $\{f_{\tau}\}$ such that $f_{\theta} \rightarrow \tilde f$ uniformly over compact subsets, for some continuous and bounded density function $\tilde f$. It then follows that $\gamma_{\theta} \rightarrow \tilde \gamma$ with respect to MMD topology uniformly over compact subsets, where $\tilde \gamma(y)(du) \triangleq \tilde f(u,y) \, \xi_{\abss}(du)$. Since $\gamma_n \rightharpoonup^* \gamma$, we can also conclude that $\gamma = \tilde \gamma$ $\mu$-almost-surely. As uniform convergence over compact sets is stronger than pointwise convergence, we have
$$
\|\gamma_{\theta}(y)(\cdot) - \gamma(y)(\cdot)\|_{\M} \rightarrow 0 \,\, \text{$\mu$-almost-surely}. 
$$
This implies the convergence in strong kernel mean embedding topology for $q < \infty$ of $\gamma_{\theta}$ to $\gamma$ as the MMD norm of the difference of two probability measures is bounded via Assumption~\ref{as2}-(a). This contradicts with the initial assumption, which completes the proof. 
\end{proof}

\begin{remark}
Although condition (b) in the theorem is not overly restrictive, as explained in Remark~\ref{equicont-assumption}, it can be slightly relaxed using the Arzelà-Ascoli theorem for vector-valued functions \cite[Theorem 1.6.14]{PaWi18}. Specifically, we aim to show that the mappings \(\gamma_{\lambda}: \sY \rightarrow \H_{\P}\) are relatively sequentially compact under the topology of uniform convergence on compacta. To achieve this, two requirements must be met: 
\begin{enumerate}
    \item[(i)] The set \(\{\gamma_{\lambda}: \sY \rightarrow \H_{\P}\}\) should be equicontinuous.
    \item[(ii)] For any \(y \in \sY\), the set \(\{\gamma_{\lambda}(y)(\cdot)\} \subset \H_{\P}\) must be relatively compact.
\end{enumerate}
Since the MMD topology is equivalent to the weak convergence topology on \(\H_{\P}\) \cite[Lemma 5 and Remark 6]{SiBaScMa24}, the latter condition is equivalent to the tightness of \(\{\gamma_{\lambda}(y)(\cdot)\}\) by Prokhorov's theorem \cite[Theorem 1.4.12]{HeLa03}. Consequently, condition (b) in the theorem can be replaced with the following slightly relaxed version:
\begin{itemize}
    \item [(b')] The set \(\{f_{\lambda}(u,\cdot)\}_{\lambda, u}\) is equicontinuous with respect to \(y\). Moreover, for any \(y \in \sY\), the set \(\{f_{\lambda}(\cdot,y)\}_{\lambda}\) is equicontinuous with respect to \(u\).
\end{itemize}
The proof of this relaxation is straightforward, albeit more detailed, so we omit it here. The first condition in (b') directly implies (i), while the second condition ensures (ii). It is clear, however, that conditions (b) and (b') are not significantly different in practice.
\end{remark}

{\color{black}
\begin{remark}
The result in Theorem~\ref{WeakImpliesStrong} can alternatively be established under a more direct assumption:
\[
f_{\lambda}(\cdot,y) \to f(\cdot,y) \quad \xi_{\abss}\text{-a.s.}
\]
for all \( y \in \sY \), except on a set of $\mu$-measure zero. In that case, Scheffe's theorem~\cite[Theorem 16.12]{Bil95} can be directly applied to obtain the desired conclusion. It is worth emphasizing that, under the assumptions (a) and (b) of Theorem~\ref{WeakImpliesStrong}, the Arzela-Ascoli theorem guarantees uniform convergence of densities over compact subsets along a subsequence. However, such uniform convergence on compact sets does not, in general, imply almost sure pointwise convergence along the original sequence -- simple counterexamples can be constructed. In Theorem~\ref{WeakImpliesStrong}, we additionally assume that the stochastic kernels converge in the weak kernel mean embedding topology, which ensures that the corresponding sequence of densities admits a unique cluster point almost surely. Combined with the Arzela-Ascoli theorem, this implies almost sure pointwise convergence, which is precisely the condition required by Scheffe's theorem. Hence, the approach based directly on Scheffe's theorem yields a stronger result than ours. Nevertheless, the assumptions in Theorem~\ref{WeakImpliesStrong} are easier to verify in practice, which motivates stating the weaker, but more accessible, version presented in this paper.
\end{remark}
}

\section{Applications}\label{applications}

\subsection{Robustness of Markov Decision Processes under Strong Kernel Topologies}

In this section, we provide an application on compactness, approximability and robustness of policies obtained for an approximate MDP model under the strong and weak formulations.

A discrete-time Markov decision process (MDP) can be described by a five-tuple
\begin{align}
\bigl( \sX, \sA, p, c, \kappa_0 \bigr), \nonumber
\end{align}
where Borel space (i.e., Borel subset of complete and separable metric space) $\sX$ denotes the \emph{state} space and compact Borel space $\sA$ denotes the \emph{action} space.  The \emph{stochastic kernel} $p(\,\cdot\,|x,a)$ denotes the \emph{transition probability} of the next state given that previous state-action pair is $(x,a)$. The \emph{one-stage cost} function $c$ is a measurable function from $\sX \times \sA$ to $\R_{+}$. The \emph{initial state distribution} is $\kappa_0$.

Define the history spaces $\sH_0 = \sX$ and
$\sH_{t}=(\sX\times\sA)^{t}\times\sX$, $t=1,2,\ldots$ endowed with their
product Borel $\sigma$-algebras generated by $\B(\sX)$ and $\B(\sA)$. A
\emph{policy} is a sequence $\pi=\{\pi_{t}\}$ of stochastic kernels
on $\sA$ given $\sH_{t}$. The set of all policies is denoted by $\Pi$.
Let $\Phi$ denote the set of stochastic kernels $\varphi$ on $\sA$ given $\sX$, and let $\rF$ denote the set of all measurable functions $f$ from $\sX$ to $\sA$. A \emph{randomized Markov} policy is a sequence $\pi=\{\pi_{t}\}$ of stochastic kernels on $\sA$ given $\sX$. A \emph{deterministic Markov} policy is a sequence of stochastic kernels $\pi=\{\pi_{t}\}$ on $\sA$ given $\sX$ such that $\pi_{t}(\,\cdot\,|x)=\delta_{f_t(x)}(\,\cdot\,)$ for some $f_t \in \mathbb{F}$, where $\delta_z$ denotes the point mass at $z$. The set of randomized and deterministic Markov policies are denoted by $\sR\sM$ and $\sM$, respectively. A \emph{randomized stationary} policy is a
constant sequence $\pi=\{\pi_{t}\}$ of stochastic kernels on $\sA$ given $\sX$ such that
$\pi_{t}(\,\cdot\,|x)=\varphi(\,\cdot\,|x)$ for all $t$ for some
$\varphi \in \Phi$. A \emph{deterministic stationary} policy is a constant sequence of stochastic kernels $\pi=\{\pi_{t}\}$ on $\sA$ given $\sX$ such that $\pi_{t}(\,\cdot\,|x)=\delta_{f(x)}(\,\cdot\,)$ for all $t$ for some
$f \in \mathbb{F}$. The set of randomized and deterministic stationary policies are identified with the sets $\Phi$ and $\mathbb{F}$, respectively.

According to the Ionescu Tulcea theorem (see \cite{HeLa96}), an initial distribution $\kappa_0$ on $\sX$ and a policy $\pi$ define a unique probability measure $P_{\mu}^{\pi}$ on $\sH_{\infty}=(\sX\times\sA)^{\infty}$.
The expectation with respect to $P_{\kappa_0,p}^{\pi}$ is denoted by $\cE_{\kappa_0,p}^{\pi}$. The cost functions to be minimized in this paper are the $\beta$-discounted cost given by
\begin{align}
J(\pi;\kappa_0,p) &= \cE_{\kappa_0,p}^{\pi}\biggl[\sum_{t=0}^{\infty}\beta^{t}c(x_{t},a_{t})\biggr]. \nonumber
\end{align}
With this notation, the discounted optimal value function of the control problem is defined as
\begin{align}
J^*(\kappa_0,p) &\coloneqq \inf_{\pi \in \Pi} J(\pi,x). \nonumber
\end{align}
A policy $\pi^{*}$ is said to be optimal if $J(\pi^{*};\kappa_0,p) = J^*(\kappa_0,p)$. Under fairly mild conditions, the set $\rF$ of deterministic stationary policies contains an optimal policy
for discounted cost (see, e.g., \cite{HeLa96,FeKaZa12}). Hence, in the remainder of this section, we only focus on deterministic stationary policies. 

We impose the assumptions below on the components of the Markov decision process.

\begin{assumption}
\label{MDP:as1}
\begin{itemize}
\item [  ]
\item [(a)] The one-stage cost function $c$ is in $C_b(\sX \times \sA)$.
\item [(b)] $\sX$ is locally compact metric space and $\sA$ is finite.
\end{itemize}
\end{assumption}

Note that if $\sX = \R^n$ for some $n \geq 1$, then $\sX$ is locally compact metric space under Euclidean or one of the equivalent metrics. Moreover, by \cite[Theorem 3.2]{SaYuLi16JMAA}, under the weak continuity of the transition kernel $p$ (i.e., $p(\cdot|x_n,a_n) \rightarrow p(\cdot|x,a)$ in weak convergence topology for any $(x_n,a_n) \rightarrow (x,a)$ in $\sX\times\sA$), any MDP with compact action space can be approximated by MDPs with finite action spaces. Therefore, finite action space assumption is not restrictive in terms of optimality.

Let $\Gamma$ denote the set of stochastic kernels from $\sX\times\sA$ to $\sX$. Let $\kappa_{\abss} \in \P(\sX)$. Define 
$$
\sM_{\abss} \triangleq \{p \in \Gamma: p(\cdot|x,a) \ll \kappa_{\abss} \,\, \forall (x,a) \in \sX\times\sA \} \subset \Gamma. 
$$
Hence, if $p \in \sM_{\abss}$, then there exists a measurable function $f:\sX\times\sX\times\sA \rightarrow \R_{+}$ such that for any $(x,a) \in \sX\times\sA$, we have $p(dy|x,a) = f(y|x,a) \, \kappa_{\abss}(dy)$, where $f(\cdot|x,a)$ is Radon-Nikodym derivative of $p(\cdot|x,a)$ with respect to $\kappa_{\abss}$. 

In the robustness problem, we have an unknown true transition kernel \( p \) and a sequence of approximations \(\{p_n\}\), which are constructed from available data. We know that \( p_n \) converges to \( p \) as \( n \to \infty \) in a certain sense. Our goal is to establish the following results:
\begin{itemize}
\item[(P1)] $\lim_{n\rightarrow\infty} J^*(\kappa_0,p_n) = J^*(\kappa_0,p)$; that is, the optimal value of the approximate model converges to the optimal value of the true model.
\item[(P2)] For each $n$, let $\pi_n^*$ be the optimal deterministic stationary policy of the approximate model. Then, $\lim_{n\rightarrow\infty} J(\pi_n^*;\kappa_0,p) = J^*(\kappa_0,p)$; that is, the optimal policy of the approximate model is nearly optimal for the true model.
\end{itemize} 
To establish above results, we put the following conditions on transition kernels. 

\begin{assumption}
\label{MDP:as2}
\begin{itemize}
\item [  ]
\item [(a)] $\{p_n\}$ and $p$ are weakly continuous.
\item [(b)] $p \in \sM_{\abss}$ and $p_n \in \sM_{\abss}$ for all $n$. 
\item [(c)] $\kappa_0 \ll \kappa_{\abss}$.
\end{itemize}
\end{assumption}

Under Assumption~\ref{MDP:as2}-(b), for all $(x,a)$, $p(\cdot|x,a)$ and $p_n(\cdot|x,a)$ have densities with respect to $\kappa_{\abss}$. Let us denote these densities by $f(y|x,a)$ and $f_n(y|x,a)$, respectively. To establish the robustness results, we first use the strong kernel mean embedding topology. Then, by putting further conditions on above densities, we also establish the same results under weak kernel mean embedding topology. Indeed, weak topologies are introduced and used to establish the existence of optimal policies in stochastic control. Therefore, they are in general quite weak for establishing robustness results. That is why we need strong kernel mean embedding topology. However, under some further regularity conditions on the densities, we can still establish the robustness result under weak kernel mean embedding topology.

\subsubsection{Robustness Under Strong Kernel Mean Embedding Topology}\label{secStrongKernelRobustness}

In this section, we suppose that Assumptions~\ref{MDP:as1} and \ref{MDP:as2} hold. Define $\mu_{\reff}(dx,a) \triangleq \kappa_{\abss}(dx) \otimes \sUnif(a)$, where $\sUnif$ is the uniform distribution on $\sA$. Let 
$$k:\sX \times \sX \rightarrow \R$$
be a positive semi-definite kernel that induces the reproducing kernel Hilbert space $\H_k$. We suppose that Assumption~\ref{as2} is true for RKHS $\H_k$ and its kernel $k$. Therefore, we can endow the set of transition kernels $\Gamma$ with strong kernel mean embedding topology with $p=\infty$; that is, $\Gamma \subset L_{\infty}(\mu_{\reff},\H_k)$. To establish the robustness results, we suppose that the following convergence condition holds:

\begin{align}
\|p_n-p\|_{L_{\infty}(\mu_{\reff},\H_k)} \triangleq \esssup_{(x,a) \in \sX \times \sA} \|p_n(\cdot|x,a) - p(\cdot|x,a)\|_{\M} \rightarrow 0, \label{conv1}
\end{align}
where the essential supremum is taken with respect to $\mu_{\reff}$. Hence, in view of this convergence, one can prove that there exists a set $C \subset \sX\times\sA$ such that $\mu_{\reff}(C)=1$ and 
$$
\sup_{(x,a) \in C} \|p_n(\cdot|x,a) - p(\cdot|x,a)\|_{\M} \rightarrow 0. 
$$
Since $\sA$ is finite, one can take the set $C$ in the following form $D \times \sA$, where $\kappa_{\abss}(D) =1$. Hence, 
$$
\sup_{(x,a) \in D\times\sA} \|p_n(\cdot|x,a) - p(\cdot|x,a)\|_{\M} \rightarrow 0. 
$$
Note that for any $(x,a) \in \sX\times\sA$, since $p_n(\cdot|x,a) \ll \kappa_{\abss}$ and $\kappa_{\abss}(D)=1$, we have 
$$
p_n(B|x,a) = \int_{B^c} f_n(y|x,a) \, \kappa_{\abss}(dy) = 0,
$$ 
when $B \subset D^c$. Hence, under $p_n$, the states of the control model cannot escape to $D^c$. The same is true for $p$. Additionally, since $\kappa_0 \ll \kappa_{\abss}$, with probability $1$, the initial state cannot lie in $D^c$. These two facts imply that our control model (both with $p_n$ and $p$) is equivalent to the control model with state space $D$. Therefore, without loss of generality, we can suppose that 
$$
\sup_{(x,a) \in \sX\times\sA} \|p_n(\cdot|x,a) - p(\cdot|x,a)\|_{\M} \rightarrow 0;
$$
that is, $p_n$ converges to $p$ uniformly with respect to MMD topology. Since MMD topology is equivalent to weak convergence topology under Assumption~\ref{as2} \cite[Lemma 5 and Remark 6]{SiBaScMa24}, we have uniform convergence with respect to the weak convergence topology. But this uniform convergence implies the following: 
$$
p_n(\cdot|x_n,a_n) \rightarrow p(\cdot|x,a) \,\, \text{weakly whenever} \,\, (x_n,a_n) \rightarrow (x,a).  
$$
The final condition, known as continuous convergence, is equivalent to uniform convergence over compact sets when the kernels are weakly continuous. Thus, condition (\ref{conv1}) is slightly stronger than necessary for establishing robustness. We now summarize the implications of the above discussion within the context of the robustness problem.

\begin{theorem}{\cite[Theorem 4.2, Theorem 4.4]{KaYu20}}\label{Ali_thm}
Suppose that $\{p_n\}$ and $p$ are weakly continuous and $p_n(\cdot|x_n,a_n) \rightarrow p(\cdot|x,a) \,\, \text{weakly whenever} \,\, (x_n,a_n) \rightarrow (x,a)$. Then, we have
\begin{itemize}
\item[(a)] $\lim_{n\rightarrow\infty} J^*(\kappa_0,p_n) = J^*(\kappa_0,p)$. 
\item[(b)] $\lim_{n\rightarrow\infty} J(\pi_n^*;\kappa_0,p) = J^*(\kappa_0,p)$ if $\pi_n^*$ is optimal under $p_n$ for any initial distribution. 
\end{itemize}
\end{theorem}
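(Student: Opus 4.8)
Since the hypotheses of the theorem are exactly those under which \cite[Theorem 4.2, Theorem 4.4]{KaYu20} are proved, one option is simply to invoke that reference; for completeness I sketch the plan I would follow. Write $T_q v(x) \triangleq \min_{a \in \sA}\bigl\{c(x,a) + \beta \int_{\sX} v \, dq(\cdot|x,a)\bigr\}$ for a transition kernel $q$ on $\sX$ given $\sX \times \sA$. Under Assumption~\ref{MDP:as1}, each $T_q$ maps $C_b(\sX)$ into itself and is a $\beta$-contraction, so its unique fixed point is the optimal value function $J^*(\cdot,q)$, and an optimal deterministic stationary policy $f_q \in \rF$ is obtained by a measurable selection from the finite set $\argmin_a\{\dots\}$; thus $\pi_n^*$ is the constant sequence equal to such an $f_{p_n}$, and $J^*(\kappa_0,q) = \int_{\sX} J^*(x,q)\,\kappa_0(dx)$.

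For part (a), the plan is a value-iteration argument. Set $v_0 \equiv 0$, $v_k^{(n)} \triangleq T_{p_n}^k v_0$ and $v_k \triangleq T_p^k v_0$. The first step is an elementary composition lemma: if $g_n, g \in C_b(\sX)$ are uniformly bounded and $g_n \to g$ \emph{continuously} (i.e.\ $g_n(x_n) \to g(x)$ whenever $x_n \to x$), then, using weak continuity of $p$, the continuous convergence $p_n \to p$, and tightness of weakly convergent sequences of probability measures, the maps $(x,a) \mapsto \int g_n\,dp_n(\cdot|x,a)$ converge continuously to $(x,a) \mapsto \int g\,dp(\cdot|x,a)$. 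Since $\sA$ is finite and a finite $\min$ preserves continuous convergence, induction on $k$ then gives $v_k^{(n)} \to v_k$ continuously with $\sup_{n,k}\|v_k^{(n)}\|_\infty \le \|c\|_\infty/(1-\beta)$. Feeding in the uniform contraction estimate $\|v_k^{(n)} - J^*(\cdot,p_n)\|_\infty \le \beta^k \|c\|_\infty/(1-\beta)$ (and the analogue for $p$) yields $J^*(\cdot,p_n) \to J^*(\cdot,p)$ continuously, hence uniformly on compacta; integrating against the tight probability measure $\kappa_0$ (here $\sX$ is Polish, being locally compact, separable and metric) and using uniform boundedness finally gives $J^*(\kappa_0,p_n) \to J^*(\kappa_0,p)$.

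For part (b), let $f_n = f_{p_n}$, and write $v_n \triangleq J^*(\cdot,p_n) = J(f_n;\cdot,p_n)$ and $w_n \triangleq J(f_n;\cdot,p)$. Both solve their respective policy-evaluation fixed-point equations, and subtracting gives $w_n - v_n = \beta P_{p,f_n}(w_n - v_n) + \beta \rho_n$ with $P_{p,f_n}$ the one-step state transition under $(p,f_n)$ and $\rho_n(x) \triangleq \int v_n\,d(p - p_n)(\cdot|x,f_n(x))$; solving for $w_n - v_n$ and integrating against $\kappa_0$,
\[
J(f_n;\kappa_0,p) - J^*(\kappa_0,p_n) = \beta \sum_{t \ge 0} \beta^t \int_{\sX} \rho_n(x)\,\nu_{n,t}(dx),
\]
where $\nu_{n,t}$ is the law of the time-$t$ state under $(p,f_n)$ started from $\kappa_0$. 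Since $\|\rho_n\|_\infty \le 2\|c\|_\infty/(1-\beta)$, dominated convergence in $t$ reduces the problem to showing $\int \rho_n\,d\nu_{n,t} \to 0$ for each fixed $t$. The composition lemma (with $v_n \to J^*(\cdot,p)$ continuously and $\sA$ finite) gives $\rho_n \to 0$ continuously, hence uniformly on compacta, while $\{\nu_{n,t}\}_n$ is tight for each $t$ by induction on $t$ (base case $\nu_{n,0} = \kappa_0$; the step uses that $\{p(\cdot|x,a): (x,a) \in K \times \sA\}$ is weakly compact, hence tight, for compact $K$, by weak continuity of $p$ and compactness of $\sA$). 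Splitting $\int \rho_n\,d\nu_{n,t}$ over a compact carrying most of $\nu_{n,t}$ and its complement gives $\int \rho_n\,d\nu_{n,t}\to 0$, hence $J(f_n;\kappa_0,p) - J^*(\kappa_0,p_n) \to 0$; combined with part (a), $J(\pi_n^*;\kappa_0,p) \to J^*(\kappa_0,p)$.

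The hard part is part (b): the mismatch term $\rho_n$ converges to zero only continuously rather than uniformly over the (possibly non-compact) $\sX \times \sA$, and the measures $\nu_{n,t}$ are generated by the policies $f_n$, which need not converge. The crux is therefore the tightness induction for $\{\nu_{n,t}\}_n$, which exploits weak continuity of $p$ together with compactness of $\sA$; once that is in place, the uniform-on-compacta convergence $\rho_n \to 0$ closes the argument.
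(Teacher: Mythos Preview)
The paper does not prove this theorem at all: it is stated as a direct citation of \cite[Theorem~4.2, Theorem~4.4]{KaYu20} and used as a black box to derive the robustness consequences in Sections~\ref{secStrongKernelRobustness} and~\ref{robust_kernel}. Your opening sentence already anticipates this, and indeed simply invoking the reference is exactly what the paper does.

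Your sketch goes well beyond the paper by supplying a self-contained argument, and the plan is sound. The key technical device you isolate --- that continuous convergence of uniformly bounded $g_n \to g$ together with weak (continuous) convergence $p_n(\cdot|x_n,a_n)\to p(\cdot|x,a)$ yields convergence of the integrals, and that this is preserved under finite $\min$ --- is precisely the mechanism driving \cite{KaYu20}, so in spirit you are reconstructing their proof. For part~(b) your decomposition via the policy-evaluation resolvent and the tightness induction on $\{\nu_{n,t}\}_n$ is correct; the step that $\rho_n \to 0$ continuously despite $f_n$ being discontinuous and $n$-dependent works because $\sA$ is finite, so along any $x_n\to x$ one may pass to a subsequence with $f_n(x_n)\equiv a$ constant and apply the composition lemma to each of $\int v_n\,dp(\cdot|x_n,a)$ and $\int v_n\,dp_n(\cdot|x_n,a)$ separately. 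The only places to be slightly more explicit in a full write-up are (i) the fact that continuous convergence of continuous functions to a continuous limit implies uniform convergence on compacta (needed when you split $\int \rho_n\,d\nu_{n,t}$), and (ii) tightness of $\kappa_0$, which follows since $\sX$ is Polish under the standing assumptions.
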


The condition on \(\pi_n^*\) in (b) above is generally non-restrictive and typically holds, as for discounted MDPs under any transition probability, the optimal deterministic stationary policy is derived via the dynamic programming principle. By construction, these policies are optimal regardless of the initial distribution, so condition (b) is satisfied in practice.

In the theorem above, the most critical condition is the continuous convergence of \( p_n \) to \( p \) in the weak convergence topology. This can be ensured through convergence in the strong kernel mean embedding topology when \( p = \infty \). 

In the next section, by introducing regularity conditions on the densities \(\{f_n\}\) and \(f\), we establish the continuous convergence of the transition kernels under the weak kernel mean embedding topology.

\subsubsection{Robustness Under Weak Kernel Mean Embedding Topology}\label{robust_kernel}

We suppose that the following convergence condition holds:
\begin{align}
p_n \rightharpoonup^* p \label{conv2}
\end{align}
 with respect to the weak kernel mean embedding topology for some $p \geq 1$ (not necessarily $p=\infty$). In addition to previous assumptions, we impose the following conditions on the densities. 
 
 \begin{assumption}
\label{MDP:as3}
\begin{itemize}
\item [  ]
\item [(a)] The set of densities $\{f_n(y|x,a)\}$ is bounded, i.e.,
$$
\sup_{n\geq1} \sup_{(y,x,a) \in \sX\times\sX\times\sA} |f_n(y|x,a)| < \infty.
$$
\item [(b)] The set $\{f_n(\cdot,\cdot,\cdot)\}_n$ is equicontinuous with respect to $(y,x,a)$ (this is true if densities are elements of bounded subset of some reproducing kernel Hilbert space). 
\end{itemize}
\end{assumption}

\begin{remark}\label{equicont-assumption}
Assumption~\ref{MDP:as3} may appear restrictive; however, if the approximate densities belong to a bounded subset of a reproducing kernel Hilbert space defined on \(\sZ \triangleq \sX \times \sX \times \sA\), then, under the same assumptions imposed on the \((k, \H_k)\) pair, Assumption~\ref{MDP:as3} holds for these densities. Indeed, let 
$$l:\sZ \times \sZ \rightarrow \R$$
be a positive semi-definite kernel that induces the reproducing kernel Hilbert space $\H_l$. We suppose that the conditions in Assumption~\ref{as2} are true for RKHS $\H_l$ and its kernel $l$; that is, $l$ is bounded and continuous, $l(\cdot,z) \in C_0(\sZ)$ for all $z \in \sZ$, and $\H_l$ is dense in $C_0(\sZ)$. For some $L > 0$, define 
$$
B \triangleq \{f \in \H_l: \|f\|_{\H_k} \leq L\}. 
$$
If $\{f_n\} \subset B$, then $\{f_n\}$ is equicontinuous and uniformly bounded. To see this, fix any $(z,y) \in \sZ$. Then we have 
\begin{align*}
|f(z)-f(y)| &= |\langle f,l(\cdot,z) \rangle_{\H_l} - \langle f,l(\cdot,y) \rangle_{\H_l}| \\
&= |\langle f,l(\cdot,z)-l(\cdot,y) \rangle_{\H_l}| \\
&\leq \|f\|_{\H_l} \, \|l(\cdot,z)-l(\cdot,y)\|_{\H_l}. 
\end{align*}
Note that 
\begin{align*}
\|l(\cdot,z)-l(\cdot,y)\|_{\H_l}^2 &= \langle l(\cdot,z)-l(\cdot,y),l(\cdot,z)-l(\cdot,y)\rangle_{\H_l} \\
&= \left( l(z,z) - l(z,y) \right) + \left( l(y,y) - l(y,z) \right). 
\end{align*}
Since $l$ is bounded and continuous, for any $\varepsilon > 0$, there exists $\delta > 0$ such that when $d_{\sZ\times\sZ}\left((z_1,z_2),(y_1,y_2)\right) \leq \delta$, we have $|l(z_1,z_2)-l(y_1,y_2)| \leq \varepsilon$. Then, using above inequality, we conclude that for any $f \in B$, when $d_{\sZ}(z,y) \leq \delta$, we have $|f(z)-f(y)| \leq L \, \sqrt{2 \varepsilon}$. Hence, $B$ (and so $\{f_n\}$) is equicontinuous. Uniform boundedness is obvious.

Above discussion also clarifies why the MMD topology is equivalent to the weak convergence topology on the set of probability measures. If \(\eta_n \to \eta\) in the weak convergence topology, this implies uniform convergence of integrals over any equicontinuous and uniformly bounded family of functions. Since the set \(\{f \in \H_l : \|f\|_{\H_l} \leq 1\}\) is equicontinuous and uniformly bounded, it follows that \(\eta_n\) converges to \(\eta\) in the MMD topology. The reverse implication is straightforward because \(\H_l\) is dense in \(C_0(\sZ)\). This completes the proof.
\end{remark}

We now state the main result of this subsection. 

\begin{theorem}\label{robust-weak}
Suppose that Assumptions~\ref{MDP:as1}, \ref{MDP:as2}, and \ref{MDP:as3} hold. If $p_n \rightharpoonup^* p$, then we have
\begin{itemize}
\item[(a)] $\lim_{n\rightarrow\infty} J^*(\kappa_0,p_n) = J^*(\kappa_0,p)$. 
\item[(b)] $\lim_{n\rightarrow\infty} J(\pi_n^*;\kappa_0,p) = J^*(\kappa_0,p)$ if $\pi_n^*$ is optimal under $p_n$ for any initial distribution. 
\end{itemize}
\end{theorem}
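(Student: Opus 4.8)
The plan is to reduce the statement to Theorem~\ref{Ali_thm} by verifying its sole nontrivial hypothesis, namely that $p_n$ converges to $p$ \emph{continuously} in the weak convergence topology: $p_n(\cdot|x_n,a_n) \rightharpoonup p(\cdot|x,a)$ whenever $(x_n,a_n) \to (x,a)$ in $\sX\times\sA$. Weak continuity of $\{p_n\}$ and $p$ is Assumption~\ref{MDP:as2}(a), and the extra condition on $\pi_n^*$ in part (b) is automatic for discounted MDPs since an optimal deterministic stationary policy is produced by the discounted dynamic programming fixed point and is therefore optimal for every initial distribution. So the whole content is the continuous-convergence claim, and the point is that the hypotheses here are, after specialization, exactly those of Theorem~\ref{WeakImpliesStrong}, whose Arzel\`a--Ascoli proof in fact yields continuous convergence in the weak topology (even though that theorem only records strong $L_q$-convergence for $q<\infty$).

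First I would localize: since $p,p_n\in\sM_{\abss}$ and $\kappa_0\ll\kappa_{\abss}$, exactly as in the paragraph preceding Theorem~\ref{Ali_thm} one may replace $\sX$ by $\supp\kappa_{\abss}$ without changing the control problem, so it is no loss to assume $\kappa_{\abss}$ has full support. Next, apply the machinery behind Theorem~\ref{WeakImpliesStrong} with signal space $\sY=\sX\times\sA$ and input measure $\mu_{\reff}=\kappa_{\abss}\otimes\sUnif$: Assumption~\ref{MDP:as3} provides precisely hypotheses (a) and (b) of that theorem for the density family $\{f_n\}$ on $\sZ=\sX\times\sX\times\sA$, which is $\sigma$-compact because $\sX$ is locally compact and second countable and $\sA$ is finite. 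Hence every subsequence of $\{f_n\}$ has, by the Arzel\`a--Ascoli theorem, a further subsequence $\{f_\theta\}$ converging uniformly on compact subsets of $\sZ$ to a bounded continuous density $\tilde f$; using tightness of the fixed measure $\kappa_{\abss}$ together with $\sup_n\sup|f_n|<\infty$ to control the tails, the kernels $p_\theta(dy|x,a)=f_\theta(y|x,a)\,\kappa_{\abss}(dy)$ then converge to $\tilde p(dy|x,a)\triangleq\tilde f(y|x,a)\,\kappa_{\abss}(dy)$ uniformly on compact subsets of $\sX\times\sA$ in the MMD topology, and $\tilde p$ is weakly continuous.

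It then remains to identify $\tilde p$ with $p$. From the pointwise MMD convergence $p_\theta(\cdot|x,a)\to\tilde p(\cdot|x,a)$ and uniform boundedness of the MMD norms (Assumption~\ref{as2}(a)), dominated convergence gives $p_\theta\rightharpoonup^*\tilde p$ in $L_q(\mu_{\reff},\H_k)$; since $\{p_\theta\}$ is a subsequence of $\{p_n\}$ and $p_n\rightharpoonup^*p$, uniqueness of weak$^*$ limits forces $\tilde p=p$ $\mu_{\reff}$-a.e. Because $\kappa_{\abss}$ has full support, the set of $(x,a)$ on which $\tilde p$ and $p$ agree is dense, and weak continuity of both $\tilde p$ and $p$ upgrades the identity to $\tilde p(\cdot|x,a)=p(\cdot|x,a)$ for \emph{every} $(x,a)$; in particular $p$ admits the bounded continuous density $\tilde f$. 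Since the limit is the same $p$ along every Arzel\`a--Ascoli subsequence, the whole sequence $p_n$ converges to $p$ uniformly on compact subsets of $\sX\times\sA$ in the MMD, equivalently (by \cite[Lemma 5 and Remark 6]{SiBaScMa24}) in the weak convergence, topology. Finally, given $(x_n,a_n)\to(x,a)$, these points lie in a single compact set, so $p_n(\cdot|x_n,a_n)$ and $p(\cdot|x_n,a_n)$ are asymptotically close weakly while $p(\cdot|x_n,a_n)\rightharpoonup p(\cdot|x,a)$ by weak continuity of $p$; hence $p_n(\cdot|x_n,a_n)\rightharpoonup p(\cdot|x,a)$, which is the continuous convergence needed, and Theorem~\ref{Ali_thm} delivers (a) and (b).

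I expect the main obstacle to be the identification step: passing from ``$\tilde p=p$ $\mu_{\reff}$-a.e.'' to ``$\tilde p=p$ everywhere'', and the closely related fact that every Arzel\`a--Ascoli subsequence must produce the \emph{same} continuous limit. This is exactly what forces the localization to $\supp\kappa_{\abss}$ and the explicit use of weak continuity of $p$; without it one only controls $p_n(\cdot|x,a)$ off a $\kappa_{\abss}$-null set of states, which is insufficient for the continuous convergence required by Theorem~\ref{Ali_thm}. The subsidiary point --- that uniform-on-compacta convergence of the densities yields weak convergence of the measures they define, uniformly on compacta --- is routine given tightness of $\kappa_{\abss}$ and uniform boundedness of $\{f_n\}$, and mirrors the corresponding step in the proof of Theorem~\ref{WeakImpliesStrong}.
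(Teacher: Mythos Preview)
Your proposal is correct and follows the same overall strategy as the paper: extract, via Arzel\`a--Ascoli applied to the equicontinuous bounded family $\{f_n\}$, a subsequence whose densities converge uniformly on compacta; identify the resulting limit kernel with $p$ using the weak$^*$ hypothesis $p_n\rightharpoonup^*p$; and then invoke Theorem~\ref{Ali_thm} once continuous weak convergence is in hand.

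The execution differs in two places. First, you argue \emph{directly} rather than by contradiction: you show that every subsequence has a further subsequence converging to $p$ uniformly on compacta, so the full sequence does, and continuous convergence follows from uniform-on-compacta convergence together with weak continuity of $p$. The paper instead fixes a bad subsequence, extracts an Arzel\`a--Ascoli sub-subsequence, and then establishes continuous convergence of that sub-subsequence by a second, nested Arzel\`a--Ascoli argument on the family $\{f_\theta(y|\cdot,\cdot)\}_\theta$ for each fixed $y$. Second, you localize up front to $\supp\kappa_{\abss}$ so that $\kappa_{\abss}$ has full support, which lets you upgrade the $\mu_{\reff}$-a.e.\ identification $\tilde p=p$ to an \emph{everywhere} identity via density and weak continuity; the paper instead restricts to a full-$\kappa_{\abss}$-measure set $D$ on which pointwise MMD convergence holds, and never needs the everywhere identity. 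Your route is a bit more streamlined and makes explicit the role of $\supp\kappa_{\abss}$ and of the weak continuity of $p$ in the identification; the paper's route avoids the full-support/density step at the cost of the nested subsequence detour. Both arguments ultimately rest on the same two ingredients---Arzel\`a--Ascoli compactness of $\{f_n\}$ and uniqueness of weak$^*$ limits---and both feed into Theorem~\ref{Ali_thm} in the same way.
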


\begin{proof}

We only prove part (a) as the proof of part (b) is similar. Suppose that part (a) is not true. Then, there exists a subsequence $\{p_{\lambda}\}$ of $\{p_n\}$ such that
$$
|J^*(\kappa_0,p_{\lambda}) - J^*(\kappa_0,p) | > 0 \,\, \forall \lambda. 
$$
Under Assumption~\ref{MDP:as3} and by Arzela-Ascoli theorem \cite[Theorem 4.44]{Roy06}, there exists a subsequence $\{f_{\theta}\}$ of $\{f_{\lambda}\}$ such that $f_{\theta} \rightarrow \tilde f$ uniformly over compact subsets, for some continuous and bounded density function $\tilde f$. It is then very simple to establish that $p_{\theta} \rightarrow \tilde p$ with respect to MMD topology uniformly over compact subsets, where $\tilde p(dy|x,a) \triangleq \tilde f(y|x,a) \, \kappa_{\abss}(dy)$. Since $p_n \rightharpoonup^* p$, we can also conclude that $p = \tilde p$ $\mu_{\reff}$-almost-surely. As uniform convergence over compact sets is stronger than pointwise convergence, we have
$$
\|p_{\theta}(\cdot|x,a) - p(\cdot|x,a)\|_{\M} \rightarrow 0 \,\, \text{$\mu_{\reff}$-almost-surely}. 
$$
Let us denote the set where this convergence holds by $D\times\sA$, where $\kappa_{\abss}(D) = 1$. Since our control model is equivalent to the control model with state space $D$ as $\kappa_{\abss}(D) = 1$, without loss of generality, we can conclude that 
$$
\|p_{\theta}(\cdot|x,a) - p(\cdot|x,a)\|_{\M} \rightarrow 0 \,\, \forall (x,a) \in \sX\times\sA.
$$
We now prove that $p_{\theta}$ also converges to $p$ continuously in MMD topology (or equivalently in weak convergence topology). To this end, we fix any $g \in C_b(\sX)$ and let $(x_{\theta},a_{\theta}) \rightarrow (x,a)$. Then, our goal is to prove that 
$$
\left|\int_{\sX} g(y) \, p_{\theta}(dy|x_{\theta},a_{\theta}) - \int_{\sX} g(y) \, p(dy|x,a) \right| \rightarrow 0. 
$$ 
We can bound above expression as follows:
\small
\begin{align*}
&\left|\int_{\sX} g(y) \, p_{\theta}(dy|x_{\theta},a_{\theta}) - \int_{\sX} g(y) \, p(dy|x,a) \right| \\
&\leq \left|\int_{\sX} g(y) \, p_{\theta}(dy|x_{\theta},a_{\theta}) - \int_{\sX} g(y) \, p_{\theta}(dy|x,a) \right| + \left|\int_{\sX} g(y) \, p_{\theta}(dy|x,a) - \int_{\sX} g(y) \, p(dy|x,a) \right|.
\end{align*}
\normalsize
The second term converges to zero as $p_{\theta}(\cdot|x,a) \rightarrow p(\cdot|x,a)$. To prove the convergence of the first term, by dominated convergence theorem, it is sufficient to establish that for any $y$, $|f_{\theta}(y|x_{\theta},a_{\theta})- f_{\theta}(y|x,a)| \rightarrow 0$. Fix any $y$. Suppose that $|f_{\theta}(y|x_{\theta},a_{\theta})- f_{\theta}(y|x,a)| \nrightarrow 0$. Then, there exists a subsequence such that 
$$
|f_{\xi}(y|x_{\xi},a_{\xi})- f_{\xi}(y|x,a)| > 0 \,\, \forall \xi.
$$
By Assumption~\ref{MDP:as3} and by Arzela-Ascoli theorem \cite[Theorem 4.44]{Roy06}, there exists a subsequence $\{f_{\alpha}(y|\cdot,\cdot)\}$ of $\{f_{\xi}(y|\cdot,\cdot)\}$ such that $f_{\alpha}(y|\cdot,\cdot) \rightarrow \tilde f(y|\cdot,\cdot)$ uniformly over compact subsets, for some continuous and bounded density function $\tilde f(y|\cdot,\cdot)$. Define the compact set $K \triangleq \{(x_{\alpha},a_{\alpha})\} \bigcup \{(x,a)\}$. Then we have 
\begin{align*}
&|f_{\alpha}(y|x_{\alpha},a_{\alpha})- f_{\alpha}(y|x,a)| \leq |f_{\alpha}(y|x_{\alpha},a_{\alpha})- \tilde f(y|x_{\alpha},a_{\alpha})| + |\tilde f(y|x_{\alpha},a_{\alpha})- f_{\alpha}(y|x,a)| \\
&\leq \sup_{(x,a) \in K}|f_{\alpha}(y|x,a)- \tilde f(y|x,a)| + |\tilde f(y|x_{\alpha},a_{\alpha})- f_{\alpha}(y|x,a)|.
\end{align*}
The first term converges to zero as $f_{\alpha}(y|\cdot,\cdot) \rightarrow \tilde f(y|\cdot,\cdot)$ uniformly over compact subsets. The second term converges to zero since $\tilde f(y|x_{\alpha},a_{\alpha}) \rightarrow \tilde f(y|x,a)$ and $f_{\alpha}(y|x,a) \rightarrow \tilde f(y|x,a)$. Hence, we have 
$$
|f_{\alpha}(y|x_{\alpha},a_{\alpha})- f_{\alpha}(y|x,a)| \rightarrow 0. 
$$
This is a contradiction. Hence, for any $y$, $|f_{\theta}(y|x_{\theta},a_{\theta})- f_{\theta}(y|x,a)| \rightarrow 0$. This implies that 
$$
\left|\int_{\sX} g(y) \, p_{\theta}(dy|x_{\theta},a_{\theta}) - \int_{\sX} g(y) \, p(dy|x,a) \right| \rightarrow 0. 
$$ 
Hence, $p_{\theta}$ also converges to $p$ continuously in MMD topology (or equivalently in weak convergence topology). Therefore, by part (a) of Theorem~\ref{Ali_thm}, we have
$$
|J^*(\kappa_0,p_{\theta}) - J^*(\kappa_0,p) | \rightarrow 0. 
$$
This contradicts with 
$$
|J^*(\kappa_0,p_{\lambda}) - J^*(\kappa_0,p) | > 0 \,\, \forall \lambda. 
$$
This completes the proof.
\end{proof}

\begin{remark}
Theorem~\ref{robust-weak} relies on continuous convergence of densities; that is,
\begin{align}
f_n(\cdot|x_n,a_n) \rightarrow f(\cdot|x,a),\label{cont-converge}
\end{align}
where $(x_n,a_n) \rightarrow (x,a)$ (see \cite[Proposition 2.8]{OmarModelLearningAverage} for a similar result under weak convergence of transition kernels). This is indeed equivalent to the uniform convergence of densities over compact sets when densities are continuous  and this is established through uniform boundedness, equicontinuity, and the Arzela-Ascoli theorem along a subsequence in the above proof. However, we note that the same result can alternatively be obtained by directly assuming, instead of Assumption~\ref{MDP:as3}, the continuous convergence of densities (\ref{cont-converge}) and applying Scheffe's theorem \cite[Theorem 16.12]{Bil95}.  
\end{remark}

\subsubsection{Implications for Asymptotic Optimality under Empirical Model Learning}

The implications of the results given above is that one can learn models via empirical data. In particular, there exist several studies which ensures that kernels can be learned from data, see e.g. \cite[Theorem 4]{GyorfiKohler07} for one of the earlier results under a stronger convergence criterion in the presence of empirical data; and \cite{TaBa24} and the references therein for an empirical convergence analysis under the strong kernel mean embedding topology. 

For an MDP with Borel state and action spaces, under weak continuity conditions, it is known that the action space can be approximated by a finite action set with arbitrarily small change in performance; see \cite{SaYuLi16JMAA} \cite[Theorem 3.16]{SaLiYuSpringer} for the discounted cost and \cite{SaYuLi16JMAA},\cite[Theorem 3.22]{SaLiYuSpringer} for average cost criterion. Thus, one needs to learn only finitely many kernels from data. Given the analysis presented in the previous section, these then lead to robustness to empirical learning in MDPs; see \cite{OmarModelLearningAverage,zhou2024robustness} (building on  \cite{KaYu20,kara2022robustness}) for an application.

\sy{
\begin{remark} [On Adapted Topologies] A topology on spaces of probability measures corresponding to laws of stochastic processes, which has been used in a wide variety of contexts in stochastic analysis, is defined by the following. A sequence of stochastic processes is said to converge to another process if their finite-dimensional marginals converge weakly, and their conditional distributions of future variables given the past (viewed as measure-valued stochastic processes) also converge weakly. Aldous has termed this {\it extended weak convergence} \cite{aldous1981weak} and Hellwig has named it {\it the information topology} \cite{hellwig1996sequential}; these have recently been shown to be equivalent in discrete-time \cite{backhoff2019all,pammer2024note}. The closely related {\it adapted Wasserstein metric} \cite{bartl2024wasserstein,backhoff2019all,beiglbock2022approximation} has been shown to possess strong robustness properties in a variety of applications \cite{bayraktar2020continuity,julio2020adapted,bartl2023sensitivity}, not unlike the closely related continuous weak convergence notion we have studied above for MDPs \cite{KaYu20} and in controlled diffusion problems \cite{pradhan2022near}. We should note one distinction, however: The convergence of stochastic processes is with regard to the process measure itself and not only on the conditional probability measures at each element in the domain; in particular, a prior measure may entail a more restricted support for future marginal measures and this may affect the convergence properties under the adapted Wasserstein distance whereas the convergence of a stochastic kernel sequence is with regard to the properties of the kernels defined on the entire (domain) space.
\end{remark}
}
\sy{
\subsection{Continuity and Approximations of Control Policies under Weak Kernel Topologies}

Under the Young narrow topology or $w^*$-topology on control policies, and hence under the weak kernel mean embedding topology (by Theorem~\ref{equivalence} and Theorem~\ref{tightness}), one obtains broad continuity and approximation results, as established in several foundational studies. Specifically, \cite{borkar1989topology,arapostathis2010uniform} showed that, in controlled diffusion processes, the expected cost depends continuously on the control policy, \cite{pradhan2022near,pradhanyuksel2023DTApprx,pradhanyuksel2024near} demonstrated that, since piecewise-constant or Lipschitz policies are dense under this topology (see, e.g., \cite{milgrom1985distributional,arapostathis2012ergodic}), both near-optimality of Lipschitz policies and the validity of discrete-time approximations follow. Related analyses on discrete-time approximations can also be found in \cite{fleming1976generalized,kushner2001numerical,kushner2014partial}.
Moreover, invariant measures of diffusions depend continuously on control policies under this topology \cite{arapostathis2010uniform}, and an analogous result holds in the discrete-time setting \cite{yuksel2023borkar}. By the equivalence between the Young topology and the weak kernel mean embedding topology, one can also establish existence and approximation results for systems with decentralized information structures \cite{YukselWitsenStandardArXiv,SaYu22,Sal20}. These findings further extend to mean-field and game-theoretic models, providing desirable compactness, convexity, and continuity properties \cite{mertens2015repeated,balder1988generalized,bayraktar2022finite}.

In summary, strong kernel mean embedding topology (with the $\infty$-norm) is well-suited for analyzing model convergence and robustness properties, whereas weak kernel mean embedding topology is particularly appropriate for control policy spaces, yielding broad results on continuity, existence, and approximation.
}

\section{Conclusion}

In conclusion, this paper presents a novel topology for stochastic kernels and explores its relationships with the established Young narrow topology and \( w^* \)-topology. We demonstrate that our new topology is equivalent to both the Young narrow topology and the \( w^* \)-topology on the set of stochastic kernels. While the \( w^* \)-topology and kernel mean embedding topology are relatively compact -- an essential feature for proving existence results -- they are not closed. In contrast, the Young narrow topology is closed but lacks relative compactness. Importantly, we identify that a tightness condition is necessary to achieve closure in the \( w^* \)-topology and the kernel mean embedding topology, as well as to ensure relative compactness for the Young narrow topology. Consequently, these three topologies provide equally valuable tools for establishing existence results. Moreover, the kernel mean embedding topology's Hilbert space structure facilitates the approximation of stochastic kernels using simulation data, a topic we elaborate on in the final section of the paper.

\section*{Acknowledgements}

The first author would like to thank Professor Aurelian Gheondea for introducing him reproducing kernel Hilbert spaces.


\begin{thebibliography}{81}

\bibitem{aldous1981weak}
\begin{bbook}[author]
\bauthor{\bsnm{Aldous},~\bfnm{D.~J.}\binits{D.~J.}}
(\byear{1981}).
\btitle{Weak convergence and the general theory of processes}.
\bpublisher{Editeur inconnu}.
\end{bbook}
\endbibitem

\bibitem{arapostathis2010uniform}
\begin{barticle}[author]
\bauthor{\bsnm{Arapostathis},~\bfnm{A.}\binits{A.}} \AND
  \bauthor{\bsnm{Borkar},~\bfnm{V.~S.}\binits{V.~S.}}
(\byear{2010}).
\btitle{Uniform recurrence properties of controlled diffusions and applications
  to optimal control}.
\bjournal{SIAM Journal on Control and Optimization}
\bvolume{48}
\bpages{4181--4223}.
\end{barticle}
\endbibitem

\bibitem{arapostathis2012ergodic}
\begin{bbook}[author]
\bauthor{\bsnm{Arapostathis},~\bfnm{A.}\binits{A.}},
  \bauthor{\bsnm{Borkar},~\bfnm{V.~S.}\binits{V.~S.}} \AND
  \bauthor{\bsnm{Ghosh},~\bfnm{M.~K.}\binits{M.~K.}}
(\byear{2012}).
\btitle{Ergodic Control of Diffusion Processes}
\bvolume{143}.
\bpublisher{Cambridge University Press}.
\end{bbook}
\endbibitem

\bibitem{aumann1961mixed}
\begin{btechreport}[author]
\bauthor{\bsnm{Aumann},~\bfnm{R.~J.}\binits{R.~J.}}
(\byear{1961}).
\btitle{Mixed and behavior strategies in infinite extensive games}
\btype{Technical Report},
\bpublisher{Princeton University NJ}.
\end{btechreport}
\endbibitem

\bibitem{backhoff2019all}
\begin{barticle}[author]
\bauthor{\bsnm{Backhoff-Veraguas},~\bfnm{J.}\binits{J.}},
  \bauthor{\bsnm{Bartl},~\bfnm{D.}\binits{D.}},
  \bauthor{\bsnm{Beiglb{\"o}ck},~\bfnm{M.}\binits{M.}} \AND
  \bauthor{\bsnm{Eder},~\bfnm{M.}\binits{M.}}
(\byear{2020}).
\btitle{All adapted topologies are equal}.
\bjournal{Probability Theory and Related Fields}
\bvolume{178}
\bpages{1125--1172}.
\end{barticle}
\endbibitem

\bibitem{julio2020adapted}
\begin{barticle}[author]
\bauthor{\bsnm{Backhoff-Veraguas},~\bfnm{J.}\binits{J.}},
  \bauthor{\bsnm{Bartl},~\bfnm{D.}\binits{D.}},
  \bauthor{\bsnm{Beiglb{\"o}ck},~\bfnm{M.}\binits{M.}} \AND
  \bauthor{\bsnm{Eder},~\bfnm{M.}\binits{M.}}
(\byear{2020}).
\btitle{Adapted {W}asserstein distances and stability in mathematical finance}.
\bjournal{Finance and Stochastics}
\bvolume{24}
\bpages{601--632}.
\end{barticle}
\endbibitem

\bibitem{balder1988generalized}
\begin{barticle}[author]
\bauthor{\bsnm{Balder},~\bfnm{E.~J.}\binits{E.~J.}}
(\byear{1988}).
\btitle{Generalized equilibrium results for games with incomplete information}.
\bjournal{Mathematics of Operations Research}
\bvolume{13}
\bpages{265--276}.
\end{barticle}
\endbibitem

\bibitem{balder1997consequences}
\begin{barticle}[author]
\bauthor{\bsnm{Balder},~\bfnm{E.~J.}\binits{E.~J.}}
(\byear{1997}).
\btitle{Consequences of denseness of Dirac Young measures}.
\bjournal{Journal of Mathematical Analysis and Applications}
\bvolume{207}
\bpages{536--540}.
\end{barticle}
\endbibitem

\bibitem{bartl2024wasserstein}
\begin{barticle}[author]
\bauthor{\bsnm{Bartl},~\bfnm{D.}\binits{D.}},
  \bauthor{\bsnm{Beiglb{\"o}ck},~\bfnm{M.}\binits{M.}} \AND
  \bauthor{\bsnm{Pammer},~\bfnm{G.}\binits{G.}}
(\byear{2024}).
\btitle{The Wasserstein space of stochastic processes}.
\bjournal{Journal of the European Mathematical Society}.
\end{barticle}
\endbibitem

\bibitem{bartl2023sensitivity}
\begin{barticle}[author]
\bauthor{\bsnm{Bartl},~\bfnm{Daniel}\binits{D.}} \AND
  \bauthor{\bsnm{Wiesel},~\bfnm{Johannes}\binits{J.}}
(\byear{2023}).
\btitle{Sensitivity of multiperiod optimization problems with respect to the
  adapted Wasserstein distance}.
\bjournal{SIAM Journal on Financial Mathematics}
\bvolume{14}
\bpages{704--720}.
\end{barticle}
\endbibitem

\bibitem{bauerle2018optimal}
\begin{barticle}[author]
\bauthor{\bsnm{B{\"a}uerle},~\bfnm{N.}\binits{N.}} \AND
  \bauthor{\bsnm{Lange},~\bfnm{D.}\binits{D.}}
(\byear{2018}).
\btitle{Optimal control of partially observable piecewise deterministic
  {M}arkov processes}.
\bjournal{SIAM Journal on Control and Optimization}
\bvolume{56}
\bpages{1441--1462}.
\end{barticle}
\endbibitem

\bibitem{bauerle2010optimal}
\begin{barticle}[author]
\bauthor{\bsnm{B{\"a}uerle},~\bfnm{N.}\binits{N.}} \AND
  \bauthor{\bsnm{Rieder},~\bfnm{U.}\binits{U.}}
(\byear{2010}).
\btitle{Optimal control of piecewise deterministic Markov processes with finite
  time horizon}.
\bjournal{Modern Trends in Controlled Stochastic Processes: Theory and
  Applications}
\bpages{123--143}.
\end{barticle}
\endbibitem

\bibitem{bayraktar2022finite}
\begin{barticle}[author]
\bauthor{\bsnm{Bayraktar},~\bfnm{E.}\binits{E.}},
  \bauthor{\bsnm{B\"auerle},~\bfnm{N.}\binits{N.}} \AND
  \bauthor{\bsnm{Kara},~\bfnm{A.~D.}\binits{A.~D.}}
(\byear{2022}).
\btitle{Finite Approximations and Q learning for Mean Field Type Multi Agent
  Control}.
\bjournal{arXiv preprint arXiv:2211.09633}.
\end{barticle}
\endbibitem

\bibitem{bayraktar2020continuity}
\begin{barticle}[author]
\bauthor{\bsnm{Bayraktar},~\bfnm{E.}\binits{E.}},
  \bauthor{\bsnm{Y.~Dolinsky},~\bfnm{Yan}\binits{Y.}} \AND
  \bauthor{\bsnm{Guo},~\bfnm{J.}\binits{J.}}
(\byear{2020}).
\btitle{Continuity of utility maximization under weak convergence}.
\bjournal{Mathematics and Financial Economics}
\bpages{1--33}.
\end{barticle}
\endbibitem

\bibitem{beiglbock2022approximation}
\begin{barticle}[author]
\bauthor{\bsnm{Beiglb{\"o}ck},~\bfnm{M.}\binits{M.}},
  \bauthor{\bsnm{Jourdain},~\bfnm{B.}\binits{B.}},
  \bauthor{\bsnm{Margheriti},~\bfnm{W.}\binits{W.}} \AND
  \bauthor{\bsnm{Pammer},~\bfnm{G.}\binits{G.}}
(\byear{2022}).
\btitle{Approximation of martingale couplings on the line in the adapted weak
  topology}.
\bjournal{Probability Theory and Related Fields}
\bvolume{183}
\bpages{359--413}.
\end{barticle}
\endbibitem

\bibitem{BeSh78}
\begin{bbook}[author]
\bauthor{\bsnm{Bertsekas},~\bfnm{D.~P.}\binits{D.~P.}} \AND
  \bauthor{\bsnm{Shreve},~\bfnm{S.~E.}\binits{S.~E.}}
(\byear{1978}).
\btitle{Stochastic optimal control: The discrete time case}.
\bpublisher{Academic Press New York}.
\end{bbook}
\endbibitem

\bibitem{Bil95}
\begin{bbook}[author]
\bauthor{\bsnm{Billingsley},~\bfnm{P.}\binits{P.}}
(\byear{1995}).
\btitle{Probability and Measure},
\bedition{3rd} ed.
\bpublisher{Wiley}.
\end{bbook}
\endbibitem

\bibitem{bismut1973theorie}
\begin{bbook}[author]
\bauthor{\bsnm{Bismut},~\bfnm{J.~M.}\binits{J.~M.}}
(\byear{1973}).
\btitle{Théorie Probabiliste du Contrôle des Diffusions}
\bvolume{181}.
\end{bbook}
\endbibitem

\bibitem{BoArGh12}
\begin{bbook}[author]
\bauthor{\bsnm{Borkar},~\bfnm{V.~S.}\binits{V.~S.}},
  \bauthor{\bsnm{Arapostathis},~\bfnm{A.}\binits{A.}} \AND
  \bauthor{\bsnm{Ghosh},~\bfnm{M.~K.}\binits{M.~K.}}
(\byear{2012}).
\btitle{Ergodic Control of Diffusion Processes}.
\bpublisher{Cambridge, UK, Cambridge Uni. Press}.
\end{bbook}
\endbibitem

\bibitem{borkar1989topology}
\begin{barticle}[author]
\bauthor{\bsnm{Borkar},~\bfnm{V.~S.}\binits{V.~S.}}
(\byear{1989}).
\btitle{A topology for {M}arkov controls}.
\bjournal{Applied Mathematics and Optimization}
\bvolume{20}
\bpages{55--62}.
\end{barticle}
\endbibitem

\bibitem{BorkarRealization}
\begin{barticle}[author]
\bauthor{\bsnm{Borkar},~\bfnm{V.~S.}\binits{V.~S.}}
(\byear{1993}).
\btitle{White-noise representations in stochastic realization theory}.
\bjournal{SIAM J. on Control and Optimization}
\bvolume{31}
\bpages{1093-1102}.
\end{barticle}
\endbibitem

\bibitem{CaDeToUm10}
\begin{barticle}[author]
\bauthor{\bsnm{Carmeli},~\bfnm{C.}\binits{C.}},
  \bauthor{\bsnm{De~Vito},~\bfnm{E.}\binits{E.}},
  \bauthor{\bsnm{Toigo},~\bfnm{A.}\binits{A.}} \AND
  \bauthor{\bsnm{Umanita},~\bfnm{V.}\binits{V.}}
(\byear{2010}).
\btitle{VECTOR VALUED REPRODUCING KERNEL HILBERT SPACES AND UNIVERSALITY}.
\bjournal{Analysis and Applications}
\bvolume{08}
\bpages{19-61}.
\end{barticle}
\endbibitem

\bibitem{castaing2004young}
\begin{bbook}[author]
\bauthor{\bsnm{Castaing},~\bfnm{C.}\binits{C.}},
  \bauthor{\bsnm{Fitte},~\bfnm{P.~R.~De}\binits{P.~R.~D.}} \AND
  \bauthor{\bsnm{Valadier},~\bfnm{M.}\binits{M.}}
(\byear{2004}).
\btitle{Young measures on topological spaces: with applications in control
  theory and probability theory}
\bvolume{571}.
\bpublisher{Springer Science \& Business Media}.
\end{bbook}
\endbibitem

\bibitem{CeMe97}
\begin{bbook}[author]
\bauthor{\bsnm{Cembranos},~\bfnm{P.}\binits{P.}} \AND
  \bauthor{\bsnm{Mendoza},~\bfnm{J.}\binits{J.}}
(\byear{1997}).
\btitle{Banach Spaces of Vector-Valued Functions}.
\bpublisher{Springer-Verlag}.
\end{bbook}
\endbibitem

\bibitem{DiMa99}
\begin{barticle}[author]
\bauthor{\bsnm{Diaz},~\bfnm{Santiago}\binits{S.}} \AND
  \bauthor{\bsnm{Mayoral},~\bfnm{Fernando}\binits{F.}}
(\byear{1999}).
\btitle{On Compactness in Spaces of Bochner Integrable Functions}.
\bjournal{Acta Mathematica Hungarica}
\bvolume{83}
\bpages{231-239}.
\end{barticle}
\endbibitem

\bibitem{DiUh77}
\begin{bbook}[author]
\bauthor{\bsnm{Diestel},~\bfnm{J.}\binits{J.}} \AND
  \bauthor{\bsnm{Uhl},~\bfnm{J.~J.}\binits{J.~J.}}
(\byear{1977}).
\btitle{Vector Measures}.
\bpublisher{American Mathematical Society}.
\end{bbook}
\endbibitem

\bibitem{dieudonne1951theoreme}
\begin{barticle}[author]
\bauthor{\bsnm{Dieudonn{\'e}},~\bfnm{J.}\binits{J.}}
(\byear{1951}).
\btitle{Sur le th{\'e}or{\`e}me de Lebesgue-Nikodym (V)}.
\bjournal{Canadian Journal of Mathematics}
\bvolume{3}
\bpages{129--139}.
\end{barticle}
\endbibitem

\bibitem{Dud89}
\begin{bbook}[author]
\bauthor{\bsnm{Dudley},~\bfnm{R.~M.}\binits{R.~M.}}
(\byear{2004}).
\btitle{Real Analysis and Probability}.
\bpublisher{Cambridge University Press}.
\end{bbook}
\endbibitem

\bibitem{fattorini1994existence}
\begin{barticle}[author]
\bauthor{\bsnm{Fattorini},~\bfnm{H.~O.}\binits{H.~O.}}
(\byear{1994}).
\btitle{Existence theory and the maximum principle for relaxed
  infinite-dimensional optimal control problems}.
\bjournal{SIAM Journal on Control and Optimization}
\bvolume{32}
\bpages{311--331}.
\end{barticle}
\endbibitem

\bibitem{FeKaZa12}
\begin{barticle}[author]
\bauthor{\bsnm{Feinberg},~\bfnm{E.~A.}\binits{E.~A.}},
  \bauthor{\bsnm{Kasyanov},~\bfnm{P.~O.}\binits{P.~O.}} \AND
  \bauthor{\bsnm{Zadioanchuk},~\bfnm{N.~V.}\binits{N.~V.}}
(\byear{2012}).
\btitle{Average Cost {M}arkov Decision Processes with Weakly Continuous
  Transition Probabilities}.
\bjournal{Math. Oper. Res.}
\bvolume{37}
\bpages{591--607}.
\end{barticle}
\endbibitem

\bibitem{fleming1976generalized}
\begin{binproceedings}[author]
\bauthor{\bsnm{Fleming},~\bfnm{W.~H.}\binits{W.~H.}}
(\byear{1982}).
\btitle{Generalized solutions in optimal stochastic control}.
In \bbooktitle{Proc. U.R.I. Conference on Control, University of Rhode Island}.
\end{binproceedings}
\endbibitem

\bibitem{florescu2012young}
\begin{bbook}[author]
\bauthor{\bsnm{Florescu},~\bfnm{L.~C.}\binits{L.~C.}} \AND
  \bauthor{\bsnm{Godet-Thobie},~\bfnm{C.}\binits{C.}}
(\byear{2012}).
\btitle{Young measures and compactness in measure spaces}.
\bpublisher{Walter de Gruyter}.
\end{bbook}
\endbibitem

\bibitem{Fol99}
\begin{bbook}[author]
\bauthor{\bsnm{Folland},~\bfnm{G.~B.}\binits{G.~B.}}
(\byear{1999}).
\btitle{Real Analysis: Modern Techniques and Their Applications}.
\bpublisher{John Wiley and Sons}.
\end{bbook}
\endbibitem

\bibitem{ghojogh2021reproducing}
\begin{barticle}[author]
\bauthor{\bsnm{Ghojogh},~\bfnm{B.}\binits{B.}},
  \bauthor{\bsnm{Ghodsi},~\bfnm{A.}\binits{A.}},
  \bauthor{\bsnm{Karray},~\bfnm{F.}\binits{F.}} \AND
  \bauthor{\bsnm{Crowley},~\bfnm{M.}\binits{M.}}
(\byear{2021}).
\btitle{Reproducing Kernel Hilbert Space, Mercer's Theorem, Eigenfunctions,
  Nystr$\backslash$" om Method, and Use of Kernels in Machine Learning:
  Tutorial and Survey}.
\bjournal{arXiv preprint arXiv:2106.08443}.
\end{barticle}
\endbibitem

\bibitem{gihman2012controlled}
\begin{bbook}[author]
\bauthor{\bsnm{Gihman},~\bfnm{I.~I.}\binits{I.~I.}} \AND
  \bauthor{\bsnm{Skorohod},~\bfnm{A.~V.}\binits{A.~V.}}
(\byear{2012}).
\btitle{Controlled Stochastic Processes}.
\bpublisher{Springer Science \& Business Media}.
\end{bbook}
\endbibitem

\bibitem{GyorfiKohler07}
\begin{barticle}[author]
\bauthor{\bsnm{Gy\"orfi},~\bfnm{L.}\binits{L.}} \AND
  \bauthor{\bsnm{Kohler},~\bfnm{M.}\binits{M.}}
(\byear{2007}).
\btitle{Nonparametric Estimation of Conditional Distributions}.
\bjournal{IEEE Transactions on Information Theory}
\bvolume{53}
\bpages{1872-1879}.
\end{barticle}
\endbibitem

\bibitem{hellwig1996sequential}
\begin{barticle}[author]
\bauthor{\bsnm{Hellwig},~\bfnm{M.~F.}\binits{M.~F.}}
(\byear{1996}).
\btitle{Sequential decisions under uncertainty and the maximum theorem}.
\bjournal{Journal of Mathematical Economics}
\bvolume{25}
\bpages{443--464}.
\end{barticle}
\endbibitem

\bibitem{HeLa96}
\begin{bbook}[author]
\bauthor{\bsnm{Hern\'andez-Lerma},~\bfnm{O.}\binits{O.}} \AND
  \bauthor{\bsnm{Lasserre},~\bfnm{J.~B.}\binits{J.~B.}}
(\byear{1996}).
\btitle{Discrete-Time {M}arkov Control Processes: Basic Optimality Criteria}.
\bpublisher{Springer}.
\end{bbook}
\endbibitem

\bibitem{HeLa03}
\begin{bbook}[author]
\bauthor{\bsnm{Hern\'andez-Lerma},~\bfnm{O.}\binits{O.}} \AND
  \bauthor{\bsnm{Lasserre},~\bfnm{J.~B.}\binits{J.~B.}}
(\byear{2003}).
\btitle{{M}arkov Chains and Invariant Probabilities}.
\bpublisher{Birkhauser}.
\end{bbook}
\endbibitem

\bibitem{kara2022robustness}
\begin{barticle}[author]
\bauthor{\bsnm{Kara},~\bfnm{A.~D.}\binits{A.~D.}},
  \bauthor{\bsnm{Raginsky},~\bfnm{M.}\binits{M.}} \AND
  \bauthor{\bsnm{Y{\"u}ksel},~\bfnm{S.}\binits{S.}}
(\byear{2022}).
\btitle{Robustness to incorrect models and data-driven learning in average-cost
  optimal stochastic control}.
\bjournal{Automatica}
\bvolume{139}
\bpages{110179}.
\end{barticle}
\endbibitem

\bibitem{KaYu20}
\begin{barticle}[author]
\bauthor{\bsnm{Kara},~\bfnm{Ali~D.}\binits{A.~D.}} \AND
  \bauthor{\bsnm{Y\"{u}ksel},~\bfnm{Serdar}\binits{S.}}
(\byear{2020}).
\btitle{Robustness to Incorrect System Models in Stochastic Control}.
\bjournal{SIAM Journal on Control and Optimization}
\bvolume{58}
\bpages{1144-1182}.
\end{barticle}
\endbibitem

\bibitem{KlScSu20}
\begin{barticle}[author]
\bauthor{\bsnm{Klebanov},~\bfnm{Ilja}\binits{I.}},
  \bauthor{\bsnm{Schuster},~\bfnm{Ingmar}\binits{I.}} \AND
  \bauthor{\bsnm{Sullivan},~\bfnm{T.~J.}\binits{T.~J.}}
(\byear{2020}).
\btitle{A Rigorous Theory of Conditional Mean Embeddings}.
\bjournal{SIAM Journal on Mathematics of Data Science}
\bvolume{2}
\bpages{583-606}.
\end{barticle}
\endbibitem

\bibitem{KlSpSu21}
\begin{barticle}[author]
\bauthor{\bsnm{Klebanov},~\bfnm{Ilja}\binits{I.}},
  \bauthor{\bsnm{Sprungk},~\bfnm{Bjorn}\binits{B.}} \AND
  \bauthor{\bsnm{Sullivan},~\bfnm{T.~J.}\binits{T.~J.}}
(\byear{2021}).
\btitle{{The linear conditional expectation in Hilbert space}}.
\bjournal{Bernoulli}
\bvolume{27}
\bpages{2267 -- 2299}.
\end{barticle}
\endbibitem

\bibitem{kushner2014partial}
\begin{barticle}[author]
\bauthor{\bsnm{Kushner},~\bfnm{H.~J.}\binits{H.~J.}}
(\byear{2014}).
\btitle{A partial history of the early development of continuous-time nonlinear
  stochastic systems theory}.
\bjournal{Automatica}
\bvolume{50}
\bpages{303--334}.
\end{barticle}
\endbibitem

\bibitem{kushner2001numerical}
\begin{bbook}[author]
\bauthor{\bsnm{Kushner},~\bfnm{H.~J.}\binits{H.~J.}} \AND
  \bauthor{\bsnm{Dupuis},~\bfnm{P.~G.}\binits{P.~G.}}
(\byear{2001}).
\btitle{Numerical Methods for Stochastic Control Problems in Continuous Time}
\bvolume{24}.
\bpublisher{Springer Science \& Business Media}.
\end{bbook}
\endbibitem

\bibitem{ZhMeMoGr22}
\begin{barticle}[author]
\bauthor{\bsnm{Li},~\bfnm{Zhu}\binits{Z.}},
  \bauthor{\bsnm{Meunier},~\bfnm{Dimitri}\binits{D.}},
  \bauthor{\bsnm{Mollenhauer},~\bfnm{Mattes}\binits{M.}} \AND
  \bauthor{\bsnm{Gretton},~\bfnm{Arthur}\binits{A.}}
(\byear{2022}).
\btitle{Optimal rates for regularized conditional mean embedding learning}.
\bjournal{Advances in Neural Information Processing Systems}
\bvolume{35}
\bpages{4433--4445}.
\end{barticle}
\endbibitem

\bibitem{mascolo1989relaxation}
\begin{barticle}[author]
\bauthor{\bsnm{Mascolo},~\bfnm{E.}\binits{E.}} \AND
  \bauthor{\bsnm{Migliaccio},~\bfnm{L.}\binits{L.}}
(\byear{1989}).
\btitle{Relaxation methods in control theory}.
\bjournal{Applied Mathematics and Optimization}
\bvolume{20}
\bpages{97--103}.
\end{barticle}
\endbibitem

\bibitem{mcshane1967relaxed}
\begin{barticle}[author]
\bauthor{\bsnm{McShane},~\bfnm{E.~J.}\binits{E.~J.}}
(\byear{1967}).
\btitle{Relaxed controls and variational problems}.
\bjournal{SIAM Journal on Control}
\bvolume{5}
\bpages{438--485}.
\end{barticle}
\endbibitem

\bibitem{mertens2015repeated}
\begin{bbook}[author]
\bauthor{\bsnm{Mertens},~\bfnm{J.~F.}\binits{J.~F.}},
  \bauthor{\bsnm{Sorin},~\bfnm{S.}\binits{S.}} \AND
  \bauthor{\bsnm{Zamir},~\bfnm{S.}\binits{S.}}
(\byear{2015}).
\btitle{Repeated games}
\bvolume{55}.
\bpublisher{Cambridge University Press}.
\end{bbook}
\endbibitem

\bibitem{milgrom1985distributional}
\begin{barticle}[author]
\bauthor{\bsnm{Milgrom},~\bfnm{P.~R.}\binits{P.~R.}} \AND
  \bauthor{\bsnm{Weber},~\bfnm{R.~J.}\binits{R.~J.}}
(\byear{1985}).
\btitle{Distributional strategies for games with incomplete information}.
\bjournal{Mathematics of operations research}
\bvolume{10}
\bpages{619--632}.
\end{barticle}
\endbibitem

\bibitem{MaPe23}
\begin{bmisc}[author]
\bauthor{\bsnm{Mollenhauer},~\bfnm{Mattes}\binits{M.}} \AND
  \bauthor{\bsnm{Koltai},~\bfnm{Peter}\binits{P.}}
(\byear{2023}).
\btitle{Nonparametric approximation of conditional expectation operators}.
\end{bmisc}
\endbibitem

\bibitem{MoMuSu24}
\begin{bmisc}[author]
\bauthor{\bsnm{Mollenhauer},~\bfnm{Mattes}\binits{M.}},
  \bauthor{\bsnm{Mucke},~\bfnm{Nicole}\binits{N.}} \AND
  \bauthor{\bsnm{Sullivan},~\bfnm{T.~J.}\binits{T.~J.}}
(\byear{2024}).
\btitle{Learning linear operators: Infinite-dimensional regression as a
  well-behaved non-compact inverse problem}.
\end{bmisc}
\endbibitem

\bibitem{OmarModelLearningAverage}
\begin{barticle}[author]
\bauthor{\bsnm{Mrani-Zentar},~\bfnm{O.}\binits{O.}} \AND
  \bauthor{\bsnm{Y\"uksel},~\bfnm{S.}\binits{S.}}
(\byear{2025}).
\btitle{Data-Driven Non-Parametric Model Learning and Adaptive Control of MDPs
  with Borel spaces: Identifiability and Near Optimal Design}.
\bjournal{arXiv}.
\end{barticle}
\endbibitem

\bibitem{MuFuSrSc17}
\begin{barticle}[author]
\bauthor{\bsnm{Muandet},~\bfnm{Krikamol}\binits{K.}},
  \bauthor{\bsnm{Fukumizu},~\bfnm{Kenji}\binits{K.}},
  \bauthor{\bsnm{Sriperumbudur},~\bfnm{Bharath}\binits{B.}} \AND
  \bauthor{\bsnm{Scholkopf},~\bfnm{Bernhard}\binits{B.}}
(\byear{2017}).
\btitle{Kernel Mean Embedding of Distributions: A Review and Beyond}.
\bjournal{Foundations and Trends in Machine Learning}
\bvolume{10}
\bpages{1-141}.
\end{barticle}
\endbibitem

\bibitem{pammer2024note}
\begin{barticle}[author]
\bauthor{\bsnm{Pammer},~\bfnm{G.}\binits{G.}}
(\byear{2024}).
\btitle{A note on the adapted weak topology in discrete time}.
\bjournal{Electronic Communications in Probability}
\bvolume{29}
\bpages{1--13}.
\end{barticle}
\endbibitem

\bibitem{PaWi18}
\begin{bbook}[author]
\bauthor{\bsnm{Papageorgiou},~\bfnm{Nikolaos~S.}\binits{N.~S.}} \AND
  \bauthor{\bsnm{Winkert},~\bfnm{Patrick}\binits{P.}}
(\byear{2018}).
\btitle{Applied Nonlinear Functional Analysis: An Introduction}.
\bpublisher{De Gruyter}.
\end{bbook}
\endbibitem

\bibitem{PaMu20}
\begin{binproceedings}[author]
\bauthor{\bsnm{Park},~\bfnm{Junhyung}\binits{J.}} \AND
  \bauthor{\bsnm{Muandet},~\bfnm{Krikamol}\binits{K.}}
(\byear{2020}).
\btitle{A Measure-Theoretic Approach to Kernel Conditional Mean Embeddings}.
In \bbooktitle{Advances in Neural Information Processing Systems}
(\beditor{\bfnm{H.}\binits{H.}~\bsnm{Larochelle}},
  \beditor{\bfnm{M.}\binits{M.}~\bsnm{Ranzato}},
  \beditor{\bfnm{R.}\binits{R.}~\bsnm{Hadsell}},
  \beditor{\bfnm{M.~F.}\binits{M.~F.}~\bsnm{Balcan}} \AND
  \beditor{\bfnm{H.}\binits{H.}~\bsnm{Lin}}, eds.)
\bvolume{33}
\bpages{21247-21259}.
\end{binproceedings}
\endbibitem

\bibitem{PaRa16}
\begin{bbook}[author]
\bauthor{\bsnm{Paulsen},~\bfnm{Vern~I.}\binits{V.~I.}} \AND
  \bauthor{\bsnm{Raghupathi},~\bfnm{Mrinal}\binits{M.}}
(\byear{2016}).
\btitle{An Introduction to the Theory of Reproducing Kernel Hilbert Spaces}.
\bseries{Cambridge Studies in Advanced Mathematics}.
\bpublisher{Cambridge University Press}.
\end{bbook}
\endbibitem

\bibitem{pradhan2022near}
\begin{barticle}[author]
\bauthor{\bsnm{Pradhan},~\bfnm{S.}\binits{S.}} \AND
  \bauthor{\bsnm{Y{\"u}ksel},~\bfnm{S.}\binits{S.}}
(\byear{2024}).
\btitle{Continuity of cost in {B}orkar control topology and implications on
  discrete space and time approximations for controlled diffusions under
  several criteria}.
\bjournal{Electronic Journal of Probability}
\bvolume{29}
\bpages{1--32}.
\end{barticle}
\endbibitem

\bibitem{pradhanyuksel2024near}
\begin{barticle}[author]
\bauthor{\bsnm{Pradhan},~\bfnm{Somnath}\binits{S.}} \AND
  \bauthor{\bsnm{Y{\"u}ksel},~\bfnm{Serdar}\binits{S.}}
(\byear{2024}).
\btitle{Near optimality of Lipschitz and smooth policies in controlled
  diffusions}.
\bjournal{Systems \& Control Letters}
\bvolume{193}
\bpages{105943}.
\end{barticle}
\endbibitem

\bibitem{pradhanyuksel2023DTApprx}
\begin{barticle}[author]
\bauthor{\bsnm{Pradhan},~\bfnm{Somnath}\binits{S.}} \AND
  \bauthor{\bsnm{Y{\"u}ksel},~\bfnm{Serdar}\binits{S.}}
(\byear{2025}).
\btitle{Controlled Diffusions under Full, Partial, and Decentralized
  Information: Existence of Optimal Policies and Discrete-Time Approximations}.
\bjournal{SIAM Journal on Control and Optimization}
\bvolume{63}
\bpages{3674--3702}.
\end{barticle}
\endbibitem

\bibitem{Roy06}
\begin{barticle}[author]
\bauthor{\bsnm{Roy},~\bfnm{B.~V.}\binits{B.~V.}}
(\byear{2006}).
\btitle{Performance Loss Bounds for Approximate Value Iteration with State
  Aggregation}.
\bjournal{Math. Oper. Res.}
\bvolume{31}
\bpages{234--244}.
\end{barticle}
\endbibitem

\bibitem{Sal20}
\begin{barticle}[author]
\bauthor{\bsnm{Saldi},~\bfnm{N.}\binits{N.}}
(\byear{2020}).
\btitle{A Topology for Team Policies and Existence of Optimal Team Policies in
  Stochastic Team Theory}.
\bjournal{IEEE Transactions on Automatic Control}
\bvolume{65}
\bpages{310-317}.
\end{barticle}
\endbibitem

\bibitem{SaLiYuSpringer}
\begin{bbook}[author]
\bauthor{\bsnm{Saldi},~\bfnm{N.}\binits{N.}},
  \bauthor{\bsnm{Linder},~\bfnm{T.}\binits{T.}} \AND
  \bauthor{\bsnm{Y\"uksel},~\bfnm{S.}\binits{S.}}
(\byear{2018}).
\btitle{Finite Approximations in Discrete-Time Stochastic Control: Quantized
  Models and Asymptotic Optimality}.
\bpublisher{Springer}, \baddress{Cham}.
\end{bbook}
\endbibitem

\bibitem{SaYu22}
\begin{barticle}[author]
\bauthor{\bsnm{Saldi},~\bfnm{N.}\binits{N.}} \AND
  \bauthor{\bsnm{Yuksel},~\bfnm{S.}\binits{S.}}
(\byear{2022}).
\btitle{Geometry of information structures, strategic measures and associated
  stochastic control topologies}.
\bjournal{Probability Surveys}
\bvolume{19}
\bpages{450 -- 532}.
\end{barticle}
\endbibitem

\bibitem{SaYuLi16JMAA}
\begin{barticle}[author]
\bauthor{\bsnm{Saldi},~\bfnm{Naci}\binits{N.}},
  \bauthor{\bsnm{Y\"{u}ksel},~\bfnm{Serdar}\binits{S.}} \AND
  \bauthor{\bsnm{Linder},~\bfnm{Tam\'{a}s}\binits{T.}}
(\byear{2016}).
\btitle{Near optimality of quantized policies in stochastic control under weak
  continuity conditions}.
\bjournal{Journal of Mathematical Analysis and Applications}
\bvolume{435}
\bpages{321-337}.
\end{barticle}
\endbibitem

\bibitem{SiBaScMa24}
\begin{barticle}[author]
\bauthor{\bsnm{Simon-Gabriel},~\bfnm{Carl-Johann}\binits{C.-J.}},
  \bauthor{\bsnm{Barp},~\bfnm{Alessandro}\binits{A.}},
  \bauthor{\bsnm{Sch\"{o}lkopf},~\bfnm{Bernhard}\binits{B.}} \AND
  \bauthor{\bsnm{Mackey},~\bfnm{Lester}\binits{L.}}
(\byear{2023}).
\btitle{Metrizing weak convergence with maximum mean discrepancies}.
\bjournal{Journal of Machine Learning Research}
\bvolume{24}.
\end{barticle}
\endbibitem

\bibitem{SoBoSiGoSm10}
\begin{binproceedings}[author]
\bauthor{\bsnm{Song},~\bfnm{Le}\binits{L.}},
  \bauthor{\bsnm{Boots},~\bfnm{Byron}\binits{B.}},
  \bauthor{\bsnm{Siddiqi},~\bfnm{Sajid~M.}\binits{S.~M.}},
  \bauthor{\bsnm{Gordon},~\bfnm{Geoffrey~J.}\binits{G.~J.}} \AND
  \bauthor{\bsnm{Smola},~\bfnm{Alex}\binits{A.}}
(\byear{2010}).
\btitle{Hilbert Space Embeddings of Hidden Markov Models}.
In \bbooktitle{International Conference on Machine Learning}.
\end{binproceedings}
\endbibitem

\bibitem{SoFuGr13}
\begin{barticle}[author]
\bauthor{\bsnm{Song},~\bfnm{Le}\binits{L.}},
  \bauthor{\bsnm{Fukumizu},~\bfnm{Kenji}\binits{K.}} \AND
  \bauthor{\bsnm{Gretton},~\bfnm{Arthur}\binits{A.}}
(\byear{2013}).
\btitle{Kernel Embeddings of Conditional Distributions: A Unified Kernel
  Framework for Nonparametric Inference in Graphical Models}.
\bjournal{IEEE Signal Processing Magazine}
\bvolume{30}
\bpages{98-111}.
\end{barticle}
\endbibitem

\bibitem{SoHuSmFu09}
\begin{binproceedings}[author]
\bauthor{\bsnm{Song},~\bfnm{Le}\binits{L.}},
  \bauthor{\bsnm{Huang},~\bfnm{Jonathan}\binits{J.}},
  \bauthor{\bsnm{Smola},~\bfnm{Alex}\binits{A.}} \AND
  \bauthor{\bsnm{Fukumizu},~\bfnm{Kenji}\binits{K.}}
(\byear{2009}).
\btitle{Hilbert space embeddings of conditional distributions with applications
  to dynamical systems}.
In \bbooktitle{Proceedings of the 26th Annual International Conference on
  Machine Learning}.
\bseries{ICML '09}
\bpages{961?968}.
\bpublisher{Association for Computing Machinery}, \baddress{New York, NY, USA}.
\end{binproceedings}
\endbibitem

\bibitem{BhKeGe11}
\begin{barticle}[author]
\bauthor{\bsnm{Sriperumbudur},~\bfnm{Bharath~K.}\binits{B.~K.}},
  \bauthor{\bsnm{Fukumizu},~\bfnm{Kenji}\binits{K.}} \AND
  \bauthor{\bsnm{Lanckriet},~\bfnm{Gert R.~G.}\binits{G.~R.~G.}}
(\byear{2011}).
\btitle{Universality, Characteristic Kernels and RKHS Embedding of Measures}.
\bjournal{Journal of Machine Learning Research}
\bvolume{12}
\bpages{2389--2410}.
\end{barticle}
\endbibitem

\bibitem{SrBhGrFuScLa10}
\begin{barticle}[author]
\bauthor{\bsnm{Sriperumbudur},~\bfnm{Bharath~K.}\binits{B.~K.}},
  \bauthor{\bsnm{Gretton},~\bfnm{Arthur}\binits{A.}},
  \bauthor{\bsnm{Fukumizu},~\bfnm{Kenji}\binits{K.}},
  \bauthor{\bsnm{Sch\"{o}lkopf},~\bfnm{Bernhard}\binits{B.}} \AND
  \bauthor{\bsnm{Lanckriet},~\bfnm{Gert R.~G.}\binits{G.~R.~G.}}
(\byear{2010}).
\btitle{Hilbert Space Embeddings and Metrics on Probability Measures}.
\bjournal{Journal of Machine Learning Research}
\bvolume{11}
\bpages{1517-1561}.
\end{barticle}
\endbibitem

\bibitem{TaBa24}
\begin{barticle}[author]
\bauthor{\bsnm{Tam{{\'a}}s},~\bfnm{Ambrus}\binits{A.}} \AND
  \bauthor{\bsnm{Cs{{\'a}}ji},~\bfnm{Bal{{\'a}}zs~Csan{{\'a}}d}\binits{B.~C.}}
(\byear{2024}).
\btitle{Recursive Estimation of Conditional Kernel Mean Embeddings}.
\bjournal{Journal of Machine Learning Research}
\bvolume{25}
\bpages{1--35}.
\end{barticle}
\endbibitem

\bibitem{Val94}
\begin{barticle}[author]
\bauthor{\bsnm{Valdier},~\bfnm{Michel}\binits{M.}}
(\byear{1994}).
\btitle{A course on Young measures}.
\end{barticle}
\endbibitem

\bibitem{warga2014optimal}
\begin{bbook}[author]
\bauthor{\bsnm{Warga},~\bfnm{J.}\binits{J.}}
(\byear{2014}).
\btitle{Optimal Control of Differential and Functional Equations}.
\bpublisher{Academic press}.
\end{bbook}
\endbibitem

\bibitem{WhZy77}
\begin{bbook}[author]
\bauthor{\bsnm{Wheeden},~\bfnm{R.~L.}\binits{R.~L.}} \AND
  \bauthor{\bsnm{Zygmund},~\bfnm{A.}\binits{A.}}
(\byear{1977}).
\btitle{Measure and Integral}.
\bpublisher{Marcel Dekker}, \baddress{New York}.
\end{bbook}
\endbibitem

\bibitem{young1937generalized}
\begin{barticle}[author]
\bauthor{\bsnm{Young},~\bfnm{L.~C.}\binits{L.~C.}}
(\byear{1937}).
\btitle{Generalized curves and the existence of an attained absolute minimum in
  the calculus of variations}.
\bjournal{Comptes Rendus de la Societe des Sci. et des Lettres de Varsovie}
\bvolume{30}
\bpages{212--234}.
\end{barticle}
\endbibitem

\bibitem{YukselWitsenStandardArXiv}
\begin{barticle}[author]
\bauthor{\bsnm{Y\"uksel},~\bfnm{S.}\binits{S.}}
(\byear{2020}).
\btitle{A Universal Dynamic Program and Refined Existence Results for
  Decentralized Stochastic Control}.
\bjournal{SIAM Journal on Control and Optimization}
\bvolume{58}
\bpages{2711--2739}.
\end{barticle}
\endbibitem

\bibitem{yuksel2023borkar}
\begin{barticle}[author]
\bauthor{\bsnm{Y{\"u}ksel},~\bfnm{S.}\binits{S.}}
(\byear{2024}).
\btitle{On {B}orkar and {Y}oung Relaxed Control Topologies and Continuous
  Dependence of Invariant Measures on Control Policy}.
\bjournal{SIAM Journal on Control and Optimization}
\bvolume{62}
\bpages{2367--2386}.
\end{barticle}
\endbibitem

\bibitem{YukselOptimizationofChannels}
\begin{barticle}[author]
\bauthor{\bsnm{Y\"uksel},~\bfnm{S.}\binits{S.}} \AND
  \bauthor{\bsnm{Linder},~\bfnm{T.}\binits{T.}}
(\byear{2012}).
\btitle{Optimization and convergence of observation channels in stochastic
  control}.
\bjournal{SIAM J. on Control and Optimization}
\bvolume{50}
\bpages{864-887}.
\end{barticle}
\endbibitem

\bibitem{zhou2024robustness}
\begin{barticle}[author]
\bauthor{\bsnm{Zhou},~\bfnm{Yichen}\binits{Y.}},
  \bauthor{\bsnm{Song},~\bfnm{Yanglei}\binits{Y.}} \AND
  \bauthor{\bsnm{Y{\"u}ksel},~\bfnm{Serdar}\binits{S.}}
(\byear{2024}).
\btitle{Robustness to Model Approximation, Learning, and Sample Complexity in
  Wasserstein Regular MDPs}.
\bjournal{arXiv preprint arXiv:2410.14116}.
\end{barticle}
\endbibitem

\end{thebibliography}

\end{document}